 \let\backslash=\setminus \let\ge=\geqslant \let\le=\leqslant 
\def\a{\alpha} \def\b{\beta} \def\d{\delta} \def\e{\varepsilon}  \def\g{\gamma} \def\l{\lambda} \def\s{\sigma} \def\t{\tau} \def\z{\zeta}
\def\D{\Delta} \def\G{\Gamma} \def\L{\Lambda} \def\S{\Sigma}
  \def\F{\mathcal F}  \def\H{\mathcal H} \def\N{\mathcal N} \def\K{\mathcal K}
 \def\T{\mathcal T}  
\def\Re{\mathbb{R}}
\def\rho{\varrho}
\def\L{\Lambda}
\def\calG{\mathcal{G}}
\def\E{\mathcal E}
\def\I{\mathcal I}
\def\BR{\text{BR}}
\def\int{\text{int}}
\begin{document} \openup 1\jot

\title[On Sustainable Equilibria]{On Sustainable Equilibria\\}\thanks{A two-page abstract of a previous version was published in the Proceedings of the 21st ACM conference on Economics and Computation (EC 2020), July 13-17, 2020, Virtual Event, Hungary. ACM, New York, NY, USA, 2 pages. https://doi.org/10.1145/3391403.3399514}
\author[S. Govindan]{Srihari Govindan}
 \address{Department of Economics, University of Rochester, NY 14627, USA.}
 \email{s.govindan@rochester.edu}
 \author[R. Laraki]{Rida Laraki}
 \address{CNRS, Lamsade, University of Paris Dauphine-PSL, 75016 Paris, France, and Department 
 \indent of Computer Science, University of Liverpool, Liverpool  L69 3BX, UK}
 \email{rida.laraki@dauphine.fr}
 \author[L. Pahl]{Lucas Pahl}\thanks{We would like to thank Paulo Barelli, Josef Hofbauer, Andy McLennan, Klaus Ritzberger, Francesc Dilm\'e and Sylvain Sorin for their extensive and useful comments, as well as the participants of Bonn, Glasgow, Holloway and Rochester Economic Theory Seminars, EC'2020 and SMAI MODE 2020 conferences, and the One World Mathematical Game Theory Seminar.}
 \address{Institute for Microeconomics, University of Bonn, Adenauerallee 24-42, 53113 Bonn, Germany}
 \email{pahl.lucas@gmail.com}
 \date{First version: January, 2020; This version: 5 August 2021.}

\begin{abstract}
Following the ideas laid out in Myerson \cite{M1996}, Hofbauer \cite{H2000} defined a Nash equilibrium of a finite game as sustainable if it can be made the unique Nash equilibrium of a game obtained by deleting/adding a subset of the strategies that are inferior replies to it.  This paper proves two results about sustainable equilibria.  The first concerns the Hofbauer-Myerson conjecture about the relationship between the sustainability of an equilibrium and its index: for a generic class of games, an equilibrium is sustainable iff its index is $+1$. Von Schemde and von Stengel \cite{vSvS2008} proved this conjecture for bimatrix games;  we show that the conjecture is true for all finite games. More precisely, we prove that an isolated equilibrium has index $+1$ if and only if it can be made unique in a larger game obtained by adding finitely many strategies that are inferior replies to that equilibrium. Our second result gives an axiomatic extension of sustainability to all games and shows that only the Nash components with positive index can be sustainable.
\end{abstract}

\maketitle

\section{Introduction}

Myerson \cite{M1996} proposes a refinement of Nash equilibria of finite games, which he calls {\it sustainable equilibria}, based on the hypothesis that most games, even if they are one-shot affairs, should be analyzed not as if they are played in isolation, but rather as particular instances of many plays of such games.  Myerson argues that when, say,  two members of a society play a Battle-of-Sexes game, if the game has a history in this society, it becomes a ``culturally familiar'' game for these two players, and the past history of plays, by other members of the society, should inform play in this encounter.  An equilibrium is then a cultural norm, an institution, in this society, and the game is typically played according to this norm.  Any Nash equilibrium of the underlying game that can emerge as a norm in some society is sustainable. From this perspective, Myerson reasons,  the two pure-strategy equilibria in the Battle-of-Sexes game are sustainable while the mixed equilibrium is not. 

In his search for a formal definition of a sustainable equilibrium, Myerson considers, and then dismisses, on axiomatic grounds, existing refinements that yield the same prediction in the Battle-of-Sexes game as his heuristic argument does:  for example, persistent equilibria \cite{KS1984} fail invariance \cite{B1992};  and evolutionary stability fails existence. Myerson concludes his paper with a conjecture that the index of an equilibrium is a determinant of its sustainability.\footnote{Interestingly, Myerson speculates that one could perhaps develop a theory of index of equilibria based on fixed-point theory, seemingly unaware, as Hofbauer \cite{H2000} observes, of an extant theory in the literature (see G\"{u}l, Pearce and Stacchetti \cite{GPS1993} and Ritzberger \cite{R1994}).}

Hofbauer \cite{H2000} distills the ideas in Myerson's paper to provide a definition of sustainable equilibria in regular games\footnote{Call an equilibrium  {\it regular} if locally the equilibrium is a differentiable function of the payoffs (see section 2.1).}.  Hofbauer posits that a minimum requirement of sustainability should be that if an equilibrium of a game is sustainable, it should remain sustainable in the game obtained by restricting players' strategies to the set of best replies to the equilibrium.\footnote{For regular games, this is equivalent to restricting players' strategies to the support of the equilibrium---see Section 4 for a discussion of this point.} If one also accepts that an equilibrium that is unique is sustainable, then one is lead to the following definition. Say that a game-equilibrium pair is {\it equivalent} to another such pair if the restrictions of the two games to the set of best replies to their respective equilibria are the same game (modulo a relabelling of the players and their strategies) and the two equilibria coincide (under the same identification). An equilibrium of a game is {\it sustainable} if it has an equivalent pair where the equilibrium is unique.

In this paper, we prove two results about sustainable equilibria.  The first result is about sustainable equilibria of (generic) games where all equilibria are regular. Following Myerson, Hofbauer conjectured that a regular equilibrium is sustainable iff its index is $+1$.  Von Stengel and von Schemde \cite{vSvS2008} proved this conjecture for bimatrix games.  In this paper, we show that the conjecture holds for all $N$-person games.   

While the definition of sustainable equilibria does not assume that the game is regular, the existence of such equilibria is not guaranteed in nongeneric games.  Our second result provides a resolution to this problem via an axiomatic approach.  We enumerate a set of axioms that for regular games selects sustainable equilibria and for the other games selects components with positive index.

Our proof of the the Hofbauer-Myerson conjecture holds in all games with isolated equilibria. We prove that an isolated equilibrium $\sigma$ of a finite game $G$ has index $+1$ if and only if one can add finitely many new strategies together with their payoffs, all inferior replies to $\sigma$, and obtain a larger game $\hat G$ in which $\sigma$ is the unique equilibrium. To illustrate the statement, consider the battle of the sexes game $G$ below.  It has three Nash equilibria: two are strict $(t,l)$ and $(b, r)$ (with index $+1$), and one is mixed (with index $-1$).
$$G=
\begin{array}{ccc}
\multicolumn{1}{c}{} &\multicolumn{1}{c}{l} &\multicolumn{1}{c}{r} \\
\cline{2-3}
\multicolumn{1}{c}{t} &\multicolumn{1}{|c}{(3,2)} &
\multicolumn{1}{|c|}{(0,0)} \\
\cline{2-3}
\multicolumn{1}{c}{b} &\multicolumn{1}{|c}{(0,0)} &
\multicolumn{1}{|c|}{(2,3)}\\
\cline{2-3}
\end{array}
$$

By adding one strategy to each player ($x$ for player 1, and $y$ for player 2, see the game $\hat G$ below), $b$ and $r$ are now strictly dominated. Removing them yields a game where $x$ and $y$ are strictly dominated, making the strict equilibrium $(t,l)$ of $G$ the unique equilibrium of $\hat G$.
$$
\hat G=
\begin{array}{cccc}
\multicolumn{1}{c}{} &\multicolumn{1}{c}{l} &\multicolumn{1}{c}{r} &\multicolumn{1}{c}{y} \\
\cline{2-4}
\multicolumn{1}{c}{t} &\multicolumn{1}{|c}{(3,2)} &
\multicolumn{1}{|c|}{(0,0)} &
\multicolumn{1}{|c|}{(0,1)} \\
\cline{2-4}
\multicolumn{1}{c}{b} &\multicolumn{1}{|c}{(0,0)} &
\multicolumn{1}{|c|}{(2,3)}&
\multicolumn{1}{|c|}{(-2,4)}\\
\cline{2-4}
\multicolumn{1}{c}{x} &\multicolumn{1}{|c}{(1,0)} &
\multicolumn{1}{|c|}{(4,-2)}&
\multicolumn{1}{|c|}{(-1,-1)}\\
\cline{2-4}
\end{array}
$$

This construction easily extends to all finite games: any strict equilibrium (necessarily pure, regular and with index $+1$, see Ritzberger \cite{R1994}) can be made the unique equilibrium in a larger game obtained by adding finitely many strategies that are inferior replies to the equilibrium.\footnote{To make a strict equilibrium $s=(s_1,...,s_N)$ of a game $G$ unique in a larger game $\hat G$, it is enough to add one strategy $x_n$ per player $n$ as in the Battle-of-Sexes, where for each player $n$, $x_n$ strictly dominates all its pure strategies $t_n \neq s_n$; once the strategies $t_n \neq s_n$ are eliminated, $x_n$ becomes strictly dominated by $s_n$.} Our paper not only extends this property to isolated mixed equilibria with index $+1$ but shows that they are the unique equilibria having that property, that is, if a Nash equilibrium (isolated or not) can be made unique by adding finitely many inferior replies to it, that equilibrium must be isolated and must have index +1.

What are the key properties of the index of equilibria that drive this equivalence? To answer this question, let us see a sketch of our proof. In one direction, suppose that an equilibrium $\s$ of a game $G$ is sustainable, and that $(G, \s)$ is equivalent to a pair $(\bar G, \s)$ where $\s$ is the unique equilibrium of $\bar G$. Let $G^*$ be the game obtained from $G$ by deleting strategies that are inferior replies to $\s$. It follows from a property of the index that the index of $\s$ in $G$ can be computed as the index of $\s$ in $G^*$. The game $G^*$ is also the game obtained from $\bar G$ by deleting inferior replies there.  Therefore, the index of $\s$ in $G^*$ can also be computed as the index of $\s$ in $\bar G$.  As $\s$ is the unique equilibrium of $\bar G$, its index is $+1$, which then gives us the result.

Going the other way, if we have a $+1$ index equilibrium $\s$ of a game $G$, the sum of the indices of the other equilibria is zero, as the sum of indices over all equilibria is $+1$. Now, we can take a map whose fixed points are the Nash equilibria of $G$ and alter it outside a neighbourhood of $\s$ so that the new map has no fixed points other than $\s$.\footnote{The possibility of such a construction follows from a deep result in algebraic topology, the Hopf Extension Theorem (Corollary 8.1.18, \cite{S1966}).} By a careful addition of strategies and specification of payoffs for these strategies, we obtain a game $\bar G$ where any equilibrium must translate to a fixed point of the modified map of $G$, making $\s$ the unique equilibrium in $\bar G$.

A word about our methodology is in order.  In a bimatrix game, a player's payoff function is linear in his opponent's strategy and the index can be computed easily using the Shapley formula \cite{S1974}. Von Schemde and von Stengel \cite{vSvS2008} were able to exploit those features and use tools from the theory of polytopes (von Schemde  \cite{vS2005}) to prove the conjecture.  In the general case, their technique is inapplicable. What we do, instead, is start with a construction involving a fixed-point map and then convert it into a game-theoretically meaningful one.  In this respect, our approach is similar in spirit to, but different in details from, that in Govindan and Wilson \cite{GW2005}. 

Equilibria with index $+1$ are also distinguished from their counterparts with index $-1$ in terms of their dynamic stability.  It is well-known, both in general equilibrium and in game theory, that equilibria with index $-1$ are asymptotically unstable under any reasonable learning or adjustment process---cf. McLennan \cite{M2016}.\footnote{In fact, this observation leads McLennan to articulate his {\it index $+1$ principle}, which selects $+1$ equilibria.}   Even computational dynamics like those generated by homotopy algorithms (Lemke-Howson, linear-tracing procedure, etc.) converge to a $+1$ index equilibrium (Herings-Peeters \cite{HP2010}).  While these results might be seen as eliminating $-1$ equilibria, they do not come down conclusively in favor of all $+1$ equilibria.  The main reason is that it is still an open question as to whether every regular game has at least one equilibrium (necessarily of index $+1$) that is asymptotically stable with respect to some natural dynamical system. Hofbauer observed that some $+1$ index equilibria are indeed unstable for all natural dynamics, as the following potential game $G_1$ shows.\footnote{Note however that in $G_1$, the three pure equilibria are strict, have index $+1$ and can be shown to be asymptotically stable for all natural dynamics.}
$$
G_1=
\begin{array}{cccc}
\multicolumn{1}{c}{} &\multicolumn{1}{c}{l} &\multicolumn{1}{c}{m} &\multicolumn{1}{c}{r} \\
\cline{2-4}
\multicolumn{1}{c}{t} &\multicolumn{1}{|c}{(10,10)} &
\multicolumn{1}{|c|}{(0,0)} &
\multicolumn{1}{|c|}{(0,0)} \\
\cline{2-4}
\multicolumn{1}{c}{m} &\multicolumn{1}{|c}{(0,0)} &
\multicolumn{1}{|c|}{(10,10)}&
\multicolumn{1}{|c|}{(0,0)}\\
\cline{2-4}
\multicolumn{1}{c}{b} &\multicolumn{1}{|c}{(0,0)} &
\multicolumn{1}{|c|}{(0,0)}&
\multicolumn{1}{|c|}{(10,10)}\\
\cline{2-4}
\end{array}
$$

The profile where players mix uniformly is an isolated equilibrium with index $+1$ and so it can be made unique in a larger game, for example by adding the three strategies $x$, $y$, and $z$ as in $\hat G_1$ below.\footnote{We refer the interested reader to von Schemde  \cite{vS2005}, p. 89-91, to understand how the strategies $x$, $y$ and $z$ are geometrically constructed.} The fact is that all natural dynamics increase the potential of $G_1$; since the completely-mixed equilibrium minimizes that potential, it is unstable for all natural dynamics---cf. Hofbauer \cite{H2000} for more about his second conjecture, which states that any regular game has at least one $+1$ index equilibrium that is asymptotically stable w.r.t. some natural dynamics.\footnote{It follows from Demichelis and Ritzberger \cite{DR2003} that Hofbauer's second conjecture is false if we include all games: a necessary condition for a component of Nash equilibria to be asymptotically stable is that its index agree with its Euler characteristic; yet, there are examples (e.g. \cite{HH2002} Fig. 8 or \cite{R2002} p. 325) where all components of equilibria are convex (and so have Euler characteristic $+1$), but no component has index $+1$.} 

$$
\hat G_1=
\begin{array}{ccccccc}
\multicolumn{1}{c}{} &\multicolumn{1}{c}{l} &\multicolumn{1}{c}{m} &\multicolumn{1}{c}{r} &\multicolumn{1}{c}{x} &\multicolumn{1}{c}{y} &\multicolumn{1}{c}{z} \\
\cline{2-7}
\multicolumn{1}{c}{t} &\multicolumn{1}{|c}{(10,10)} &
\multicolumn{1}{|c|}{(0,0)} &
\multicolumn{1}{|c|}{(0,0)}&
\multicolumn{1}{|c|}{(0,11)}&
\multicolumn{1}{|c|}{(10,5)}&
\multicolumn{1}{|c|}{(0,-10)} \\
\cline{2-7}
\multicolumn{1}{c}{m} &\multicolumn{1}{|c}{(0,0)} &
\multicolumn{1}{|c|}{(10,10)}&
\multicolumn{1}{|c|}{(0,0)}
&
\multicolumn{1}{|c|}{(0,-10)}&
\multicolumn{1}{|c|}{(0,11)}&
\multicolumn{1}{|c|}{(10,5)}\\
\cline{2-7}
\multicolumn{1}{c}{b} &\multicolumn{1}{|c}{(0,0)} &
\multicolumn{1}{|c|}{(0,0)}&
\multicolumn{1}{|c|}{(10,10)}
&
\multicolumn{1}{|c|}{(10,5)}&
\multicolumn{1}{|c|}{(0,-10)}&
\multicolumn{1}{|c|}{(0,11)}\\
\cline{2-7}
\end{array}
$$

In Section 5 we tackle the question of extending sustainability to nongeneric games.  When a game has no isolated equilibrium, no equilibrium can be made unique by adding inferior replies to it.  Moreover, there are games where no subset of a connected equilibrium component can be made unique by adding inferior replies to it, and there are (non regular) games where all equilibria are isolated but none of them has index $+1$ and so none of the finitely many equilibria can be made unique by adding inferior replies to it. Our solution to guarantee existence in every games is to take an axiomatic view.  

First, the Hofbauer-Myerson conjecture provides an axiomatic characterization of sustainability for regular games.  Second, the solution concept that assigns to each game its components with positive index extends sustainability to the universal domain of finite games and it satisfies three other axioms, beyond those invoked for generic games: connectedness, invariance, and robustness. Finally, if we combine the last two axioms to obtain a strengthening of robustness, then we get the result that any extension of sustainability must select from among the components with a positive index.

The rest of the paper is organized as follows.  Section 2 sets up the problem and states our main theorem, which is about the Hofbauer-Myerson conjecture.  It also gives an informal summary of the theory of index of equilibria. Section 3 is devoted to proving the theorem. Section 4 provides a discussion of the role of unused best replies in the definition of sustainable equilibria. Section 5 extends sustainability to connected sets of equilibria and axiomatically characterises positive index Nash components in the same spirit as the uniform hyperstability characterisation of non-zero index components by Govindan and Wilson \cite{GW2005}. We have two appendices.  The first reviews a construct from the theory of triangulations; and the second provides the proof of a key lemma that is invoked in Section 5.

\section{Definitions and statement of the main theorem} A finite game in normal form is a triple $(\N, {(S_n)}_{n \in \N}, G)$ where:  $\N = \{\, 1, \ldots, N \, \}$  is the set of players, with $N \ge 2$; for each $n \in \N$, $S_n$ is a finite set of pure strategies; and, letting $S \equiv \prod_{n \in \N} S_n$ be the set of pure strategy profiles, $G: S \to \Re^{\N}$ is the payoff function. By a slight abuse of notation, we will refer to a game by its payoff function $G$.

Given a game $G$, for each $n$, let $\S_n$ be the set of $n$'s mixed strategies and let $\S \equiv \prod_{n \in \N} \S_n$. Also, for each $n$, $S_{-n} \equiv \prod_{m \neq n} S_n$, and $\S_{-n} \equiv \prod_{m \neq n} \S_n$. 
The payoff function $G$ extends to $\S$ in the usual way and we will denote this extension by $G$ as well.  

Define an equivalence relation on game-equilibrium pairs as follows.  For $i = 1,2$, let 
($(\N^i, {(S_n^i)}_{n \in \N^i}, G^i), \s^i)$ be a game-equilibrium pair, i.e., $\s^i$ is an equilibrium of $G^i$. Say  that $(G^1, \s^1) \sim (G^2, \s^2)$ if, up to a relabelling of players and strategies, the restriction of $G^1$ to the set of best replies to $\s^1$ is the same game as the restriction of $G^2$ to the set of best replies to $\s^2$, and the equilibria coincide under this identification.\footnote{Note that this definition requires that the number of players in an equivalence class be the same.}  It is easily checked that $\sim$ is an equivalence relation.

\begin{definition}\label{def sustainable}
An equilibrium $\s$ of a game $G$ is {\it sustainable} if $(G, \s) \sim (\bar G, \bar \s)$ for a game $\bar G$ where $\bar \s$ is the unique equilibrium. 
\end{definition}

Sustainability is a property of equivalence classes and we could say that the canonical representation of a sustainable equilibrium is the game-equilibrium pair where there are no inferior replies to the equilibrium.

\subsection{Index and degree of equilibria}\label{indexofequilibria}
Both the index and the degree of equilibria are measures of the robustness of equilibria to perturbations. They differ in the space of perturbations they consider (perturbations of  fixed-point maps vs payoffs perturbations) but ultimately agree with one another.\footnote{The reason we are defining and reviewing both concepts, when apparently one would do, is that they are both useful in the exposition.}  We start with the degree of equilibria.  For simplicity we give a definition of degree only for regular equilibria. This approach allows  to bypass the use of algebraic topology, but more importantly it is germane to our problem, as we are concerned only with regular equilibria in this paper. For the general definition, see for e.g., Govindan and Wilson \cite{GW2005}.

Fix both the  player set $\N$ and the strategy space $S$. The space of games with strategy space $S$ is then the Euclidean space $\G \equiv \mathbb{R}^{\N \times S}$ of all payoff functions $G$.  Let $\E$ be the graph of the Nash equilibrium correspondence over $\G$, i.e., $\E = \{\, (G, \s) \in \G \times \S \mid \s \, \text{is a Nash equilibrium of} \, G \, \}$. Let $proj: \E \to \G$ be the natural projection: $proj (G, \s) = G$. By the Kohlberg-Mertens Structure Theorem \cite{KM1986}, there exists a homeomorphism $h: \E \to \G$ such that $h^{-1}$ is differentiable almost everywhere.  Say that an equilibrium $\s$ of a game  $G$ is {\it regular} if $proj \circ h^{-1}$ is differentiable and has a nonsingular Jacobian at  $h(G, \s)$; and say that a game is {\it regular} if each equilibrium $\s$ of $G$ is regular. If an equilibrium $\s$ is regular, then it is a {\it quasi-strict equilibrium}---that is, all unused strategies are inferior replies\footnote{see Ritzberger \cite{R1994} and van Damme \cite{vD1987}.}---and locally, the equilibrium is a smooth, even analytic, function of the game; moreover, it is also a regular equilibrium in the space of games obtained by deleting the unused strategies or, indeed, by adding strategies that are inferior replies to the equilibrium. The set of games that are not regular is a closed subset of lower dimension---actually codimension one---in $\G$ and thus regular games are generic. 

If an equilibrium $\s$ of a game $G$ is regular, then we can assign a {\it degree} to it that is either $+1$ or $-1$  depending on whether the Jacobian of $proj \circ h^{-1}$ at $h(G, \s)$ has a positive or a negative determinant. An inspection of the formula for the Jacobian shows that the degree of a regular equilibrium $\s$ is the same as its degree computed in the space of games obtained by deleting the strategies that are inferior replies to $\s$. Therefore, if $\s$ is a regular equilibrium of $G$ and if $(G, \s) \sim (\bar G, \bar \s)$ then $\bar \s$ is a regular equilibrium of $\bar G$ and it has the same degree as $\s$, making degree an invariant for an equivalence class. 

As the Kohlberg-Mertens homeomorphism extends to the one-point compactification of $\E$ and $\G$, and $proj \circ h^{-1}$ is homotopic to the identity on this extension, the sum of the degrees of equilibria of a regular game is $+1$.\footnote{If $G$ is nongeneric, we can define the degree of a component of equilbria as the sum of the degrees of equilibria in a neighborhood of the component for a regular game that is in a neighborhood of $G$; this computation is independent of the neighborhoods chosen, as long as they are sufficiently small. The sum of the degrees of the components of equilibria of a game is $+1$.}

In fixed-point theory, the index of fixed points contains information about their robustness when the map is perturbed. (See McLennan \cite{M2018} for an account of index theory  written primarily for economists.) Since Nash equilibria are obtainable as fixed points, index theory applies directly to them. For simplicity, suppose $f: U \to \S$ is a differentiable map defined on a neighborhood $U$ of $\S$ in $\Re^{N|S|}$ and such that the fixed points of $f$ are the Nash equilibria of a game $G$. Let $d$ be the displacement of $f$, i.e., $d(\s) = \s - f(\s)$. Then the Nash equilibria of $G$ are the zeros of $d$.  Suppose now that the Jacobian of $d$ at a Nash equilibrium $\s$ of $G$ is nonsingular. Then we can define the {\it index} of $\s$ under $f$ as $\pm 1$ depending on whether the determinant of the Jacobian of $d$ is positive or negative. 

One potential problem with the definition of index is the dependence of the computation on the function $f$, as intuitively we would think of the index as depending only on the game $G$.  But, under some regularity assumptions on $f$, we can show that the index is independent of $f$. Specifically, consider the class of  continuous maps $F: \G \times \S \to \S$ with the property that  the fixed points of the restriction of $F$ to $\{\, G \, \} \times \S$ are the Nash equilibria of $G$. Demichelis and Germano \cite{DG2000} show that the index of equilibria is independent of the particular map in this class that is used to compute it; Govindan and Wilson \cite{GW1997} show that the degree is equivalent to the index computed using one of the maps in this class, the fixed-point map defined by G\"{u}l, Pearce and Stacchetti \cite{GPS1993}. Thus, the index and degree of equilibria coincide---see Demichelis and Germano \cite{DG2000} for an alternate, more direct, proof of this equivalence. Given these results, for a regular equilibrium, we can talk unambiguously of its index  and use the term degree  interchangeably with it.

\subsection{Games in strategic form}\label{subsec strategic form}
It is convenient for us  to work with a somewhat larger class of games than normal-form games, called strategic-form games, and in this subsection we will define these games---cf. Pahl \cite{LP2019} for an extensive treatment of these games.

A game in {\it strategic form} is a triple $(\N, {(P_n)}_{n \in \N}, V)$ where: $\N$ is the player set; for each $n$, $P_n$ is a polytope\footnote{A polytope is a convex hull of finitely many points.} of strategies;  $V: \prod_{n \in \N} P_n \to \Re^\N$ is a multilinear payoff function. Clearly any normal-form game is a strategic-form game. Going the other way,  given a strategic-form game $(\N, {(P_n)}_{n \in \N}, V)$, we can define a normal-form game $(\N, {(S_n)}_{n \in \N}, G)$ where for each $n$, $S_n$ is the set of vertices of $P_n$ and for each $s\in S=\prod_n S_n$, $G(s) = V(s)$; the polytope $P_n$ can be viewed as the quotient space of $\S_n$ obtained by identifying all mixed strategies that are duplicates of one another (i.e., induce the same payoffs for all players for any profile of strategies of $n$'s opponents).

\subsection{Statement of the main theorem}
The following theorem settles the Myerson-Hofbauer conjecture in the affirmative. 

\begin{theorem}\label{thm main}
A regular equilibrium is sustainable iff its index is $+1$. 
\end{theorem}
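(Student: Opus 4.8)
The plan is to prove the two implications separately. The forward direction, that sustainability implies index $+1$, follows quickly from the invariance properties of the index recorded in Section~\ref{indexofequilibria}; the converse requires an explicit construction and is where the real work lies.

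For the forward direction, suppose $\s$ is a regular equilibrium of $G$ and $(G,\s)\sim(\bar G,\bar\s)$ with $\bar\s$ the unique equilibrium of $\bar G$. Let $G^{*}$ be the canonical representative of this class, obtained from $G$ by deleting every strategy that is an inferior reply to $\s$; since $\s$ is regular it is quasi-strict, so these are exactly the unused strategies. The index is unchanged under deletion of inferior replies, so $\mathrm{ind}_{G}(\s)=\mathrm{ind}_{G^{*}}(\s)$. By the same token $G^{*}$ is also obtained from $\bar G$ by deleting the inferior replies to $\bar\s$, whence $\mathrm{ind}_{\bar G}(\bar\s)=\mathrm{ind}_{G^{*}}(\s)$; and $\bar\s$, being regular, makes $\bar G$---whose only equilibrium it is---a regular game. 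Since the degrees of the equilibria of a regular game sum to $+1$, we conclude $\mathrm{ind}_{\bar G}(\bar\s)=+1$, and therefore $\mathrm{ind}_{G}(\s)=+1$.

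For the converse, I would first pass to the representative of the $\sim$-class in which $\s$ has no inferior replies (permissible, since sustainability depends only on that class), so assume every pure strategy of every player is a best reply to $\s$; as $\s$ is regular and hence quasi-strict, $\s$ is then completely mixed and lies in the interior of $\S$. Being regular, $\s$ is isolated, so the set $K$ of equilibria of $G$ other than $\s$ is compact and disjoint from $\s$, and its index is $0$ because the indices of all equilibria sum to $+1=\mathrm{ind}(\s)$. Fix a continuous map $f\colon\S\to\S$ whose fixed points are precisely the Nash equilibria of $G$ (for instance the fixed-point map of G\"ul, Pearce and Stacchetti \cite{GPS1993}), with displacement $d=\mathrm{id}-f$, and pick a small ball $B\subset\mathrm{int}(\S)$ around $\s$ disjoint from $K$. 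One then wants to deform $d$ on $\S\setminus\mathrm{int}(B)$, leaving it unchanged on $B$ and respecting the condition on $\partial\S$ that makes it the displacement of a self-map of $\S$, into a nowhere-vanishing map; the obstruction to this is a single cohomology class, measured by the index of $K$. This is the point at which the Hopf Extension Theorem \cite{S1966} is invoked and the hypothesis $\mathrm{ind}(\s)=+1$ is used: the obstruction vanishes, so the deformation exists and produces a self-map $\tilde f$ of $\S$ agreeing with $f$ near $\s$ and having $\s$ as its only fixed point.

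The main obstacle is the last step: turning $\tilde f$ into an actual game. Here I would work within the class of strategic-form (polytope) games of Section~\ref{subsec strategic form}, which affords the room to encode---by adjoining finitely many new vertices to each player's strategy polytope and assigning them suitable multilinear payoffs---a perturbation of the best-reply structure of $G$ that reproduces $\tilde f$ on the original strategy space. The new strategies are to be chosen as inferior replies to $\s$, so that the restriction of the enlarged game $\bar G$ to the best replies to $\s$ is still $G$ and hence $(G,\s)\sim(\bar G,\s)$; and the payoffs are to be arranged so that any equilibrium of $\bar G$ projects to a fixed point of $\tilde f$, which forces $\s$---with zero weight on the new strategies---to be the unique equilibrium of $\bar G$. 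The delicate parts, which I expect to absorb most of the effort, are carrying out this encoding while staying inside the class of maps that genuinely arise from games (in particular accommodating equilibria of $G$ on the boundary of $\S$, which is part of why strategies must be \emph{added} rather than have their payoffs merely perturbed) and verifying the equilibrium-to-fixed-point translation; in spirit, though not in detail, this parallels Govindan and Wilson \cite{GW2005}. Passing $\bar G$ back to normal form by taking vertices then shows that $\s$ is sustainable, which completes the proof.
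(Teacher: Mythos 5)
Your proof of necessity is essentially the paper's: pass to the canonical representative, use invariance of the index under deletion/addition of inferior replies, and read off $+1$ from the unique equilibrium of $\bar G$. That part is fine. Your plan for sufficiency also matches the paper's two-stage strategy---first kill all fixed points of a Nash map except $\s^*$ via the Hopf Extension Theorem, then realize the modified map inside a strategic-form game by adding inferior replies---so the overall architecture is right.

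But your preliminary normalization points the wrong way, and this is not cosmetic. You propose to first \emph{delete} all inferior replies so that $\s$ becomes completely mixed and interior, and then work with a small ball $B$ around it, applying the Hopf extension on the annular region $\S\setminus\mathrm{int}(B)$. The paper does the opposite: it first \emph{adds} a strictly dominated strategy for each player so that $\s^*$ lies on $\partial\S$, and then works not with a ball around $\s^*$ but with the set $V = B^{\bar\e}$ of profiles in which \emph{at most one} player's strategy is $\bar\e$-far from $\s^*$ (Subsection~\ref{subsection consequence}); the Hopf extension is applied on $X=\S\setminus\int_\S(V)$, which is star-convex about a point $\s^0$ with $\s^0_{n,s_n}<\s^*_{n,s_n}-\bar\e$ for all $n$ and $s_n\in S_n^*$, hence a ball. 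Your normalization makes this construction unavailable: with $\s$ completely mixed the inequalities defining $\s^0$ cannot all hold (their sum would force $\sum_{s_n}\s^0_{n,s_n}<1$), so there is no such $\s^0$, no star-convexity, and no set of the form $B^{\bar\e}$.

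More importantly, the shape of $V$ is load-bearing for the step you defer as ``delicate.'' The key consequence of regularity (really, of isolation) exploited in the construction is that any equilibrium $\s\neq\s^*$ must have \emph{at least two} players whose strategies drop below $\s^*_{n,s_n}-\bar\e$ on the support. This is what makes the intermediate embedding $\tilde G$ of Subsection~3.7 work: each player $n$'s added coordinate $\theta_n$ monitors whether \emph{opponents} are near $\s^*$, and the ``two players deviate'' property guarantees that whenever the projection $\s$ differs from $\s^*$, \emph{every} player, including each deviator, sees some opponent outside $U_m^*$ and so abandons $\theta_n^0$ entirely. With an ordinary product ball around a completely-mixed $\s$, the analogous dichotomy fails: a non-$\s^*$ profile can have only one player outside the ball, and that lone player then observes all opponents near $\s^*$ and has no incentive to move off the original face, so the equilibrium-to-fixed-point translation you rely on in the last paragraph of your sketch does not isolate $\s^*$. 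So the obstacle is not only ``carrying out the encoding'' as a matter of bookkeeping; a structural ingredient---the $B^{\bar\e}$ geometry and the two-players-deviate property it encodes---is missing from your plan, and the normalization you adopt at the start precludes importing it.

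Two smaller remarks. First, the paper deforms the displacement first to a map $\hat d^0$ from $X$ to $A\setminus\{0\}$ via Hopf, and only afterwards truncates it (by the scalar $\l(\s)$) to get a self-map of $\S$; one does not need to ``respect the condition on $\partial\S$'' when invoking Hopf, one restores it afterwards. Second, the paper's modification of the Nash map is arranged to agree with $f$ on all of $V$, not merely near $\s^*$; this larger region of agreement is what later lets the bonus function $g$ be identically zero on $U\subset V$ while still being effective outside.
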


As the sum of the indices of the equilibria of a regular game is $+1$,  there is at least one with index $+1$.  Thus, we have the following corollary. 

\begin{corollary}\label{corollary main}
Every regular game has at least one sustainable equilibrium.  
\end{corollary}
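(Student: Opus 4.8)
The plan is to obtain the corollary as an immediate consequence of Theorem~\ref{thm main} together with the index-sum facts recalled in Section~\ref{indexofequilibria}, so the ``proof'' is really just an accounting argument. First I would record that in a regular game $G$ every Nash equilibrium is, by the very definition of regularity, a regular equilibrium, hence isolated; since the set of Nash equilibria of $G$ is compact (it is $\E\cap(\{\,G\,\}\times\S)$, a closed subset of the compact set $\{\,G\,\}\times\S$), and a compact set all of whose points are isolated is finite, $G$ has only finitely many equilibria $\s^1,\dots,\s^K$. To each $\s^k$ Section~\ref{indexofequilibria} assigns an index $\mathrm{ind}(\s^k)\in\{+1,-1\}$, which agrees with its degree.

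Next I would invoke the statement, recalled in Section~\ref{indexofequilibria} and following from the Kohlberg--Mertens structure theorem (since $proj\circ h^{-1}$ is homotopic to the identity on the one-point compactifications of $\E$ and $\G$), that the sum of the degrees of the equilibria of a regular game equals $+1$, i.e.\ $\sum_{k=1}^{K}\mathrm{ind}(\s^k)=+1$. Each summand is $\pm 1$ and the total is $+1$; writing $p$ for the number of indices equal to $+1$ and $q$ for the number equal to $-1$, we have $p-q=1$, so $p\ge 1$. Thus there is at least one equilibrium $\s^{k_0}$ with $\mathrm{ind}(\s^{k_0})=+1$. Since $\s^{k_0}$ is a regular equilibrium (being an equilibrium of a regular game) of index $+1$, Theorem~\ref{thm main} applies and yields that $\s^{k_0}$ is sustainable. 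Hence $G$ has at least one sustainable equilibrium.

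I do not expect any genuine obstacle here: the corollary is a one-line deduction once Theorem~\ref{thm main} is in hand. The only points meriting a word of care are (i) the finiteness of the equilibrium set, which I would justify via compactness plus isolation as above, and (ii) the appeal to the total-degree-equals-$+1$ fact, which is already stated (with its proof sketch) in Section~\ref{indexofequilibria}; both are invoked rather than re-proved.
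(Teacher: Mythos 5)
Your argument is exactly the paper's: the sum of indices of the equilibria of a regular game is $+1$, so at least one equilibrium has index $+1$, and Theorem~\ref{thm main} then gives sustainability. The extra care you take about finiteness of the equilibrium set is fine but not needed to invoke the index-sum fact as stated in Section~\ref{indexofequilibria}.
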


\section{Proof of Theorem \ref{thm main}} We will present the proof in a sequence of steps, each of which will be carried out in a separate subsection. 

\subsection{The index of a sustainable equilibrium}
We begin with a proof of the necessity of the condition.  Let $\s^*$ be a regular equilibrium  of a game $G$ that is sustainable.   Let $(G, \s^*) \sim (\bar G, \bar \s)$, where $\bar \s$ is the unique equilibrium of $\bar G$.  As we saw in the previous section, the index is constant on an equivalence class.  Since $\bar \s$ is the unique equilibrium of $\bar G$, its index is $+1$, and the result follows.

\subsection{Preliminaries}\label{prelims}
The rest of the section is devoted to proving the sufficiency of the condition. In this subsection, we introduce some key ideas that we exploit in the proof.  

First, we gather a list of notational conventions to be used. Throughout Section 3 (but not in the Appendix) we use the $\ell_\infty$-norm on Euclidean spaces.  For any subset $A$ of a topological space $X$, we let $\partial_X A$ be its topological boundary and $\text{int}_X (A)$ its interior. If $C$ is a convex set in a Euclidean space, then $\partial C$ and  $\text{int} (C)$ refer to the boundary and the interior of $C$ in the affine space generated by $C$.  

\begin{definition}\label{bonusgame}Given a payoff function $G$, and a vector $h \in \prod_{n \in \N} \Re^{S_n}$, let $G \oplus h$ be the game where the payoff to player $n$ from a profile $s \in S$ is $G_n(s) + h_{n,s_n}$. \end{definition}

For a game $G$,  recall that Nash \cite{N1951} obtains its equilibria as fixed points of a map  on the strategy space. This function, which we denote by $f$, is defined as follows. For each $n, s_n$ and $\s$, let $\phi_{n,s_n}(\s) \equiv \max \, \{ \, 0, G_n(s_n, \s_{-n}) - G_n(\s) \, \}$ and $\phi_n \equiv \prod_{s_n \in S_n} \phi_{n, s_n}$; then 
\[
f_{n,s_n}(\s) \equiv \frac{ \s_{n,s_n} + \phi_{n, s_n}(\s)}{1 + \sum_{t_n \in S_n}\phi_{n, t_n}(\s)}.
\]
For each $n$, $f_n(\s) = \s_n$ iff $\phi_{n,s_n} = 0$ for each $s_n \in S_n$. 
If $f_n(\s) \neq \s_n$ for some $n$ and $\s$, then letting
\[
r_n(\s) = {\left(\sum_{t_n \in S_n}\phi_{n, t_n}(\s)\right)}^{-1}\phi_n(\s)
\]
and
\[
\l_n(\s) = \frac{1}{1+ \sum_{t_n \in S_n}\phi_{n, t_n}(\s)},
\]
we have 
\[
f_n(\s) = \l_n(\s) \s_n + (1-\l_n(\s))r_n(\s).
\]
Thus $f_n(\s)$ is an average of $\s$ and a mixed strategy $r_n(\s)$; $r_n(\s)$ has the following properties: (1) it assigns a positive probability to a pure strategy iff it does strictly better than $\s_n$---in particular, it assigns zero probability to some strategy in the support of $\s_n$, as $f_n(\s) \neq \s_n$; (2) it assigns the highest probabilities to the best replies to $\s$.

\begin{definition}A game $(\N, \bar S, \bar G)$ {\it embeds} $(\N, S, G)$ if: (1) for each $n$: $S_n \subseteq \bar S_n$; and (2) the restriction of $\bar G$ to $S$ equals $G$. \end{definition}

Again for notational convenience, we will talk of a game $\bar G$ embedding $G$.   When $\bar G$ embeds $G$, we view the set $\S$ of mixed strategies of $G$ as a subset of the set $\bar \S$ of mixed strategies in $\bar G$.  Obviously, if $\bar G$ embeds $G$ and $\s$ is an equilibrium of $\bar G$ where for each $n$, the strategies that are not in $S_n$ are inferior replies, then $(G, \s) \sim (\bar G, \s)$.  Our proof technique is to show that for each regular $+1$ equilibrium $\s^*$ of $G$, we can embed $G$ in a game $\bar G$ where $\s^*$ is the unique equilibrium and the newly added strategies are inferior replies to $\s^*$.

We say that a strategic-form game $\bar V$ embeds $G$ if the associated normal-form game $\bar G$, as defined in Subsection \ref{subsec strategic form}, embeds $G$, or equivalently for each $n$, each strategy in $S_n$ is a vertex of the polytope $\bar P_n$ of $n$'s strategies in $\bar V$, and $G(s) = \bar V(s)$ for all $s \in S$.  In our proof we construct embeddings of $G$ in strategic-form games $\bar V$ that have a simple structure: for each $n$, the strategies in $S_n$ span a face of $\bar P_n$.

\subsection{A Simple consequence of regularity}\label{subsection consequence} From now on fix a game $G$ and let $\s^*$ be a regular equilibrium with index $+1$. For each $n$, let $S_n^*$ be the support of $\s_n^*$. Our objective in this subsection is to record the following simple, and yet consequential, property of $\s^*$.  There exists $\bar \e  > 0$ such that: if $\s \neq \s^*$ is an equilibrium of $G$, then there exist two different players $n_1$, $n_2$,  such that for $i = 1, 2$, there exists $s_{n_i} \in S_{n_i}^*$ with  $\s_{n_i, s_{n_i}} < \s_{n_i, s_{n_i}}^* - \bar \e$.  Indeed if this property is not true, there exist a sequence $k \to \infty$, a  corresponding sequence $\s^k$ of equilibria converging to some $\s$, and a player $n$ such that: (1) $\s^k \neq \s^*$ for all $k$; and (2) for all $m \neq n$ and $s_{m} \in S_m^*$,  $\s^k_{m, s_m} \ge \s_{m, s_m}^* - k^{-1}$.  Therefore, $\s_m = \s_m^*$ for all $m \neq n$. But $\s_n \neq \s_n^*$ as $\s^*$ is regular and, hence, isolated. This implies that $\l\s + (1-\l)\s^*$ is an equilibrium for all $\l \in [0, 1]$, again contradicting the fact  that $\s^*$ is isolated. Thus, there exists $\bar \e$ with the stated property.\footnote{Note that this proof only uses the fact that $\sigma^*$ is isolated.}

For $0 < \e \le \bar \e$, and each $n$, let $B_n^\e$  be the set of $\s_n \in \S_n$ such that $\s_{n,s_n} \ge \s_{n,s_n}^* - \e$  for all $s_n \in S_n^*$; and let $B^\e$ be the set of $\s$ such that $\s_n$ is  not in $B_n^\e$ for at most one $n$. (N.B. $\prod_{n \in \N} B_n^\e \subsetneq B^\e$). Since $B^\e=\bigcup_n (\Sigma_n \times \prod_{m\neq n} B_m^\e)$ is the union of finitely many closed sets, it is closed.

\subsection{Killing all fixed points of $f$ other than $\s^*$}\label{killing}

From the viewpoint of fixed-point theory, our problem amounts to  embedding $\S$  as a proper face of a polytope $\bar \S$, extending $f$ (the Nash map) to a function $\bar f$ on it, and then modifying $\bar f$ such that its only fixed point is $\s^*$.  From a game-theoretic viewpoint, there is an additional problem introduced by the caveat that  $\bar f$ should, in a sense, be realizable as a fixed-point map of a game $\bar G$ that embeds $G$---i.e., a map whose fixed points are the equilibria of game $\bar G$.  In this subsection, we solve the first problem partially, by constructing a map $f^0$ that coincides with $f$ on  $B^\e$ for some $0 < \e \le \bar \e$ and that has no fixed points outside it. We will later use this map to construct the embedding $\bar G$.

If necessary by adding a strictly dominated strategy for each player, we can assume that $\s_n^*$ belongs to $\partial \S_n$ for each $n$. (Recall that sustainability and index are properties of equivalence classes of regular equilibria, so that the addition of such strategies is harmless.) Let $V \equiv B^{\bar \e}$; $X \equiv \S \backslash \int_{\S} (V)$. The boundary of $X$ is relative to the affine space generated by $\S$, i.e., $\partial X \equiv (\partial \S \backslash V) \cup \partial_\S V$.  We claim now that $(X, \partial X)$ is homeomorphic to a ball with boundary. Indeed, the desired homeomorphism can be constructed as follows. Pick a completely-mixed strategy-profile $\s^0$ such that $\s^0_{n,s_n} < \s_{n,s_n}^* - \bar \e$ for all $n$ and $s_n \in S_n^*$. (Such a choice is possible since $\s^*_n$ belongs to the boundary of $\S_n$ and if necessary, we can decrease the $\bar \e$ defined above.) The set $X$ is star-convex at $\s^0$: for each $\s \in X$, $\l \s + (1-\l) \s^0 \in X$ for all $\l \in [0, 1]$.  Therefore, there is now a closed ball around $\s^0$ in $\S \backslash \partial \S$ that is contained in $X  \backslash \partial X$ that is homeomorphic to $X$ using radial projections from $\s^0$.

Define $\tilde f: \S \to \S$ as follows.  First, using Urysohn's lemma, construct a continuous function $\a: \S \to [0, 1]$ that is zero on $V$ and positive everywhere else.  Then, letting $\t_n^0$ be the barycenter of $\S_n$ for each $n$, define $\tilde f(\s) = (1- \a(\s))f(\s) + \a(\s) \t^0$. The function $\tilde f$ equals $f$ on $V$ and therefore $\s^*$ is the unique fixed point of $\tilde f$ in $V$. The set $\S \backslash V$ is mapped by $\tilde f$ to $\S \backslash \partial \S$. Hence all the other fixed points of $\tilde f$ belong to $X \backslash \partial X$.

Let $\tilde d$ be the displacement of $\tilde f$: $\tilde d(\s) \equiv \s - \tilde f(\s)$.  For each $n$, let $A_n$ be the hyperplane in $\Re^{S_n}$ through the origin and with normal $(1, \ldots, 1)$, and let $A = \prod_n A_n$. The map $\tilde d$ maps $\S$ into $A$. As the index of $\s^*$ is $+1$, the sum of the indices of the other components of fixed points of $\tilde f$, which are contained in $X \backslash \partial X$, is zero.  Therefore, $\tilde d: (X, \partial X) \to (A, A - 0)$ has degree zero.  By the Hopf Extension Theorem (cf. Corollary 8.1.18, \cite{S1966}) there exists a map $\hat d^0$ from $X$ to $A - 0$ such that its restriction to $\partial X$ coincides with $\tilde d$.  Extend $\hat d^0$ to the whole of $\S$ by letting it be $\tilde d$ outside $X$, i.e., on $V\backslash \partial_\S V$. 

For each $\s \in \S$, there exists $\l \in (0, 1]$ such that $\s - \l \hat d^0(\s) \in \S$: indeed, this is obvious for $\s \in X \backslash \partial X$, since $\s$ belongs to the interior of $\S$; for $\s \in \partial X \cup V$, we can take $\l = 1$ as $\hat d^0 = \tilde d$.  Now for each $\s$, let $\l(\s)$ be the largest $\l \in [0, 1]$ such that $\s - \l(\s)\hat d^0(\s) \in \S$. Note that the map $\s \mapsto \l(\s)$ is continuous. Define $f^0: \S \to \S$ by: $f^0(\s) = \s - \l(\s)\hat d^0(\s)$. The map $f^0$ is continuous, coincides with $f$ on $V$, and has $\s^*$ as its unique fixed point.

\subsection{Example}\label{Part1} 
We now introduce a running example, where we can carry out our construction numerically, and which we hope will aid in the understanding of the proof. The example differs from the text in one somewhat irrelevant respect: we focus on symmetric strategies, as it reduces the dimension of the problem and allows us to perform a two-dimensional graphical analysis as well. 

The game we study is a two-player coordination game given below. 
 $$
\begin{array}{ccc}
\multicolumn{1}{c}{} &\multicolumn{1}{c}{L} &\multicolumn{1}{c}{R} \\
\cline{2-3}
\multicolumn{1}{c}{L} &\multicolumn{1}{|c}{(1,1)} &
\multicolumn{1}{|c|}{(0,0)} \\
\cline{2-3}
\multicolumn{1}{c}{R} &\multicolumn{1}{|c}{(0,0)} &
\multicolumn{1}{|c|}{(1,1)}\\
\cline{2-3}
\end{array}
$$

\medskip
Given our restriction to symmetric strategies, we will dispense with the subscript for players in the notation (here and throughout the paper when we work with this example).  Thus, a symmetric mixed-strategy profile is represented by one number, $x \in [0, 1]$, where $x$ is the probability of playing $L$. There are two pure strategy equilibria: $x = 1$ and $x = 0$, both of which have index $+1$; and there is a mixed equilibrium, $x = 1/2$, which has index $-1$.  The restriction of the Nash map $f$ to symmetric strategies allows us to represent it as a function from $[0, 1]$ to itself.   By computation, we obtain: 

\[
f(x) \equiv
  \begin{cases}
                                   \frac{x}{1-2x^2 + x}& \text{if $x \in [0,1/2]$} \\
                                   
\frac{x -2x^2 +3x -1}{1 -2x^2+3x-1} & \text{if $x \in (1/2,1]$}
  \end{cases}
\]

The graph of $f$, with its three fixed points corresponding to the three equilibria of the game, is illustrated below in Figure 1.  Let $V \equiv [2/3, 1]$. We can directly construct a map $f^0$ as in the previous section, whose only fixed point is $x = 1$ (see the graph of $f^0$ in green in Figure 1).\footnote{From Figure 1, one can observe that it is impossible to construct a map which coincides with $f$ in the neighbourhood of $x=1/2$ (the equilibrium with index -1) and which has $1/2$ as the unique fixed point.} 

\[
  f^0(x) \equiv
  \begin{cases}
                                   \frac{7}{8}& \text{if $x \in [0,2/3]$} \\
                                   
\frac{x -2x^2 +3x -1}{1 -2x^2+3x-1} & \text{if $x \in (2/3,1]$}
  \end{cases}
\]

\begin{figure}[h]
\caption{Graphs of $f$ (black) and $f^0$ (green)}
\medskip
\includegraphics[scale=0.5]{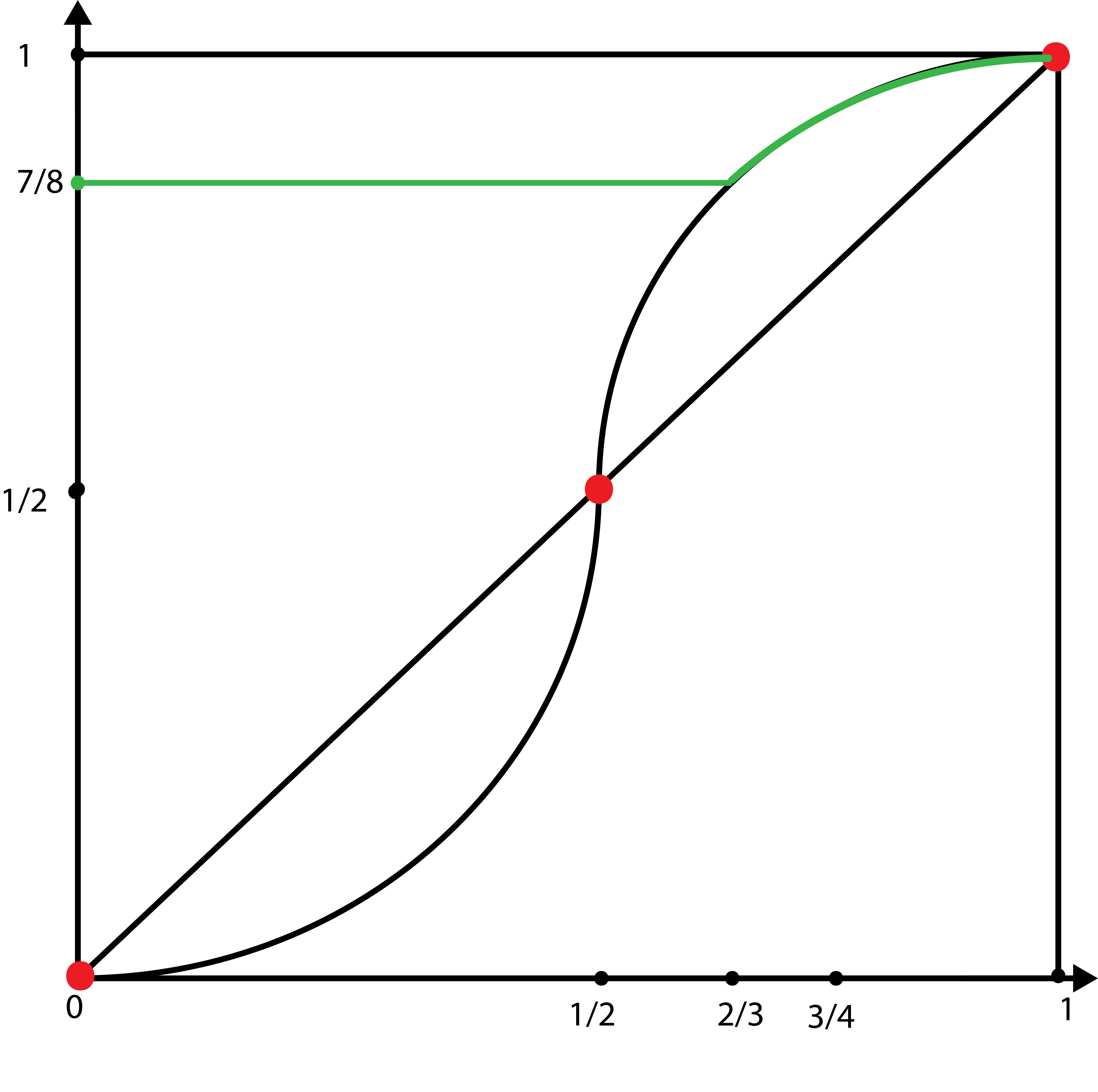}
\end{figure}

\subsection{A parametrized family of perturbed games}\label{parametrizedfamily}
Ideally, we would like a game $G^0$  such that $f^0$ is the Nash map of $G^0$. This seems to be too strong a property to hold.  However, $f^0$ does contain enough information for us to construct a function $g: \S \to \prod_{n \in\N} \Re^{S_n}$ such that: (1) $g(\cdot)$ is zero on $B^\e$ for some sufficiently small $\e$; (2) $\s$ is an equilibrium of $G \oplus g(\s)$ iff $\s = \s^*$. 

Choose $0 < \e < \bar \e $  and let  $U \equiv B^\e$; note that $U$ is a closed subset contained in the interior of $V$.   For each $n$, let $Z_n$ be ${(d_n^0)}^{-1}(0) \cap (\S \backslash \int_\S(U))$, where $d^0$ is the displacement of $f^0$.   Let $Z_n^1$ be the complement of $Z_n$ in $\S \backslash \int_\S(U)$. Define $r_n^0: Z_n^1 \to \partial \S_n$ as follows.  For each $\s \in Z_n^1$, let $r_n^0(\s)$ be the unique point in $\partial \S_n$ on the ray from $\s_n$ through $f_n^0(\s)$, i.e., it is the unique point of the form $(1 - \a) \s_n + \a f_n^0(\s)$ for $\a \ge 1$ that belongs to $\partial \S_n$.  If $\s \in Z_n^1$, then there exists some $t_n$ that is in the support of $\s_n$ but not of $r_n^0(\s)$. For each $n, s_n$, let $Z_{n,s_n}^+$ be the closure of the set of $\s \in Z_n^1$ for which $r_{n,s_n}^0(\s) \ge r_{n,t_n}^0(\s)$ for all $t_n \in S_n$.  If $f^0(\s) = f(\s)$ (the Nash map) and $\s \in Z_n^1$, then $r_n^0(\s)$ equals $r_n(\s)$ as defined in subsection \ref{prelims}; therefore, $\s \in Z_{n, s_n}^+$ iff $s_n$ is a best reply to $\s$ in $G$. 

We are now ready to define the function $g(\s)$. In doing so, we repeatedly invoke Urysohn's lemma to construct functions that are zero on a closed set and positive outside it.  First, let $v_n(\s) = \max_{s_n} G_n(s_n, \s_{-n})$.  Second, let $\beta_n^1: \S \to [0, 1]$ be a continuous function that is zero on $Z_n$ and positive everywhere else.  Third, for each $n, s_n$, let $\beta_{n,s_n}^2: \S \to [0, 1]$ be a continuous function that is one on $Z_{n,s_n}^+$ and  strictly smaller than one elsewhere.  Finally, let $\beta^3: \S \to [0, 1]$ be a continuous function that is one on $\S \backslash \int_\S(V)$, zero on $U$ and strictly positive everywhere else. For each $n, s_n$ and $\s$, define:
\[
g_{n,s_n}(\s) = \beta^3(\s)\beta_{n,s_n}^2(\s) [v_n(\s) - G_n(s_n, \s_{-n}) + \beta_n^1 (\s)].
\]
If $\s \in U$, then $g(\s) = 0$ as  $\beta^3(\s) = 0$; and $\s$ is an equilibrium of $G \oplus g(\s)$ iff $\s = \s^*$.  

Suppose $\s \notin U$. Since $\s^*$ is the only fixed point of $f^0$, there exists some $n$ such that $f_n^0(\s) \neq \s_n$. For this $n$, there exists $s_n$ such that: $\s \in Z_{n,s_n}^+$ (take $s_n$ s.t. $r^0_{n,s_n}(\s) \ge r^0_{n,t_n}(\s)$ for all $t_n \in S_n$);  and there is $t_n$ in the support of $\s_n$ but not in the support of $r^0_n(\s)$. This implies $\beta_{n,s_n}^2(\s) = 1$, while $\beta_{n,t_n}^2(\s) < 1$.  

If $\s \notin V$, then $\beta^3(\s) = 1$ and so, 
\[
G_n(s_n, \s_{-n}) + g_{n,s_n}(\s) = v_n(\s) + \beta_n^1(\s) > v_n(\s) + \beta_{n,t_n}^2\beta_n^1(\s) \ge G_n(t_n, \s_{-n}) + g_{n,t_n}(\s),
\]
showing that $\s$ is not an equilibrium of $G \oplus g(\s)$. 

If $\s \in V \backslash U$, then as $f^0$ coincides with $f$, $s_n$ is a best reply against $\s$ while $t_n$ is not.   Thus $G_n(s_n, \s_{-n}) = v_n(\s)$ and $G_n(t_n, \s_{-n}) <  v_n(\s)$. Since $\beta^3(\s)  > 0$, we obtain that 

\[
\begin{array}{lllll}
G_n(s_n, \s_{-n}) + g_{n,s_n}(\s) & = & v_n(\s) + \beta^3(\s)\beta_n^1(\s) & > & v_n(\s) + \beta^3(\s)\beta_{n,t_n}^2(\s)\beta_n^1(\s) \\
& & &  > & G(t_n, \s_{-n}) + g_{n,t_n}(\s),
\end{array}
\]
and again $\s$ is not an equilibrium of $G \oplus g(\s)$. Thus the function $g$ has the desired properties.

\subsection{Example}\label{Part2}

We continue with the example of subsection \ref{Part1}.  We will construct the function $g$ of the previous section.  (Recall our convention of dropping the player subscript for terms like $Z_n$, $Z_n^1$.)  What we are after is a function $g:[0, 1] \to \Re^{\{L, R\}}$ such that $x = 1$ is the only (symmetric) equilibrium of the game $G \oplus g(x)$.  Let $ U \equiv [3/4,1]$. First recall that $f^0$ has no fixed point outside $U$, so $Z$ is empty and $Z^1 = [0,3/4]$. For each $x \in [0,3/4]$, $f^0(x) > x$, which implies that $r^0_n(x) =1$ so that $Z^+_{L} = [0,3/4]$. Now, by computation one gets, 

\[
  v_n(x) \equiv
  \begin{cases}
                                   1-x & \text{if $x \in [0,1/2]$} \\
                                   
					x & \text{if $x \in (1/2,1]$}.
  \end{cases}
\]

Because $Z$ is empty, we can set $\b^1(\cdot)$ to be a constant function equal to $\d > 0$. The map $\b^3(\cdot)$ will be defined as follows:

\[
  \b^3(x) \equiv
  \begin{cases}
                                   1 & \text{if $x \in [0,2/3]$} \\
                                   -12x+9 & \text{if $x \in (2/3,3/4)$}\\
					0 & \text{if $x \in (3/4,1]$}.
  \end{cases}
\]

There is no need to introduce the function $\b^2(\cdot)$ in this example.  Putting these ingredients together, we can define $g$ as follows: $g_{L}(x) = \b^3(x) [v_1(x) - x + \d]$ and $g_{R}  \equiv 0$. We show that if the payoffs are now perturbed according to the bonus function $g$ for each player, the only remaining equilibrium is $x = 1$. Let $x$ be an equilibrium of the perturbed game. If $x \in [0, 2/3]$, $\b^3(x)=1$, so if player 1 plays $x$, player 2 gets $\d$ more than the best payoff $v_2(x)$ in the unperturbed game from playing $L$  whereas by playing $R$ he will not get more than $v_2(x)$. Therefore, $x =1$, which is a contradiction.  On the other hand, if $x \in [2/3, 1]$, then $L$ is already the strict best-reply in the original game and since the $g$ is nonnegative, it follows that $x = 1$ is the unique equilibrium of the perturbed game.

\subsection{Isolating  $\s^*$}
Before we can use the perturbation $g$, we need to first embed $G$ in a game $\tilde G$ where $\s^*$ is the only equilibrium in the face $\S$ of $\tilde G$ and in fact the only equilibrium in which the strategy of even one of the players is in $\S_n$.  (The perturbation $g$ is then used on the face opposite to $\S$.)  Hence, the embedding $\tilde{G}$ will be such that if $\tilde \s$ is an equilibrium of $\tilde G$ and the support of $\tilde \s_n$ is in $S_n$ for some $n$, then $\tilde \s = \s^*$.  The game  $\tilde G$ that embeds $G$ will be represented in strategic form.

Choose $0 < \e^* < \e$ such that $\s_{n,s_n}^* > \e^*$ for each $n$ and $s_n \in S_n^*$.  Let $U^*_n \equiv B^{\e^*}_n$ for each $n \in \N$, and let $U^* \equiv B^{\e^*}$. The set $U^*$ is a proper subset of $U$ (and is a closed subset contained in the interior of $V$).  For each $n$, choose an arbitrary object $0_n^*$ (not in $S_n^*$). Let $\Theta_n$ be the set of  distributions over $\prod_{m \neq n} (S_m^* \cup \{\, 0_m^* \, \})$.    
For each player $n$, his strategy set  $\tilde \S_n$ in the strategic form of $\tilde G$ is  $\S_n \times \Theta_n$.  A typical element $\tilde \s_n \in \tilde \S_n$ has coordinates $(\s_n, \theta_n)$. 

We will now describe the payoff functions. For each $\theta_n \in \Theta_n$ and $m \neq n$, we let $\theta_{n,m}$ be the marginal distribution of $\theta_n$ over $S_m^* \cup \{\, 0_m^* \, \}$ and let $\Theta_{n,m}$ be the set of all probability distributions over $S_m^* \cup \{\, 0_m^* \, \}$.  For each $m \neq n$, let $\g_{n,m}: \Theta_{n,m} \times \S_m \to \Re$ be a bilinear function defined as follows. For all $\s_m$, $\g_{n,m}(0_m^*, \s_m) = 0$, while for $s_m \in S_m^*$, $\g_{n,m}(s_m, \s_m) = 1 - {(\s_{m, s_m}^* - \e^*)}^{-1}\s_{m, s_m}$. For each $\tilde \s$, player $n$'s payoff in $\tilde G$ is:\footnote{Technically, $\Theta_{n,m}$ and $\g_{n,m}$ do not depend on $n$.}
\[
\tilde G_n(\tilde \s) = G_n(\s) + \g_n(\theta_n, \s_{-n}),
\]
where
\[
\g_n(\theta_n, \s_{-n}) \equiv \sum_{m \neq n} \g_{n,m}(\theta_{n,m}, \s_m).
\]

Notice that the payoff function of each player $n$ is affine over each strategy set $\tilde \S_m$, $m=1,...,N$, so $\tilde G$ is indeed a well-defined  game in strategic form. For each $n$, let $\theta_n^0 =  {(0_m^*)}_{m \neq n}$. Then $G$ is embeddable in $\tilde G$ as the face $\S_n \times  \{\, \theta_n^0 \, \}$ is a copy of the original face $\S_n$, for $n=1,...,N$.

Suppose $\tilde \s$ is an equilibrium of $\tilde G$, then $\s$ is an equilibrium of $G$, as the functions $\g_n$ of each player $n$ do not depend on $\s_n$.  If $\s = \s^*$, then the unique $\theta_{n,m}$ that is optimal for each $n \neq m$ is $0_m^*$ and thus the equilibrium uses $\theta_n^0$ for each $n$. On the other hand if $\s \neq \s^*$, by the property of subsection \ref{subsection consequence}, there are at least two players $m$ for whom $\s_m \notin U_m^*$.  Therefore, for each $n$, there is at least one $m \neq n$ such that $0_{n,m}^*$ is not optimal.  Thus for each $n$, the support of $\theta_{n}$ does not include $\theta_n^0$. 

To conclude, we showed that if $\tilde \s = {(\s_n, \theta_n)}_{n \in \N}$ is an equilibrium of $\tilde G$, then $\s$ is an equilibrium of $G$ and either: (1) $\s = \s^*$ and $\theta_n = \theta^0_n$, for each $n \in \N$; or (2) $\s \neq \s^*$ and the support of $\theta_n$ does not contain $\theta^0_n$ for any $n \in \N$. 

\subsection{Example}\label{Part3}
In the context of the example of subsection \ref{Part1}: $\Theta = \Delta (\{0^*\} \cup \{L\})$. We identify $\Theta$ with $[0,1]$ and an element $\theta_1 \in [0,1]$ denotes the probability of $L$. 
Let $\g(0^{*}, x) = 0$ and $\g(L, x) = 1 - \frac{8x}{7}$. Then define $\g(\theta, x) = \theta_1 \g(L, x) + (1-\theta_1)\g_1(0^*, x)$. The graph of $\g(L, x)$ is depicted below in red in Figure 2. Let $\e^* = 1/8$. 

\begin{figure}[h]
\caption{Graph of $\g(L, \cdot)$ (red)}
\medskip
\includegraphics[scale=0.5]{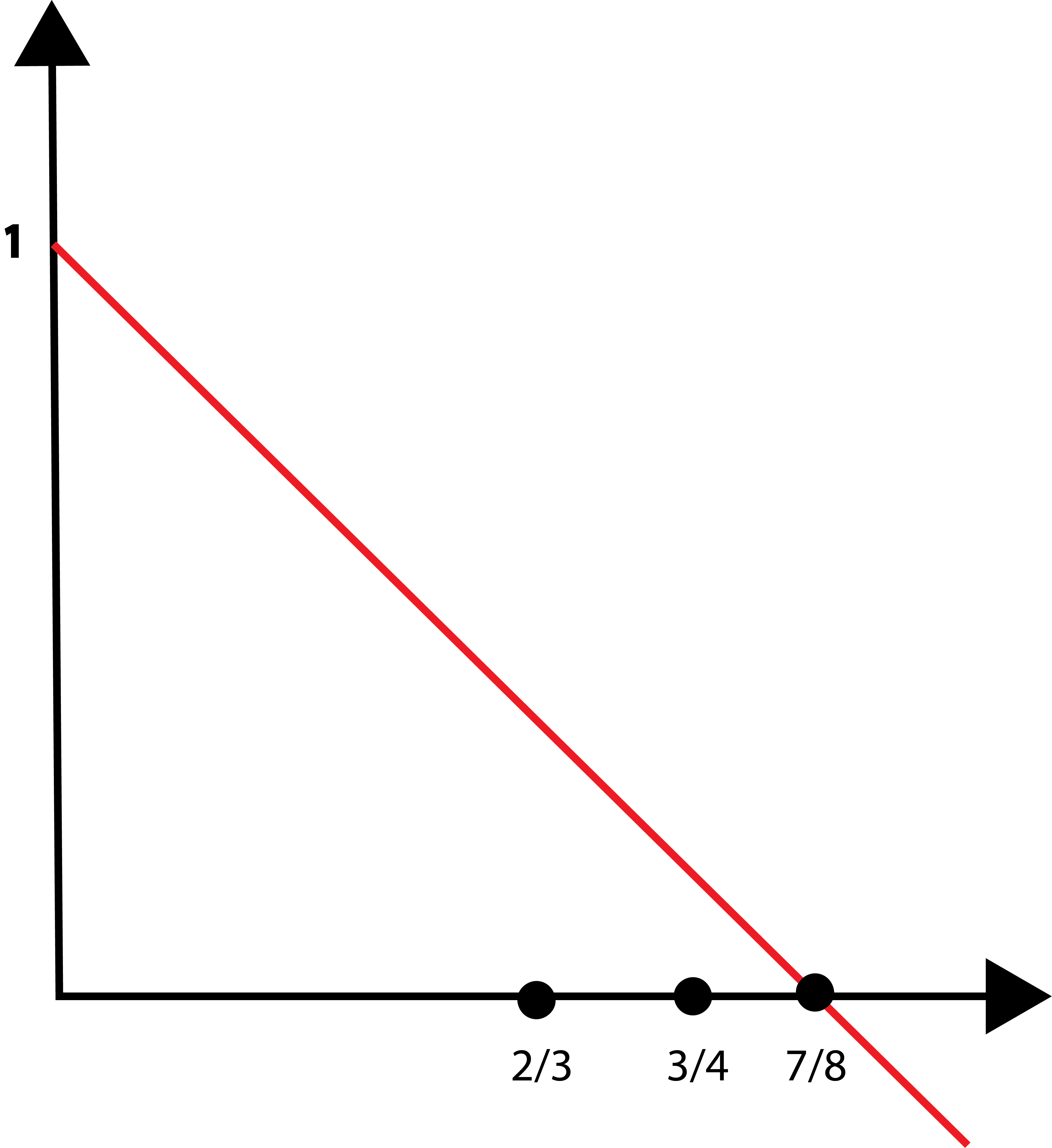}
\end{figure}

The strategic-form game is now defined by letting each player's strategy set be the square $[0,1] \times [0,1]$. The payoff function (which has to be defined for all profiles) is as follows. For player $1$, $\tilde G_1(x,\theta_1, y, \theta_2) = G_1(x,y) + \g(\theta_1, y)$; the payoffs for player 2 are defined symmetrically.  If player 2 plays $y < 7/8$, it follows that the (strict) best-reply of player 1 is to play $\theta_1 = 1$, in order to capture the positive bonus coming from $\g(L, x)$; if $y \ge 7/8$, then the bonus $\g(L,y)$ is nonpositive, and $\theta_1 = 0$ is a best-reply, which makes the payoffs of the perturbed game equal to the original payoffs. This game has a copy of the original equilibrium $x =1$: both players choose  $\theta = 0$ and $x=1$. Any other equilibrium is such that $\theta = 1$.

\subsection{The embedding $\bar G^\d$}\label{lastembedding} 
The embedding that allows us to obtain $\s^*$ as the unique equilibrium (and regular as well) will be built from $\tilde G$ by adding a finite number of mixed strategies as pure strategies and by defining their payoffs to eliminate all other equilibria.\footnote{von Schemde and von Stengel have an explicit bound on the number of strategies they add, which is three times the number of pure strategies in $G$.  In our construction, we have no way of obtaining such a bound: the construction depends on the fixed-point map $f^0$ obtained in subsection \ref{killing}, which in turn relies on an existence result, the Hopf Extension Theorem (cf. Corollary 8.1.18, \cite{S1966}).}  

The set $\S \backslash \int_\S(U^*)$ is compact, $g(\cdot)$ is continuous, and no $\s \in \S \backslash \int_\S (U^*) $ is an equilibrium of $G \oplus g(\s)$, as shown in subsection \ref{parametrizedfamily}. Hence, there exists $\eta > 0$ such that no $\s \in \S \backslash \int_\S (U^*)$ is an equilibrium of $G \oplus g$ for any $g$ with $\Vert g - g(\s) \Vert \le \eta$.  Also, since $g$ is uniformly continuous, there exists  $0 < \zeta < 1/3$ such that $\Vert g(\s) - g(\s') \Vert \le \eta$, if $\Vert \s - \s' \Vert \le \zeta$. Reduce $\zeta$ to ensure that it is also smaller than the distance between $U_n^*$ and $\partial_{\S_n} U_n$ for each $n$.

For each $n$, take a  triangulation $\T_n$ of $\S_n \times \Theta_n$  with the following properties.  (See the Appendix for the details.) (1) The only vertices in $\S_n \times \{\, \theta_n^0 \, \}$ of $\T_n$ are pure strategies $(s_n, \theta_n^0)$, $s_n \in S_n$; (2) letting $\Theta_n^1$ be the face of $\Theta_n$ where $\theta_n^0$ has zero probability, if $T_n\in \T_n$ is a simplex either with a face in $\S_n \times \Theta_n^1$, or shares a face with such a simplex,  then the diameter of $T_n$ is less than $\zeta$; (3) there exists a convex function $\rho_n: \S_n \times \Theta_n \to \Re_+$ such that: (a) $\rho_n(\l x + (1-\l) y) = \l \rho_n(x) + (1-\l) \rho_n(y)$ iff $x$ and $y$ belong to a simplex $T_n$ of $\T_n$; (b) $\rho_n^{-1}(0) = \S_n \times \{\, \theta_n^0 \, \}$. 

Let $\bar S_n^0$ be the set of vertices of the triangulation $\T_n$. Let $\bar S_{n}^1 \equiv \bar S_{n+1}^0$ modulo $N$.  A typical element of $\bar S_n^0$ is a pair $(\s_n, \theta_n)$ in $\S_n \times \Theta_n$ that is a vertex of $\T_n$. We fix a pure strategy $s_n^0 \in S_n$ for each $n$. These pure strategies will be used below to define the perturbation of payoffs $\pi_n$ for each player $n$. We denote a typical element of $\bar S_n^1$ by $(\s_{n, n+1}, \theta_{n, n+1})$, which is a vertex in $\T_{n+1}$. For $i = 0, 1$, let $\bar \S_n^i$ be the set of mixtures over $\bar S_n^i$. 
The pure strategy set of player $n$ in the game $\bar G^\d$ in normal form is $\bar S_n \equiv \bar S_n^0 \times \bar S_n^1$.  The set of mixed strategies is denoted $\bar \S_n$.  For each mixed strategy $\bar \s_n$,  and $i = 0, 1$, we let $\bar \s_n^i$  be the marginals over $\bar S_n^i$. Define $\bar S \equiv \prod_n \bar S_n$ and $\bar \S \equiv \prod_n \bar \S_n$. Also, let $\bar S^i \equiv \prod_n \bar S^i_n$ and $\bar \S^i \equiv \prod_n\bar\S_n^i$ for $i = 0, 1$.

Fix $\d > 0$. We will now define the  payoff function $\bar G^\d$.   For each $n$, let $\T_n^1$ be the collection of simplices of $\T_n$ that have nonempty intersection with $\S_n \times \Theta_n^1$. Given a pure strategy profile $\bar s \in \bar S$ with $\bar s_n = (\s_n, \theta_n, \s_{n,n+1}, \theta_{n, n+1})$ for each $n$, the payoff $\bar G^\d_n(\bar s)$ has five distinct components:
\[
\bar G^\d_n(\bar s) = G_n(\s) + \sum_{s_n \in S_n}  g_{n,s_n}^1(\bar s_{-n})\s_{n,s_n} + \g_n(\theta_n, \s_{-n}) + \pi_n(\bar s_n^1, \bar s_{n+1}^0) - \d \rho_n(\bar s_n^0). 
\]
The first and the third terms have been defined before. The function $\rho_n$ in the last term is the convex function defined above. We will specify the other two terms. 
\[
g_{n,s_n}^1(\bar s_{-n}) =  \xi_n(\bar s_{-n}^0)g_{n,s_n}(\s_1, \ldots, \s_{n-1},\s_{n-1, n}, \s_{n+1}, \ldots, \s_{N}),
\]
where $\xi_n(\bar s_{-n}^0)$ is one if for each $m \neq n$, $\bar s_m^0$ is a vertex of some simplex in $\T_m^1$; otherwise it is zero. The function $\pi_n$ is $0$ if either: (1) $\bar s_{n+1}^0$ is a vertex of some simplex in $\T_{n+1}^1$ and $\bar s_n^1 = \bar s_{n+1}^0$; or (2) $\bar s_{n+1}^0$ is not a vertex of such a simplex,  but $\bar s_n^1 = (s_{n+1}^0, \theta_{n+1}^0)$ ($s_{n+1}^0$ is a fixed pure strategy chosen above, while $\theta_{n+1}^0$ is the collection $(0_{n,m}^*)$); elsewhere it is $-1$. The definition of $\bar G^{\d}$ clearly implies that it embeds $G$.

We want to make a couple of remarks about the payoffs. First, the function $\pi_n$ incentivizes player $n$ to mimic player $n+1$ whenever the latter is choosing a strategy close to $\S_{n+1} \times \Theta_{n+1}^1$: if $n+1$ randomizes over the vertices of a simplex  $T_{n+1} \in \T^1_{n+1}$, then player $n$'s best replies must be among the vertices of the simplex. This will be a crucial property, since the choices in $\S_{n-1,n}$ play a role in the evaluation of the bonus function $g^1_n$. The idea is that whenever the bonus function $g^1_n$ is active, meaning that all players $m \neq n$ randomize over the vertices of a simplex $T_m$ in $\T^1_m$, then each pure best-reply for player $n$ must choose a vertex of $T_{n+1}$.  On the other hand, if player $n+1$ is randomizing over $S_{n+1} \times \{\, \theta_{n+1}^0 \, \}$ then it follows that the unique  best-reply for player $n$ is to choose the previously fixed strategies $\bar s^1_{n} = (s^0_{n+1}, \theta^0_{n+1})$.  

Our second remark concerns the nature of the payoffs for mixed strategies. For $n$ and each $i = 0, 1$, there is a linear map $p_n^i: \bar \S_n \to \S_{n+i} \times \Theta_{n+i}$ that sends each pure $\bar s_n$ to the corresponding mixed strategy in $\S_{n+i} \times \Theta_{n+i}$.  For each $n$, the first and the third terms of the payoffs depend on  $\bar \s$ only through their images under $p^0 = \prod_{n \in \N}p^0_n$; the fourth term depends on all the information in $\bar \s_n^1$ and $\bar \s_{n+1}^0$. The second term depends on $\bar \s_n$ only through $p_n^0$, but requires the entire information in $\bar \s_{-n}$, while the last term requires the information in $\bar \s_n^0$.

\subsection{Wrapping up the proof}
Let $\bar \s^*$ be the profile where for each player $n$, the marginal  on $\bar \S_n^0$ is $(\s^*, \theta_n^0)$ and the marginal on $\bar \S_n^1$ is $(s_{n+1}^0, \theta_{n+1}^0)$. For  $\d \ge 0$,  $\bar \s^*$ is an  equilibrium of $\bar G^\d$, and $(G, \s^*) \sim (\bar G^\d, \bar \s^*)$.   We will now show that for $\d$ sufficiently small, this is the only equilibrium of $\bar G^{\d}$, which completes the proof.

Say that a strategy  $\bar \s_n$ is \textit{admissible} for player $n$ if the support of its marginal $\bar \s_n^0 \in \bar \S_n^0$ is the set of vertices of a simplex $T_n$ in $\T_n$. Observe that for any $\d > 0$, every best reply for player $n$ is admissible. Indeed, the first three components of $n$'s payoff function depend on $n$'s strategy only through its projection to $\S_n \times \Theta_n$ and the fourth is independent of these choices. Therefore, any two strategies for $n$ that project under $p^0_n$ to the same point in $\S_n \times \Theta_n$ yield the same payoffs for these four terms, leaving the fifth to decide which one is better. But the map $\rho_n$ is convex, and it is linear precisely on the simplices  of $\T_n$, which then forces each best reply to be a mixture  over the vertices of a simplex of $\T_n$. 
 
We claim that if $\d = 0$ and the only admissible equilibrium of $\bar G^\d$ is $\bar \s^*$, then for sufficiently small $\d>0$, $\bar \s^*$ is the only equilibrium of $\bar G^\d$. To prove this claim,  suppose that we have a sequence $\bar \s^\d$ of equilibria of $\bar G^\d$ converging to some equilibrium $\bar \s^0$ of $\bar G^0$, then as we saw above $\bar \s^\d$ must be admissible, and hence also its limit $\bar \s^0$. As we have assumed that $\bar \s^*$ is the unique admissible equilibrium of $\bar G^0$, $\bar \s^0 = \bar \s^*$. Observe now that for each $n$, every pure best reply in $\bar G^0$ to $\bar \s^*$ is of the form $(s_n, \theta_n^0, s_{n+1}^0, \theta_{n+1}^0)$ where $s_n \in S_n$ is a best reply to $\s^*$; and this property holds for best replies to $\bar \s^\d$, for small $\d$. Thus  for each such $\d$, and for each $n$, $\bar \s_n^\d$ is of the form $(\s_n^{\d}, \theta_n^0, s_{n+1}^0, \theta_{n+1}^0)$, where $\s_n^\d$ is a best reply to $\s^*$ in $G$. In other words, $\s^\d $ is an equilibrium of $G$. As  $\s^\d$ converges to $\s^*$ and as $\s^*$ is an isolated equilibrium of $G$, $\s^\d = \s^*$ for all small $\d$. Thus the claim follows and it is  sufficient to show that $\bar \s^*$ is the only admissible equilibrium for $\d = 0$. 

To prove this last point, fix now an admissible equilibrium $\bar \s$ with marginals $(\bar \s^0, \bar \s^1) \in \bar \S^0 \times \bar \S^1$ of the game $\bar G^0$. For each $n$, let $(\s_n, \theta_n)$ and $(\s_{n,n+1}, \theta_{n, n+1})$ be the image of $\bar \s_n$ under $p_n^0$ and $p_n^1$, resp.  Also, let $T_n$ be the simplex of $\T_n$ generated by the support of $\bar \s_n^0$ for each $n$.

Suppose first for each $n$, $T_n$ belongs to $\T_n^1$. For each $n$, $\theta_n$ assigns probability less than $\zeta$, which is smaller than one, to $\theta_n^0$. Also, for at least two  $n$, $\s_n \notin \int_{\S_n}(U_n^*)$: indeed, otherwise there is one player $n$ all of whose opponents $m$ are choosing in $\int_{\S_n}(U_m^*)$, making $\theta_n^0$ the unique optimal choice, which is impossible. Thus, $\s_n \notin \int_{\S_n}(U_n^*)$ for at least two $n$, i.e., $\s \notin \int_{\S}(U^*)$ and, hence, $\s$ is not an equilibrium of $G \oplus g(\s)$.   For each $n$, and each $\bar s_{-n}$ in the support of $\bar \s_{-n}$, $\xi_n(\bar s_{-n}^0) = 1$ as $\bar s_m^0$ is a vertex of the simplex $T_m$, which is in $\T_m^1$, for each $m$; because of the function $\pi_n$, the optimality of $\bar \s_{n-1}^1$ implies that each $\bar s_{n-1}^1$ in the support of $\bar \s_{n-1}^1$ is a vertex of $T_n$.  Therefore, for each $\bar s_{-n}$ in the support of $\bar \s_{-n}$, $\Vert g^1(\bar s_{-n}) - g(\s) \Vert \le \eta$ and then $\Vert g^1(\bar \s_{-n}) - g(\s) \Vert \le \eta$. As $\s$ is not an equilibrium of $G \oplus g(\s)$, by the choice of $\eta$ in subsection \ref{lastembedding}, it is not an equilibrium of $G \oplus g^1(\bar \s)$, which contradicts the fact that $\bar \s$ is an equilibrium of $\bar G^0$. 

Now suppose that for exactly one $n$, say $n = 1$, $T_n$ does not belong to $\T_n^1$.  Then, $\theta_1^0$ has positive probability under $\theta_1$. Therefore, because of the definition of $\g_n$,  $\s_n \in U_n^*$ for $n > 1$, i.e., $\s \in U^*$. For $n > 1$, the fact that $\s_n \in U_n^*$ and $T_n$ belongs to $\T_n^1$  imply that for each $\bar s_n^0 = (\s_n, \theta_n)$ in the support of $\bar \s_n^0$, $\s_n$ belongs to $U_n$ (as the diameter of $T_n$ is less than $\zeta$, which is smaller than the distance between $U^*$ and $\partial_\S U$). Thus, $g_1^1(\s) = 0$. We will now show that $g_n^1(\s) = 0$ for $n > 1$. The payoff function $\pi_n$ for each $n \neq N$ forces each strategy $\bar s_n^1 = (\s_{n,n+1}, \theta_{n,n+1})$ in the support of $\bar \s_n^1$ to be a vertex of  $T_{n+1}$ and hence $\s_{n,n+1}$ is in $U_{n+1}$. Therefore, for $n>1$, $\s_{n-1,n} \in U_n$. Recall that $g_n(\cdot)$ was constructed to be $0$ on $U$. Consequently, for each $n > 1$, $g^1_n(\bar s_{-n}) = 0$ for each $\bar s_{-n}$ in the support of $\bar \s_{-n}$, i.e., $g^1_n(\bar \s_{-n}) = 0$. 

The fact that $g^1(\s) = 0$, implies that  $\s = \s^*$.  Optimality of $\theta_n$ for $n > 1$ now requires that it assign probability one to $\theta_n^0$. This is a contradiction: since $T_n \in \T^1_n$, its diameter is smaller than $\zeta$ (and hence one), putting it at positive distance from $\S_n \times \{\, \theta_n^0 \, \}$.

Finally, suppose that for at least two players $n$, $T_n$ does not belong to $\T_n^1$.  Then, again because of $\g_n$, for each $n$, $\s_n \in U^*_n$. We claim that for each $n$, $g_n^1(\bar s_{-n})= 0$ for each $\bar s_{-n}$ in the support of $\bar \s_{-n}$.  Indeed, if for some $m \neq n$, $T_m$ has no vertex in $\T_m^1$,  then $\xi_n$ is zero by construction at each $\bar s_{-n}$ in the support  and we are done.  Otherwise, if for each $m \neq n$, $T_m$ has a vertex in $\T^1_m$ then, letting $\bar s^0_m = (\s_m, \theta_m )$ be an arbitrary vertex of $T_m$, it follows from the fact that the diameter of each $T_m$ is less than $\zeta$ that $\s_m \in U_m$, which implies $\s \in U$. Therefore, $g_n^1(\cdot)$ is again zero on the support of $\bar \s_{-n}$.

It follows from the previous paragraph that $\s$ is an equilibrium of $G$, i.e., $\s  = \s^*$, making $\theta_n = \theta_n^0$.  Finally, optimality of $\bar \s_n^1$ implies that it is $(s_{n, n+1}^0, \theta_{n, n+1}^0)$, as it yields zero with others yielding $-1$.  Thus,  $\bar \s = \bar \s^*$, which concludes the proof.

\subsection{Example}\label{Part4}

We can triangulate the strategy set $\Sigma \times \Theta \equiv [0,1] \times [0,1]$ of each player as in Figure 3.  The horizontal axis represents $\Theta = [0,1]$ and the vertical axis $\Sigma = [0,1]$.

\begin{figure}[h]
\caption{Triangulation of $[0,1]^2$}
\medskip
\includegraphics[scale=0.7]{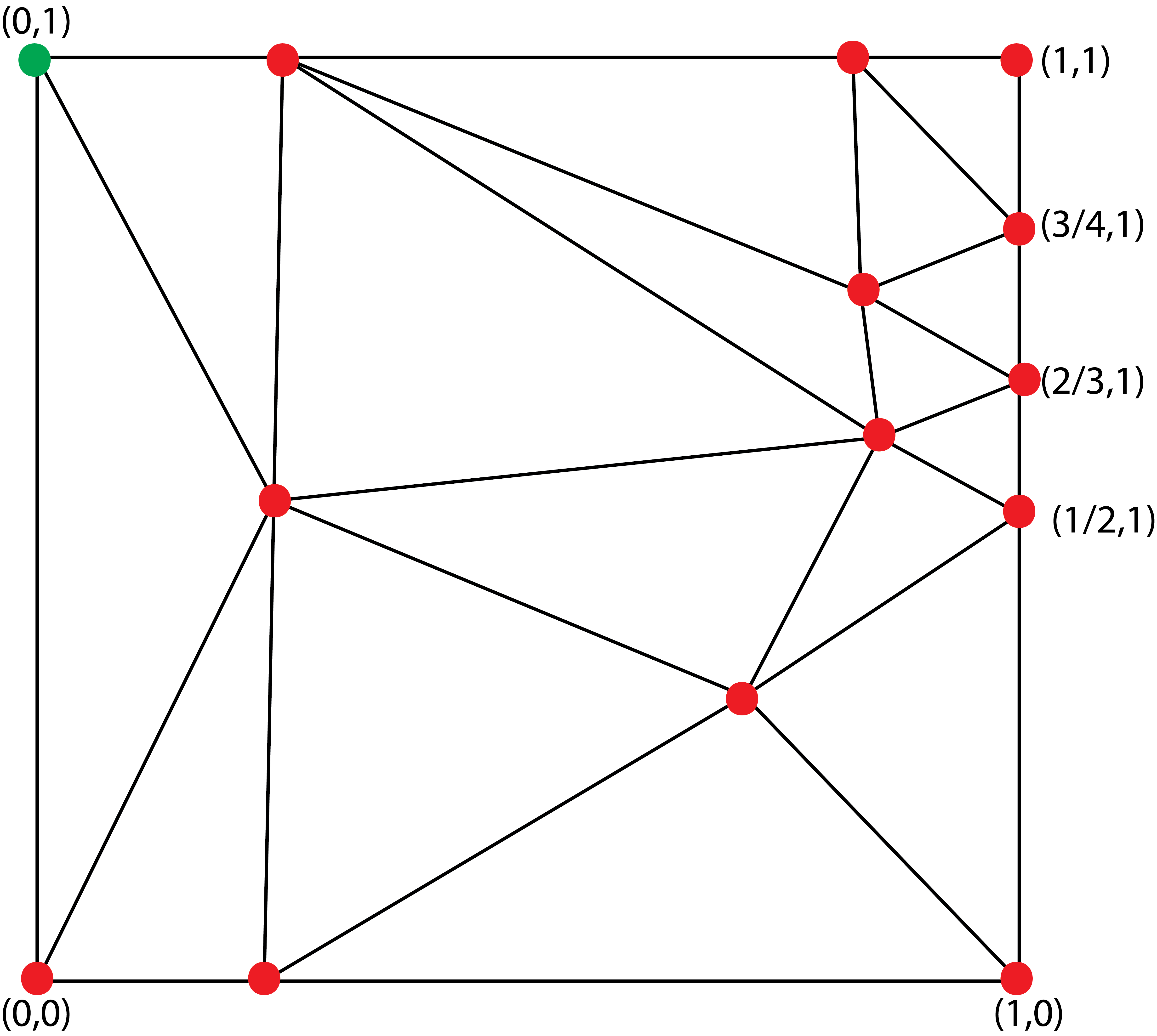}
\end{figure}

Payoffs are defined in the exact same way as in subsection \ref{lastembedding} with the following modifications: the function $\pi \equiv 0$, since there is no need to duplicate the strategy set of each player (by our construction, $g_n$ depends only on $\s_{-n}$); the function $\xi$ is equal to $1$ at all vertices of the triangulation that lie on the face $\theta = 1$.

Paralleling the proof of subsection \ref{lastembedding}, we show that the only admissible equilibrium of the perturbed game $\bar G^{\delta}$ with $\d =0$ is the (symmetric) equilibrium $(\theta, x)$, where $x = 1$ and $\theta = 0$. To see this, let $(x, \theta)$ be an equilibrium. Suppose first $x < 7/8$. Then $\theta =1$ is a strict best-reply. The support of the equilibrium $(\theta, x)$ is then a subset of one of the 1-dimensional simplices that subdivide $\{1\} \times [0,1]$. By our construction of $g$ and the fact that it is linear in each of these simplices, it follows that $x =1$, which is a contradiction. Therefore, $x \ge 7/8$. Recall that in the original game $x=1$ is a strict best-reply if $x \ge 7/8$. Since $g_{L} \ge 0$ (whereas $g_{R} \equiv 0$), it follows that $x=1$ is a strict-best reply in $\bar G^{\delta}$.  Finally, given $x = 1$, $\theta = 0$ is the optimal choice in $\Theta$.


\section{Deleting Unused Best Replies}
In the definition of equivalence between game-equilibrium pairs $(G, \s)$ and $(\bar G, \bar \s)$, we insisted that $G$ and $\bar G$ be the same game once we delete all strategies that are inferior replies to $\s$ and $\bar \s$, resp.  We could have weakened the requirement by allowing the deletion of unused best replies as well, i.e., that the games be the same once we restrict them the support of their respective equilibria.  For generic games, of course, these two notions coincide.
But, for nongeneric games, as we show now using examples, allowing for the deletion of unused best replies leads to an unsatisfactory concept of equivalence. 

Consider first the following bimatrix game:
$$
\begin{array}{ccc}
\multicolumn{1}{c}{} &\multicolumn{1}{c}{l} &\multicolumn{1}{c}{r} \\
\cline{2-3}
\multicolumn{1}{c}{t} &\multicolumn{1}{|c}{(1,1)} &
\multicolumn{1}{|c|}{(0,0)} \\
\cline{2-3}
\multicolumn{1}{c}{b} &\multicolumn{1}{|c}{(0,0)} &
\multicolumn{1}{|c|}{(0,0)}\\
\cline{2-3}
\end{array}
$$
The equilibrium $(b, r)$ is in dominated strategies and, clearly, an unreasonable equilibrium.  Deleting the strategies $t$ and $l$, which are ununsed best replies against this equilibrium, produces a trivial game where $(b,r)$ is now the only solution.

One might conjecture that the misbehavior in the above game stems from the fact that the equilibrium $(b, r)$ has index zero.  But that is not the case, as the following 3-player game  shows.  Player 1's strategy set is $\{\, (T,t), (T,b), B \, \}$; 2's strategy set is $\{\, (L,l), (L, r), R \, \}$; 3's strategy set is $\{ \, W, E^w, E^\e \, \}$. The payoffs are:

$$W:\;  \begin{array}{ccc}
\multicolumn{1}{c}{} &\multicolumn{1}{c}{L} &\multicolumn{1}{c}{R} \\
\cline{2-3}
\multicolumn{1}{c}{T} &\multicolumn{1}{|c}{
	\begin{array}{ccc}
	\multicolumn{1}{c}{} &\multicolumn{1}{c}{l} &\multicolumn{1}{c}{r} \\
	\cline{2-3}
	\multicolumn{1}{c}{t} &\multicolumn{1}{|c}{(6,6,1)} &
	\multicolumn{1}{|c|}{(0,0,1)} \\
	\cline{2-3}
	\multicolumn{1}{c}{b} &\multicolumn{1}{|c}{(0,0,1)} &
	\multicolumn{1}{|c|}{(6,6,1)}\\
	\cline{2-3}
	\end{array}
} &
\multicolumn{1}{|c|}{(3,3,0)} \\
\cline{2-3}
\multicolumn{1}{c}{B} &\multicolumn{1}{|c}{(3,0,1)} &
\multicolumn{1}{|c|}{(0,3,1)}\\
\cline{2-3}
\end{array}
$$

$$
E^w:\; \begin{array}{ccc}
\multicolumn{1}{c}{} &\multicolumn{1}{c}{L} &\multicolumn{1}{c}{R} \\
\cline{2-3}
\multicolumn{1}{c}{T} &\multicolumn{1}{|c}{
	\begin{array}{ccc}
	\multicolumn{1}{c}{} &\multicolumn{1}{c}{l} &\multicolumn{1}{c}{r} \\
	\cline{2-3}
	\multicolumn{1}{c}{t} &\multicolumn{1}{|c}{(-3,0,4)} &
	\multicolumn{1}{|c|}{(1,4,0)} \\
	\cline{2-3}
	\multicolumn{1}{c}{b} &\multicolumn{1}{|c}{(1,4,0)} &
	\multicolumn{1}{|c|}{(1,4,0)}\\
	\cline{2-3}
	\end{array}
} &
\multicolumn{1}{|c|}{(1,0,1)} \\
\cline{2-3}
\multicolumn{1}{c}{B} &\multicolumn{1}{|c}{(3,0,0)} &
\multicolumn{1}{|c|}{(0,3,0)}\\
\cline{2-3}
\end{array}\qquad
E^e:\; \begin{array}{ccc}
\multicolumn{1}{c}{} &\multicolumn{1}{c}{L} &\multicolumn{1}{c}{R} \\
\cline{2-3}
\multicolumn{1}{c}{T} &\multicolumn{1}{|c}{\begin{array}{ccc}
	\multicolumn{1}{c}{} &\multicolumn{1}{c}{l} &\multicolumn{1}{c}{r} \\
	\cline{2-3}
	\multicolumn{1}{c}{t} &\multicolumn{1}{|c}{(1,4,0)} &
	\multicolumn{1}{|c|}{(1,4,0)} \\
	\cline{2-3}
	\multicolumn{1}{c}{b} &\multicolumn{1}{|c}{(1,4,0)} &
	\multicolumn{1}{|c|}{(-3,0,4)}\\
	\cline{2-3}
	\end{array}
} &
\multicolumn{1}{|c|}{(3,0,1)} \\
\cline{2-3}
\multicolumn{1}{c}{B} &\multicolumn{1}{|c}{(3,0,0)} &
\multicolumn{1}{|c|}{(0,3,0)}\\
\cline{2-3}
\end{array}
$$

This game has a unique equilibrium, $\bar \s$, in which player 1 mixes uniformly between $(T,t)$ and $(T,b)$; player 2 mixes uniformly between $(L,l)$ and $(L,r)$; and player 3 plays $W$.
If we delete $B$, $R$, $E^w$, $E^e$, all of which are unused best replies, we get the game $G_1$:
$$
\begin{array}{ccc}
\multicolumn{1}{c}{} &\multicolumn{1}{c}{l} &\multicolumn{1}{c}{r} \\
\cline{2-3}
\multicolumn{1}{c}{t} &\multicolumn{1}{|c}{(6,6,1)} &
\multicolumn{1}{|c|}{(0,0,1)} \\
\cline{2-3}
\multicolumn{1}{c}{b} &\multicolumn{1}{|c}{(0,0,1)} &
\multicolumn{1}{|c|}{(6,6,1)}\\
\cline{2-3}
\end{array}
$$
where the equilibrium $\bar \s$ is now regular but reverses its index to  $-1$.

Our final example shows that there are problems even when an equilibrium is unique and in pure strategies: deleting some unused best replies results in a game where this equilibrium is not even isolated.  In the game $\bar G$ below
 $$W: \begin{array}{ccc}
\multicolumn{1}{c}{} &\multicolumn{1}{c}{L} &\multicolumn{1}{c}{R} \\
\cline{2-3}
\multicolumn{1}{c}{T} &\multicolumn{1}{|c}{(1,1,1)} &
\multicolumn{1}{|c|}{(1,1,0)} \\
\cline{2-3}
\multicolumn{1}{c}{B} &\multicolumn{1}{|c}{(1,0,1)} &
\multicolumn{1}{|c|}{(0,1,1)}\\
\cline{2-3}
\end{array}\qquad
E:\; \begin{array}{ccc}
\multicolumn{1}{c}{} &\multicolumn{1}{c}{L} &\multicolumn{1}{c}{R} \\
\cline{2-3}
\multicolumn{1}{c}{T} &\multicolumn{1}{|c}{(0,1,1)} &
\multicolumn{1}{|c|}{(1,0,1)} \\
\cline{2-3}
\multicolumn{1}{c}{B} &\multicolumn{1}{|c}{(1,0,0)} &
\multicolumn{1}{|c|}{(0,1,0)}\\
\cline{2-3}
\end{array}
$$
$(T, L, W)$ is the unique equilibrium.\footnote{This is a mild strengthening of the example in Brandt and Fischer \cite{BF2008}, where the unique equilibrium---which also fails to be quasi-strict---was in mixed strategies.} 
Deleting the unused replies $B$ and $E$, but not $R$, gives us the following game $G_2$:

\begin{center}
	\begin{tabular}{ccc}
		& $L$ &  $R$ \\ \cline{2-3}
		$T$ & \multicolumn{1}{|c}{$(1, 1, 1)$} & \multicolumn{1}{|c|}{$(0, 1, 0)$}  \\
		\cline{2-3}
	\end{tabular}
\end{center}
where now there is an interval of equilibria.   

It is noteworthy that, when we view the last game $G_2$ as a two-player game, the equilibrium $(T,L)$ (part of a connected component) is made unique by adding players and strategies. Similarly, in the two-player game $G_1$ above, the $-1$ index regular equilibrium in which player 1 mixes uniformly between $(T,t)$ and $(T,b)$ and player 2 mixes uniformly between $(L,l)$ and $(L,r)$ has been made unique by adding players and strategies, leaving open the question of how general such a property is.\footnote{We thank Joseph Hofbauer for raising this issue.} Observe that without adding players, it is impossible to make a non $+1$ equilibrium unique in a two player game by adding only strategies, because a unique equilibrium of a two player game is necessarily quasi-strict (Norde \cite{N1999}).

\section{Extending Sustainability to Non-generic Games} 
It is fairly easy to construct non-generic games where no equilibrium is sustainable.  For example, consider the (in effect) two-player game $G_2$ we obtained at the end of the last section, where the set of equilibria is a connected component.  There is no way to make any of these equilibria unique by adding strategies that are inferior replies.\footnote{A non isolated equilibrium $\s$ of a game $G$ can never be made unique in a game $\hat G$ obtained from $G$ by adding/deleting inferior replies. Suppose this can be done. Let $(\s^k)_{k \in \mathbb{N}}$ be a sequence of equilibria of $G$ converging to $\s$. Passing to a subsequence if necessary, we can assume that the support $T$ of $\s^k$ is fixed and that all strategies in the support are best replies to $\s$. As such no strategy in $T$ can be deleted. Since $\s$ is the unique equilibrium of $\hat G$, there is (up to a subsequence) a fixed player $i$ and a fixed profitable best reply deviation $\t_i$ to $\s^k$ in $\hat G$. By continuity, $\t_i$ is a best reply to $\s$ in $\hat G$: a contradiction.}  Therefore, any extension of sustainability to non-generic games  needs to allow for solutions to be sets of equilibria, if we are to have existence. It is then natural to require that  solutions be connected: as argued by Mertens \cite{Mertens1989}, this requirement keep our selections ``minimal'' and, in addition, if the game is a generic extensive-form game, each solution would have a unique prediction in terms of the equilibrium outcome.  

A natural definition of sustainability for connected sets of equilibria calls for such a set to be sustainable if it is the entire set of equilibria of a game obtained by adding or deleting strategies that are inferior replies to {\it every} equilibrium in the set.  In such a definition, the component in which the candidate solution lies would survive these alterations and, hence, the connected sets would have to be entire components.\footnote{A strict subset $U$ of an equilibrium component $C$ can never be made the unique equilibrium set in a game $\hat G$ obtained from $G$ by adding/deleting inferior replies. The proof is the same as in the last footnote.}  Moreover, a sustainable component should also have index $+1$ (for the same reason as why sustainability for individual equilibria implies that their index is $+1$).  As there are games\footnote{\cite{HH2002} Fig. 8, or \cite{R2002} p. 325.}  where no component has index $+1$, there seems to be no reasonable way to extend sustainability to non-generic games without running afoul of the existence criterion.

Our way around this problem is to extend sustainability by an axiomatic procedure, rather than by a definition using the game-theoretic property that it attempts to capture.  To do that, we first axiomatize sustainability for regular equilibria and then add axioms that are to hold on the universal domain of finite games to  obtain a characterization of components with positive index.  Actually, the axiomatization of sustainability holds for a slightly larger class of games than  regular games and is derived from the following observation.  In the main theorem of the paper, even though we focus on regular equilibria, what really matters, as a careful reading of the proof shows, is one particular implication of regularity, namely that regular equilibria are isolated. Thus, we have proved a slightly stronger result: 

\begin{theorem}
	An isolated equilibrium of a finite game is sustainable iff its index is $+1$. 
\end{theorem}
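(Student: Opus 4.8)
The plan is to prove both implications by adapting, essentially word for word, the argument already given for Theorem~\ref{thm main}: a careful reading of Section~3 shows that regularity of $\s^*$ is used only through the single consequence that $\s^*$ is isolated, so the same proof delivers the stronger statement.

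For necessity I would argue as follows. Suppose the isolated equilibrium $\s^*$ of $G$ is sustainable, with $(G,\s^*)\sim(\bar G,\bar\s)$ and $\bar\s$ the unique equilibrium of $\bar G$. The index of an isolated equilibrium is well defined via a fixed-point map (Subsection~2.1), and it has the deletion property: the index of $\s^*$ in $G$ equals the index of $\s^*$ in the game $G^*$ obtained by deleting all strategies that are inferior replies to $\s^*$. Since $G^*$ is also obtained from $\bar G$ by deleting inferior replies to $\bar\s$, the same number equals the index of $\bar\s$ in $\bar G$; and because $\bar\s$ is the unique (hence isolated) equilibrium of $\bar G$, the sum-of-indices theorem forces that index to be $+1$. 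Hence $\s^*$ has index $+1$.

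For sufficiency, fix an isolated $\s^*$ of index $+1$. First I would record the consequence of isolatedness from Subsection~3.3: there is $\bar\e>0$ such that any equilibrium $\s\ne\s^*$ differs from $\s^*$, on the support of $\s^*$, by more than $\bar\e$ in at least two coordinates $n$ --- the proof uses only that $\s^*$ is isolated, since a discrepancy confined to one player would yield a whole segment of equilibria through $\s^*$. Then I would run the construction of Section~3: (i) using that $\s^*$ has index $+1$ and that the indices of the remaining fixed points of the modified Nash map sum to zero, the associated displacement has degree zero on a ball complementary to a neighbourhood of $\s^*$, so the Hopf Extension Theorem produces a continuous map $f^0$ that agrees with the Nash map near $\s^*$ and has $\s^*$ as its only fixed point; (ii) turn $f^0$ into a payoff-bonus function $g$ that vanishes on a small neighbourhood of $\s^*$ and makes $\s$ an equilibrium of $G\oplus g(\s)$ only when $\s=\s^*$; (iii) embed $G$ in a strategic-form game $\tilde G$ with auxiliary $\Theta_n$-coordinates so that $\s^*$ is the only equilibrium in which any player still uses the original face; (iv) build $\bar G^\d$ from $\tilde G$ by triangulating each $\S_n\times\Theta_n$, promoting vertices to pure strategies, and assigning payoffs with the five pieces of Subsection~3.10 --- original payoff, activated bonus $g^1$, the $\g_n$-term, the mimicking term $\pi_n$, and the convex penalty $-\d\rho_n$; (v) check that for small $\d>0$, $\bar\s^*$ is the unique equilibrium, reducing to $\d=0$ and the case analysis over which simplices $T_n$ meet the face $\S_n\times\Theta_n^1$. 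Every one of these steps is stated in Section~3 for regular $\s^*$ but invokes no property of $\s^*$ beyond isolatedness (existence of $\bar\e$, and isolatedness again in the limiting argument of Subsection~3.11).

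The hard part will be, as in Section~3, steps (iv)--(v): passing from the topologically constructed $f^0$ to a genuine game. Since $f^0$ need not be the Nash map of any game, it can only be encoded through the bonus $g$, and $g(\s)$ depends on the opponents' mixed strategies; the triangulation together with the mimicking payoffs $\pi_n$ are precisely what forces a best reply to lie inside a single small simplex, on which $g$ is nearly linear, so that an equilibrium of $\bar G^\d$ really does push $\s$ into the region where $g$ annihilates every equilibrium other than $\s^*$. Making the combinatorics of the triangulation (Appendix~A) compatible with the convex function $\rho_n$ and with the diameter bounds $\zeta$ is where the real effort lies --- and, again, regularity is nowhere needed.
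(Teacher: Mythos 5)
Your proposal is correct and takes exactly the approach the paper itself takes: the paper's entire ``proof'' of this theorem is the single observation (at the start of Section~5) that a careful reading of the proof of Theorem~\ref{thm main} shows regularity is used only through the consequence that $\s^*$ is isolated, and you have correctly traced through each step of Section~3 to verify that claim --- the $\bar\e$ argument in Subsection~3.3 (where the paper's footnote explicitly notes only isolatedness is used), the Hopf Extension step, the bonus function, the embedding $\tilde G$, the triangulated embedding $\bar G^\d$, and the limiting argument in Subsection~3.11, as well as the index-invariance argument for necessity.
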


Let $\calG$ be the class of all finite games and let $\calG^*$ be the subset consisting of all games where each equilibrium is isolated and has either index $+1$ or index $-1$. $\calG^*$ includes all regular games, but it does not coincide with the set of all games with finitely many equilibria. The following game $G_3$ has two isolated equilibria, neither of which is $+1$ or $-1$.

 $$W: \begin{array}{cccc}
\multicolumn{1}{c}{} &\multicolumn{1}{c}{L} &\multicolumn{1}{c}{M}  &\multicolumn{1}{c}{R} \\
\cline{2-4}
\multicolumn{1}{c}{A} &\multicolumn{1}{|c}{(0,3,3)} &
\multicolumn{1}{|c|}{(0,3,3)} &\multicolumn{1}{|c|}{(3,0,3)}  \\
\cline{2-4}
\multicolumn{1}{c}{B} &\multicolumn{1}{|c}{(3,3,0)} &
\multicolumn{1}{|c|}{(3,3,0)} &\multicolumn{1}{|c|}{(3,3,3)} \\
\cline{2-4}
\multicolumn{1}{c}{C} &\multicolumn{1}{|c}{(6,6,3)} &
\multicolumn{1}{|c|}{(2,2,3)} &\multicolumn{1}{|c|}{(0,7,3)} \\
\cline{2-4}
\multicolumn{1}{c}{D} &\multicolumn{1}{|c}{(2,2,3)} &
\multicolumn{1}{|c|}{(6,6,3)} &\multicolumn{1}{|c|}{(0,0,3)} \\
\cline{2-4}
\multicolumn{1}{c}{E} &\multicolumn{1}{|c}{(0,0,3)} &
\multicolumn{1}{|c|}{(7,0,3)} &\multicolumn{1}{|c|}{(1,1,3)} \\
\cline{2-4}
\end{array}\qquad
O:\; \begin{array}{cccc}
\multicolumn{1}{c}{} &\multicolumn{1}{c}{L} &\multicolumn{1}{c}{M}  &\multicolumn{1}{c}{R} \\
\cline{2-4}
\multicolumn{1}{c}{A} &\multicolumn{1}{|c}{(0,3,3)} &
\multicolumn{1}{|c|}{(0,3,3)} &\multicolumn{1}{|c|}{(3,0,0)}  \\
\cline{2-4}
\multicolumn{1}{c}{B} &\multicolumn{1}{|c}{(3,0,3)} &
\multicolumn{1}{|c|}{(3,0,3)} &\multicolumn{1}{|c|}{(0,3,3)} \\
\cline{2-4}
\multicolumn{1}{c}{C} &\multicolumn{1}{|c}{(6,6,0)} &
\multicolumn{1}{|c|}{(2,2,0)} &\multicolumn{1}{|c|}{(0,7,0)} \\
\cline{2-4}
\multicolumn{1}{c}{D} &\multicolumn{1}{|c}{(2,2,0)} &
\multicolumn{1}{|c|}{(6,6,0)} &\multicolumn{1}{|c|}{(0,0,0)} \\
\cline{2-4}
\multicolumn{1}{c}{E} &\multicolumn{1}{|c}{(0,0,0)} &
\multicolumn{1}{|c|}{(7,0,0)} &\multicolumn{1}{|c|}{(1,1,0)} \\
\cline{2-4}
\end{array}
$$
\medskip

Game $G_3$ has two isolated equilibria, one mixed $\sigma=(\frac{1}{2}C+\frac{1}{2}D;\frac{1}{2}L+\frac{1}{2}M,W)$, and one pure $\tau=(B,R,W)$. It is easy to check that $\sigma$ is quasi-strict, and so one can compute that its index is $-1$ by a restriction to its support. Thus, the index of  $\tau$ is $+2$ and so no equilibrium in $G_3$ can be made unique in an equivalent pair.\footnote{Our game is inspired by the two player game in \cite{HH2002} fig 8 in which the $+2$ index was a component. By adding a player and strategies we can kill all equilibria of that component except $\tau$.}

A solution concept $\Phi$ on the domain $\calG^*$ assigns to each game $G \in \calG^*$ a collection of equilibria (called {\it solutions of $G$}).  Let $\Phi^*$ be the solution concept that assigns to each game $G$ in $\calG^*$ the equilibria with index $+1$.  $\Phi^*$ is then the unique solution concept that satisfies the following axioms for a solution concept $\Phi$ on the domain $\calG^*$.\footnote{These axioms have been explicitly or implicitly stated in Myerson \cite{M1996} and Hofbauer \cite{H2000}.  In fact, Hofbauer's conjecture is based on a combination of A1, A2 and A3.}

\begin{itemize}
\item \textbf{$A1$ Existence:}  Every game in $\calG^*$ has a solution.
\item \textbf{$A2$ IIA:} If $(G^1, \s^1) \sim (G^2, \s^2)$, where for $i = 1, 2$, $G^i \in \calG^*$, then $\s^1 \in \Phi(G^1)$ iff $\s^2 \in \Phi(G^2)$.  
\item \textbf{$A3$ Minimality}: If $\bar \Phi$ is another solution concept satisfying the first two axioms, then every solution assigned by $\Phi$ is a solution of $\bar \Phi$.
\end{itemize}

Consider now the following axioms for an extension $\Phi$ of $\Phi^*$ to $\calG$

\begin{itemize}
\item\textbf{$A1^+$ Existence}: Every game $G \in \calG$ has a solution.
\item\textbf{$A2^+$ IIA}: A solution of a game $G$ is also a solution of any game $\bar G$ obtained from the game $G$ by the addition and/or deletion of strategies that are inferior replies to every equilibrium in the solution.
\item \textbf{$A4$ Connectedness}: Every solution is a connected subset of Nash equilibria.
\item \textbf{$A5$ Invariance}: Equivalent games have equivalent solutions.\footnote{Two games are equivalent if one can be obtained from the other by addition/deletion of duplicate of mixed strategies.}
\item \textbf{$A6$ Robustness}: Every game that is nearby in the space of payoffs has a nearby solution.
\end{itemize}

Axioms $A1^+$ and $A2^+$ are the natural statements of the corresponding axioms for the domain $\calG$. Observe that we dispense with a counterpart of Axiom 3 since we are not directly extending sustainability but only its prescription for generic games. Axioms A4 and A5 are standard in the strategic stability literature (Kohlberg and Mertens \cite{KM1986}).  We have already discussed the reasonableness of Axiom $A4$.  Axiom A5 requires that the solution depend only on the reduced normal form and thus be invariant to irrelevant changes in the extensive form description of the game or to the addition/deletion of duplicate strategies (mixtures of pure strategies). Axiom A6 is really a restatement of hyperstability  \cite{KM1986}. 
The next proposition shows that there is a solution concept extending $\Phi^*$ to $\calG$ and satisfying all our axioms, namely the positive index Nash components. Afterwards it is proved that it is the unique concept satisfying $A1^+$, $A4$ and a combination of $A5$ and $A6$\textit{}. 

\begin{proposition}\label{Prop positive index}
The solution concept $\Phi^+$ that associates to each game its set of positive index Nash components extends $\Phi^*$ to $\calG \backslash \calG^*$ and satisfies $A1^+$, $A2^+$, $A4$, $A5$, and $A6$.
\end{proposition}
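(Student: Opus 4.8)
The plan is to verify in turn that $\Phi^+$ restricts to $\Phi^*$ on $\calG^*$ and that it satisfies the five axioms on $\calG$, relying throughout on three facts from the index/degree theory recalled in Section~\ref{indexofequilibria}: \emph{(i)} the indices of the Nash components of any finite game sum to $+1$; \emph{(ii)} the index of a component is unchanged when one adds or deletes strategies that are inferior replies to every equilibrium of that component, and also when one adds or deletes duplicates of mixed strategies; and \emph{(iii)} the Nash correspondence is upper hemicontinuous and, for a game near $G$, the indices of the equilibrium components lying in a small neighbourhood of a given component $C$ of $G$ sum to the index of $C$. That $\Phi^+$ extends $\Phi^*$ is then immediate: in a game of $\calG^*$ every equilibrium is isolated with index $+1$ or $-1$, so every Nash component is a singleton and the positive-index components are precisely the index-$+1$ equilibria prescribed by $\Phi^*$.

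Axioms $A1^+$, $A4$ and $A5$ are quick. $A1^+$ follows from \emph{(i)}, since the component indices sum to $+1$ and hence at least one component has positive index, so $\Phi^+(G)\neq\emptyset$. $A4$ holds by definition, the members of $\Phi^+(G)$ being connected components of the compact set $\mathrm{NE}(G)$ of Nash equilibria of $G$. For $A5$, if $\bar G$ is obtained from $G$ by adding or deleting duplicates of mixed strategies, the quotient identification restricts to a proper surjection $\mathrm{NE}(\bar G)\to\mathrm{NE}(G)$ with connected (simplicial) fibres, hence induces a bijection between the Nash components of the two games; by \emph{(ii)} this bijection preserves the index, so $\Phi^+$ assigns to equivalent games solutions that correspond under the identification.

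The substantive axiom is $A2^+$. Let $C\in\Phi^+(G)$ and let $\bar G$ be obtained from $G$ by adding and/or deleting strategies that are inferior replies to every $\s\in C$; performing the two operations in succession, we may assume a single one is carried out. Since $C$ is compact and the payoffs are continuous, each such strategy is a \emph{strict} inferior reply on an entire neighbourhood of $C$; hence on that neighbourhood no equilibrium of $\bar G$ can place positive probability on an added strategy, while deleting a strategy that already received zero probability in $C$ creates no profitable deviation there. Consequently $G$ and $\bar G$ have exactly the same equilibria on some neighbourhood $U$ of $C$. Because $C$ is open and closed in $\mathrm{NE}(G)$, we may shrink $U$ so that $U\cap\mathrm{NE}(G)=C$; then $U\cap\mathrm{NE}(\bar G)=C$ as well, so $C$ is open, closed and connected in $\mathrm{NE}(\bar G)$, i.e.\ a single Nash component of $\bar G$, and by \emph{(ii)} it has the same (positive) index there as in $G$. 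Hence $C\in\Phi^+(\bar G)$.

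Finally, for $A6$, fix $G$, a component $C\in\Phi^+(G)$, and a neighbourhood $W$ of $C$; shrinking $W$ if necessary, assume $\overline W\cap\mathrm{NE}(G)=C$. By upper hemicontinuity of the Nash correspondence, for every $G'$ sufficiently close to $G$ the set $\mathrm{NE}(G')\cap W$ is open and closed in $\mathrm{NE}(G')$, hence a union of Nash components of $G'$ all contained in $W$; by \emph{(iii)} their indices sum to the index of $C$, which is positive, so at least one of them is a positive-index Nash component of $G'$, i.e.\ a solution of $G'$, lying in $W$. This is the robustness asserted by $A6$ (a form of the hyperstability of $\Phi^+$, in the spirit of \cite{GW2005}). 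I expect the real work to be concentrated in $A2^+$ and its analogue inside $A6$: one must pass from the classical invariance of the index of an \emph{isolated} equilibrium under inferior-reply perturbations to the corresponding statement for an arbitrary component, and the only delicate point is the localization step that uses compactness of $C$ to make the perturbing strategies strictly inferior on a neighbourhood of $C$, so that $G$ and $\bar G$ share an equilibrium set there; once that is in place, the invariance and additivity properties recalled in Section~\ref{indexofequilibria} complete the argument, the remaining axioms being routine unwindings of the definitions.
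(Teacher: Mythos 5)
Your proposal is correct and follows essentially the same route as the paper's (very terse) proof: $A1^+$ from the index-sum theorem, $A4$ by definition, $A5$/$A2^+$ from the invariance of the index to addition/deletion of duplicates and inferior replies (Govindan--Wilson), and $A6$ from essentiality of nonzero-index components (O'Neill). You simply unpack the citations the paper invokes in a single paragraph; in particular, your localization argument for $A2^+$ and the index-additivity step in $A6$ (which ensures a nearby \emph{positive-index} component, not merely a nearby equilibrium) make explicit what the paper leaves implicit.
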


\begin{proof} Obviously, the restriction of $\Phi^+$ to $\calG^*$ is $\Phi^*$ and is thus an extension to $\calG$.   $A1^+$ holds because the sum of indices across the finitely many Nash components is +1 (Ritzberger \cite{R1994}).  Axiom (A4) holds as positive index Nash components are closed and connected subsets of Nash equilibria.  Axiom (A5) is a consequence of the invariance of the index to the addition/deletion of duplicate strategies (Govindan and Wilson \cite{GW2005}, Theorem 5), the same argument also proves $A2^+$ as well.  Axiom A6 is a consequence of the fact that if the index is nonzero, then the component of equilibria is essential in the sense of O'Neill \cite{O1953}. 
\end{proof}

The solution concept that to games in $\calG^*$ assigns equilibria with index +1 and to other games all components with nonzero index is also an extension. It is not clear if our axioms rule out such solution concepts.  However, we now show that if we strenghthen  Axioms A5 and A6---in a sense, conflating them---then we obtain a characterization of $\Phi^+$. 

We will now impose this property for an extension $\Phi$ of $\Phi^*$.
\begin{itemize}
	\item \textbf{$A7$  Uniform Robustness:} For every game $G$, every solution $C$ and  every neighbourhood $U$ of $C$, there exists $\delta >0$ such that every $\delta$-perturbation of the payoffs in every strategically equivalent game $G'$ yields to a game $G'_{\delta}$ that has a solution that is equivalent to a subset in $U$.
\end{itemize}

If the  term ``solution'' is replaced by ``Nash equilibrium'' we obtain precisely the uniform hyperstability concept defined in Govindan and Wilson \cite{GW2005} where they proved that non-zero index Nash components are the only connected uniformly hyperstable sets. Adapting their tools allows to show the following axiomatic characterization of positive index Nash components.\footnote{A closed subset of Nash equilibria $C$ of a finite game $G$ is hyperstable if for any strategically equivalent game $G'$ and every neighbourhood $U'$ of the equivalent set $C'$ of equilibria, there exists a neighbourhood $V'$ of $G'$ such that every game in $V'$ has a Nash equilibrium in $U'$.  We conjecture that non-zero index Nash components are the only connected hyperstable sets in generic extensive form games, giving us the analogous conjecture in our context: A7 is implied by A5 and A6 for generic extensive-form games.} This theorem suggests that in non-regular games, sustainable equilibria are the components of equilibria with positive index. 

\begin{theorem}\label{mainthm}
$\Phi^+$ satisfies A7.  Moreover, any extension of $\Phi^*$ to $\calG$ that satisfies A4 and A7 must select from among the solutions of $\Phi^+$.
\end{theorem}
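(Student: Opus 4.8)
\emph{Overview.} I would prove the two assertions in turn, relying in both on the uniform‑hyperstability apparatus of Govindan and Wilson \cite{GW2005}; the second assertion will in addition use a key lemma, proved in the second appendix, that adapts their strategy‑duplication construction to localise a prescribed perturbation of the Nash fixed‑point map near a component.

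\textbf{$\Phi^+$ satisfies A7.} Fix a game $G$, a positive‑index component $C$ with $d:=\mathrm{ind}(C)\ge 1$, and a neighbourhood $U$ of $C$; shrink it to a neighbourhood $W$ of $C$ whose closure meets the equilibrium set of $G$ only in $C$. By invariance of the index under addition and deletion of duplicate strategies (\cite{GW2005}, Theorem 5), in any strategically equivalent game $G'$ the component $C'$ corresponding to $C$ still has index $d$. Local conservation of the fixed‑point index then gives, for $\delta>0$ small: $G'_\delta$ has no equilibria on $\partial W$, so its equilibria inside $W$ form a union of whole components of $G'_\delta$ whose indices sum to $d$; since $d\ge 1$, one of these is a positive‑index component of $G'_\delta$ contained in $W$, hence a solution of $\Phi^+(G'_\delta)$ that is equivalent to a subset of $U$. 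The only point requiring care is that a single $\delta$ must work for all strategically equivalent $G'$ simultaneously; this is exactly the uniform estimate that \cite{GW2005} establish in proving that nonzero‑index components are uniformly hyperstable, and the index‑summation argument above runs with that same $\delta$.

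\textbf{Any extension of $\Phi^*$ satisfying A4 and A7 selects from $\Phi^+$.} Let $\Phi$ be such an extension and let $C$ be a solution of $G\in\calG$. By A4, $C$ is a connected (closed) set of Nash equilibria, and A7 makes it uniformly hyperstable in the sense of \cite{GW2005}: any point of the $\Phi$‑solution that A7 furnishes in a neighbourhood of $C$ is a Nash equilibrium of the perturbed game there. By the characterisation of \cite{GW2005}, $C$ is thus a Nash component with $\mathrm{ind}(C)\ne 0$, so it remains to exclude $\mathrm{ind}(C)\le -1$. Suppose $d:=\mathrm{ind}(C)\le -1$; choose a neighbourhood $U$ of $C$ whose closure is disjoint from the other equilibrium components of $G$, and let $\delta>0$ be the constant A7 associates to $(G,C,U)$. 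Now I invoke the key lemma: by the Hopf Extension Theorem---exactly as in the proof of Theorem \ref{thm main}---there is a map on a small ball about $C$ that agrees with the Nash displacement map on the bounding sphere and whose zero set consists of $|d|$ transverse zeros each of local index $-1$ (admissible, as these sum to $d$), and the lemma realises this, via passage to a suitable strategically equivalent $G'$ and an arbitrarily small payoff perturbation, as a regular game $G'_\delta$ whose equilibria near $C$ are precisely those $|d|$ index‑$(-1)$ points. Taking the perturbation smaller than $\delta$: A7 forces $\Phi(G'_\delta)$ to contain a solution equivalent to a subset of $U$; but $G'_\delta\in\calG^*$, so $\Phi(G'_\delta)=\Phi^*(G'_\delta)$ is a set of index‑$+1$ equilibria, while every equilibrium of $G'_\delta$ in $U$ has index $-1$---a contradiction. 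Hence $\mathrm{ind}(C)\ge 1$, i.e.\ $C\in\Phi^+(G)$.

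\textbf{Main obstacle.} The crux is the key lemma of the second appendix: converting a prescribed, homotopically admissible perturbation of the Nash fixed‑point map, supported near the component $C$, into an arbitrarily small payoff perturbation of a strategically equivalent game. This is where strategy duplication (to supply the extra dimensions), the Govindan--Wilson realisation technique, and the Hopf Extension Theorem come together, paralleling the construction behind Theorem \ref{thm main}; keeping the perturbation small while pinning down all the equilibria near $C$ is the genuinely technical step, which is why it is deferred to the appendix. By comparison, the first assertion and the reduction of ``A4, A7, and extending $\Phi^*$'' to ``connected uniformly hyperstable set'' are routine once \cite{GW2005} is in hand.
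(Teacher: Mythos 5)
Your proposal is correct and follows essentially the same route as the paper: the first assertion rests on the Govindan--Wilson uniform hyperstability machinery and local index conservation, and the second combines the GW characterization (to pin down that solutions must be entire nonzero-index components) with the appendix Lemma \ref{lemma1} (Hopf extension plus realization as a small perturbation of a strategically equivalent game in $\calG^*$ whose nearby equilibria all have index $-1$) to rule out negative index. You are slightly more explicit than the paper on two minor points---why a positive-index component of $G'_\delta$ must land inside $W$ (index summation over the components trapped in $W$) and why zero-index components are excluded---but the decomposition, the key lemma, and the final contradiction are identical.
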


\begin{proof} If a solution $C$ for a game $G \notin \calG^*$ is not an entire component, then the proof in Govindan and Wilson \cite{GW2005} applies to show that $C$ cannot be uniformly robust. Therefore, a solution must be an entire connected component of equilibria.  If $C$ is a component with negative index, then Lemma \ref{lemma1} in the Appendix implies that  there is a neighbourhood $U$ of $C$ such that for every $\delta>0$, there exists an equivalent game $\bar G$ to $G$ and a $\d$-perturbation $\bar G^{\d}$ of $\bar G$ with finitely many equilibria of index $+1$ and $-1$ such that all the equilibria of $\bar G^{\d}$ in $\bar U$ (the equivalent neighbourhood to $U$) have index $-1$ and so are not ``solutions''. Thus $C$ must have a positive index.  Conversely, that positive index components are uniformly robust follows from Govindan and Wilson \cite{GW2005}. \end{proof}

\section{Conclusion}

Most existing \textit{refinement} concepts that are guaranteed to exist, such as perfection \cite{S1975}, properness \cite{M1978},  stable equilibria \cite{KM1986, Mertens1989, H1990} only refine equilibria of nongeneric games, i.e., every equilibrium of a regular game survives these refinement criteria.\footnote{An exception is persistent equilibrium \cite{KS1984} which eliminates the mixed equilibrium in the battle of the sexes; but, as noted by Myerson, it violates invariance \cite{B1992}.}  On the other hand, the procedure proposed by Harsany and Selten \cite{HS1988} \textit{selects} a unique equilibrium in every game.  Sustainability lies somewhere in between refinement and selection: it slices by half the set of predictions in regular games, as it disregards all -1 index equilibria.

In nongeneric games,  one argument against components with negative index is that they are dynamically unstable for all Nash dynamics \cite{DG2000, M2018}.  Our approach provides an axiomatic argument against these components.  Moreover, the selection among components with a positive index can be made compatible with all of the above-mentioned refinements. Indeed, it is known that components with nonzero index always contain stable sets in the sense of Mertens (and hence also those that are stable in the sense of Kohlberg-Mertens or Hillas) and thus contain proper equilibria  and sequential equilibria of every extensive form game with that normal form \cite{KM1986, vD1987}. In addition, there are positive index components that are persistent.\footnote{This is because the best reply correspondence maps a neighbourhood of a persistent retract to it, so the sum of the indices in a retract is $+1$ and so there is at least one component in the persistant retract with a positive index.}

We alluded to a number of open problems in the paper.  One of the most important of these is the question of whether we can retain axioms $A5$ and $A6$, and eliminate $A7$---at least on the domain of generic games in extensive-form---to select components with a positive index. A second question that is intriguing  is the effect of adding more players to the game.  For example, we can view an $N$-person game as, for example, an $(N+1)$-person game by treating player $N+1$ as a dummy player and then considering equivalences of game-equilibrium pairs involving these $N+1$ players. How would this more expansive notion of equivalence compare with what we know so far? Finally, one wonders whether the selection from among the components with a positive index by invoking some additional criterion like persistent equilibria \cite{KS1984} or settled equilibria \cite{MW2015} could lead to an equilibrium notion that  addresses some of the shortcomings of using only the positivity of index.  Grounds for optimism here come chiefly from examples like game $G_1$ in the Introduction, where the completely mixed equilibrium has index $+1$ and is unstable for all natural dynamics, but it is neither persistent nor settled.

\appendix
\section{Delaunay Triangulations}

Here we construct a triangulation $\T_n$ of $\S_n \times \Theta_n$ for each $n$ with the properties stated in subsection \ref{lastembedding}. We start with some definitions.

A {\it simplex} $T$ in $\Re^d$ is the convex hull of affinely independent points $x_0, x_1, \ldots, x_k$ $(k \le d$); a {\it face} of $T$ is the convex hull of a subset of the points $x_i$. A {\it triangulation} $\T$ of a polytope $C \subset \Re^d$ is a finite collection of simplices $T$ in $\Re^d$ such that: (1) if $T \in \T$, so is every face of $T$; (2) the intersection of two simplices in $\T$ is a face of both (possibly empty); (3) the union of the simplices in $\T$ equals $C$.

Throughout this Appendix, we use the $\ell_2$-norm, unless we specify differently. Take a finite collection of points $\{\, x_0, x_1, \ldots, x_k \, \}$ in $\Re^d$ (where $k$ is now an arbitrary positive integer) such that its convex hull $C$ is $d$-dimensional.  Suppose that the $x_i$'s are  in \textit{general position for spheres} in $\Re^d$---i.e., no subcollection of $d+2$ of these points lie on any $(d-1)$-sphere (centered at any point and of any radius) in $\Re^d$. We can construct a triangulation of the convex hull $C$, called the {\it Delaunay triangulation}, as follows (cf. Loera et al \cite{LRS2010} for details). Let $D$ be the convex hull of the set of points $(x_i, {\Vert x_i \Vert}^2) \in \Re^{d+1}$, $i= 0,1, \ldots k$.  Let $D_0$ be the lower envelope of $D$. The natural projection $(x,y) \mapsto x$ from $D_0$ to $\Re^d$ is $C$ and $D_0$ is the graph of a piece-wise linear and convex function $\rho: C \to \Re$ with the property that the subsets on which $\rho$ is linear are simplices, whose projections then yield the simplices of a triangulation of $C$. 

There is a dual representation of the Delaunay triangulation, known as the {\it Voronoi Diagram}, which works as follows. For each $i= 0,1, \ldots k$, let $P_i$ be the polyhedron in $\Re^d$ consisting of points $y$ in $\Re^d$ such that ${\Vert y - x_i \Vert} \le {\Vert y - x_j \Vert}$ for all $j \neq i$. We then have a polyhedral complex (which is like a simplicial complex but with polyhedra rather than simplices) where the maximal polyhedra are the $P_i$. There is an edge between two vertices $x_i$ and $x_j$ in the Delaunay triangulation iff the polyhedra $P_i$ and $P_j$ have a nonempty intersection. Also, the intersection of $d+1$ of these polyhedra when nonempty is a single point (because of genericity), which is then the center of a ball that contains $d+1$ points of the collection on its boundary and no other point in the ball itself---these $d+1$ points span a $d$-dimensional simplex in the Delaunay triangulation.

For our purposes, we need a triangulation with the diameter of certain simplices to be smaller than $\zeta$, as specified in Subsection \ref{lastembedding}. To obtain that, we need an auxiliary construction. Let $C$ be a full-dimensional polytope in $\Re^d$.  Let $B_0$ be a proper face of $C$ and let $H$ be a hyperplane that strictly separates $B_0$ from the vertices of $C$ that are not in $B_0$.  Let $B$ be the intersection of $C$ with the halfspace generated by $H$ that contains $B_0$ in its interior, i.e., $B$ is of the form $C \cap H_{-}$ where $H_{-} = \{\, x \in \Re^d \mid a \cdot x \le b \,\}$ and $a \cdot x < b$ for all $x \in B_0$. Let $X$ be the set of vertices of $C$ and let $Y$ be the set of vertices of $B$ that are not in $X$.  Suppose the vertices in $X \cup Y$ satisfy the following property $(P)$: for each $0 < k \le d$, and each collection  $v_0, \ldots, v_{k+1}$ of vertices in $X \cup Y$ that are not affinely independent,  the intersection of the Voronoi polyhedra of these $k+2$ vertices is empty. This assumption implies, a.o., by taking $k = d$, that the vertices in $X \cup Y$ are in general position for a Delaunay triangulation of $C$. Beyond that, the stronger assumption allows us to obtain refined Delaunay triangulations of $C$. Specifically, if we add a collection $Z$ of points in $C$ such that if each point $z \in Z$ is in generic position in the face of $C$ or $H\cap C$ that contains it in its interior---i.e., outside a nowhere dense subset of this face---then the collection $X \cup Y \cup Z$ is in general position for spheres and we can construct a Delaunay triangulation with this set of vertices. Indeed, suppose the points in $Z$ are chosen generically and suppose $v_0, \ldots, v_{d+1}$ is a collection of vertices in $X \cup Y \cup Z$.  There exists $0 < k \le d$ such that, after permuting the labels of the collection if necessary, $v_0 = \sum_{i = 1}^{k+1} \l_i v_i$, $\l_i \neq 0 $ for all $i > 0$ and $\sum_i \l_i = 1$. By Property (P), the Voronoi polyhedra of the vertices  do not intersect if all the vertices belong to $X \cup Y$.  If there is some vertex in the collection that belongs to $Z$, then we can assume that $v_0$ belongs to $Z$. If $v_1, \ldots, v_{k+1}$ do not span a face of $C$ or $H \cap C$, then clearly we can perturb $v_0$ so that it does not lie in the affine space of the other $v_i$'s, which contradicts the assumption that $v_0$ was chosen in a generic position.  On the other hand, if $v_1, \ldots, v_{k+1}$ do span a face of $C$ or $H \cap C$, then for generic choice of $v_0$ and also of the $v_i$'s in the list $v_1, \ldots, v_{k+1}$ that are not in $X \cup Y$, their Voronoi polyhedra do not intersect.  Thus, Property (P) allows to us refine the initial Delaunay triangulation of $C$ (involving vertices $X \cup Y$).

Let $\d> 0$ be such that ${\Vert x - y \Vert} \ge \d/2$, for all $x \in B$ and vertices $y \in C \backslash B$. Let $X_\d$ be a finite collection of points in $C$ such that: (1) $X \cup Y \subset X_\d$ and  $X_\d \cap (C \backslash B) \subset X$; (2) for $x \in B$, there is a point $x_\d  \in B \cap X_\d$ such that ${\Vert x - x_\d \Vert} < \d/2$ and $x_\d$ belongs to the face of $B$ that contains $x$ in its interior; (3) every point in $\text{int}_C(B) \cap X_\d$ is at least $\d/2$ from $\partial_C B$;  (4) the points in $X_\d$ are in general position for spheres.  Call $\T_\d$ the associated Delaunay triangulation of $C$.  

The triangulation  $\T_\d$ above achieves two properties: (i) every simplex with vertices in $B$ has diameter at most $\d$; (ii) every simplex of $\T_\d$ that has a vertex outside $B$ does not intersect $\int_{C}(B)$. To prove these properties, define $r: \Re^d \to B$ by letting $r(x)$ be the point in $B$ that is closest to $x$. If $r(x) \neq x$, $r(x)$ belongs to a proper face of $B$, and then we can write  $r(x)$ as $x - p$ where $p$ is a normal for a supporting hyperplane at $r(x)$  with $p \cdot r(x) \ge p \cdot y$ for all $y \in B$. If in addition $r(x) \in \int_C(B)$, then $r(x)$ is at the boundary of $C$ and so $p\cdot r(x) \ge p \cdot y$ for all $y \in C$ as well. Suppose $r(x) \neq x$ and let $r(x) = x - p$.  Let  $y$ be a point such that $p\cdot y \le p \cdot r(x)$. Let $z$ be the nearest-point projection of $y$ onto the line from $x$ through $r(x)$. Then 
\[
{\Vert x - y\Vert }^2 = {({\Vert x - r(x) \Vert} + {\Vert r(x) - z \Vert})}^2 + {\Vert z  - y\Vert}^2 \ge {\Vert r(x) - x \Vert}^2 + {\Vert r(x) - y \Vert}^2,
\] 
with the inequality being an equality iff $z = r(x)$, i.e., $p\cdot y = p\cdot r(x)$.

We are now ready to prove that $\T_\d$ has the requisite properties. Let $x_\d$ be a point in $X_\d \cap B$ and let $x$ be a point in $\Re^d$ that belongs to the Voronoi polyhedron $P(x_\d)$ of $x_\d$. We claim that ${\Vert r(x) - x_\d \Vert} < \d/2$.  If $r(x) = x$, this follows directly from Property (2) of $X_\d$. Suppose that $r(x) \neq x$. Then $r(x)$ belongs to the interior of a proper face $B'$ of $B$ and as we saw in the last paragraph, $r(x)$ can be written as $x - p$.  By definition of $r(x)$, $p \cdot x_\d \le p \cdot r(x)$ and thus:  ${\Vert x - x_\d\Vert }^2  \ge {\Vert r(x) - x \Vert}^2 + {\Vert r(x) - x_\d\Vert}^2$.   
By Property (2), there exists $y_\d$  in $B' \cap X_\d$ such that $\Vert r(x) - y_\d\Vert <\d/2$.  Obviously $p\cdot y_\d = p \cdot r(x)$ and since $x \in P(x_\d)$, it follows that ${\Vert x - x_{\d} \Vert}^2 \le {\Vert x - y_\d \Vert}^2 < {\Vert r(x) - x \Vert}^2 + \d^2/4$; therefore, ${\Vert r(x) - x_\d \Vert} < \d/2$, as claimed. Observe that we proved that ${\Vert x - x_\d \Vert}^2 < {\Vert r(x) - x \Vert}^2 + \d^2/4$, a fact we will use below.

From the above paragraph, for each $x_\d \in X_\d \cap B$ and each $x \in P(x_\d)$, the distance between $r(x)$ and $x_\d$ is less than $\d/2$; We claim finally that the diameter of each simplex with vertices in $B$ is less than $\d$. Indeed, letting $x_\d$ and $y_\d$ be two vertices of a simplex in $B$, then their Voronoi cells intersect, so we can take $x$ in the intersection. Since $r(x)$ is of distance $\d/2$ from $x_\d$ and $\d/2$ from $y_\d$, $x_\d$ and $y_\d$ are distant less than $\d$. This concludes the proof that $\T_{\d}$ satisfies  (i). 

We now prove that $\T_\d$ satisfies (ii): for this, it is sufficient to show  that the intersection of $P(x_\d)$ and $P(y_\d)$ is empty for all $x_\d \in \int_C (B) \cap X_\d$ and $y_\d$ in $X_\d \backslash B$. Take such a pair $x_\d, y_\d$.  Fix  $x \in P(x_\d)$. If $r(x) = x$, then ${\Vert x - x_\d \Vert} < \d/2$, while by the definition of $\d$, ${\Vert x - y_\d \Vert} \ge \d/2$  and thus $x \notin P(y_\d)$.
Suppose $r(x) \neq x$.  Since $x_\d \in \int_C(B)$, by Property (3) of the set $X_\d$,  $r(x)$ cannot belong to $\partial_C B$, since in this case the distance between $x_\d$ and $r(x)$ is greater than $\d/2$. Therefore, $r(x)$ belongs to a face of $C$.  Writing $r(x)$ as $x - p$, we then have  that $p$ is a normal to a hyperplane containing one of the faces of $C$ and thus $p\cdot y_\d \le p\cdot r(x)$. Hence, ${\Vert x - y_\d\Vert }^2  \ge {\Vert r(x) - x \Vert}^2 + {\Vert r(x) - y_\d\Vert}^2 \ge {\Vert r(x) - x \Vert}^2  + \d^2/4$ by the definition of $\d$, while as we saw in the previous paragraph, ${\Vert x - x_\d \Vert}^2 < {\Vert r(x) - x \Vert}^2 + \d^2/4$; thus again $x \notin P(y_\d)$ and we are done.

For our problem of triangulating $\S_n \times \Theta_n$, we recall the properties that the triangulation should satisfy: (1) The only vertices in $\S_n \times \{\, \theta_n^0 \, \}$ of $\T_n$ are pure strategies $(s_n, \theta_n^0)$, $s_n \in S_n$; (2) letting $\Theta_n^1$ be the face of $\Theta_n$ where $\theta_n^0$ has zero probability, if $T_n\in \T_n$ is a simplex either with a face in $\S_n \times \Theta_n^1$, or shares a face with such a simplex,  then the diameter of $T_n$ is less than $\zeta$; (3) there exists a convex function $\rho_n: \S_n \times \Theta_n \to \Re_+$ such that: (a) $\rho_n(\l x + (1-\l) y) = \l \rho_n(x) + (1-\l) \rho_n(y)$ iff $x$ and $y$ belong to a simplex $T_n$ of $\T_n$; (b) $\rho_n^{-1}(0) = \S_n \times \{\, \theta_n^0 \, \}$. 

Let $\hat S_n^1 \equiv S_n \times \{\, \theta_n^0 \, \}$. Let $T_n$ be the set of vertices of $\Theta_n$ and $\hat S_n^2  \equiv \{\, s_n^0 \, \} \times (T_n \backslash \{\, \theta_n^0\})$. Let $\hat S_n \equiv \hat S_n^1 \cup \hat S_n^2 \setminus \{(s^0_n, \theta^1_n) \}$ where $\theta^1_n$ is a vertex of $\Theta_n$ that is different from $\theta^0_n$. Note that $d \equiv \text{dim}(\S_n \times \Theta_n) = |S_n| + |T_n| -2 = |\hat S_n|$. For each $\hat s^1_n \in \hat S_n^1$, let $x(\hat s_n^1)$ be the unit vector in $\Re^{\hat S_n}$ for the coordinate $\hat s_n^1$; for each $\hat s^2_n \in \hat S_n^2$, let $x(\hat s_n^2)$ be a point in $\Re^{\hat S_n}$ to be determined later. Let $X \equiv \{x(\hat s^2_n)\}_{\hat s^2_n \in \hat S^2_n}$.

Define an affine function $F_n^X :  \S_n \times \Theta_n \to \Re^{\hat S_n}$ as follows:  for each $\hat s_n \in \hat S_n$, $F_n^X (\hat s_n) = x(\hat s_n)$; for a vertex $(s_n, \theta_n)$ of $\S_n \times \Theta_n$ that is not in $\hat S^1_n \cup \hat S^2_n$, define $F_n^X (s_n, \theta_n) = F_n^X(s_n, \theta_n^0) + F_n^X(s_n^0, \theta_n) - F_n^X(s_n^0, \theta_n^0)$. The map $F^X_n$ extends to the whole of $\S_n \times \Theta_n$ by linear interpolation.  If the collection $X \cup \{x(\hat s^1_n)\}_{\hat s^1_n \in \hat S^1_n}$ is affinely independent (which holds for an open and dense set of choices for $X$), then $F_n^X$ is an affine homeomorphism with its image $C(X) \equiv F^X_n(\S_n \times \Theta_n)$ and the dimension of $C(X)$ is $|\hat S_n|$.  

Let $H$ be a hyperplane in $\Re^{S_n \times T_n}$ that strictly separates $\S_n \times \Theta_n^1$ from $\S_n \times \{\, \theta_n^0\, \}$ and such that the distance between $H \cap (\S_n \times \Theta_n)$ and $\S_n \times \Theta_n^1$ is less than $\zeta$.  Every vertex of $H \cap (\S_n \times \Theta_n)$ is of the form $(1-\e_{s_n, \theta_n})(s_n, \theta_n) + \e_{s_n, \theta_n}(s_n, \theta_n^0)$ for some $s_n \in S_n$, $\theta_n^0 \neq \theta_n \in \Theta_n$ and $0 < \e_{s_n, \theta_n} < \zeta$.  The hyperplane could be defined equivalently by choosing $|S_n| + |T_ n|$ of the coordinates $\e_{s_n, \theta_n}$.  Let $B_0(X) = F_n^X(\S_n \times \Theta_n^1)$ and let $B(X) = F_n^X(H_- \cap (\S_n \times \Theta_n))$, where $H_-$ is the halfspace that contains $\S_n \times \Theta_n^1$.  

Let $\bar X$ be the set of vertices of $C(X)$ and let $\bar Y$ be the set of vertices of $C(X) \cap F_n^X(H \cap (\S_n \times \Theta_n))$.  We claim now that if the choice of the vectors in $X$ is generic and if the hyperplane $H$ is in generic position, i.e., if the collection of $\e_{s_n, \theta_n}$ is chosen outside a nowhere dense set, then $\bar  X \cup \bar Y$ satisfies property $(P)$; consequently there exists a Delaunay triangulation of $C(X)$ that can be refined according to $(P)$. To prove this claim, let $v_0, \ldots, v_{k+1}$ be a collection of vertices  each belonging to either  $\S_n \times \Theta_n$ or its intersection with $H$ and such that $v_0 = \sum_{i = 1}^{k+1} v_i$, $\l_i \neq 0$ for all $i > 0$ and $\sum_i \l_i = 1$.  We have to show that the  Voronoi polyhedra of the $v_i$'s do not intersect. We divide the proof in three cases: assume to begin with that all the vertices belong to $\S_n \times \Theta_n$.  Then, $k+2 = 2J$ for some integer $J > 1$ and, after a relabeling of the vertices, $\sum_{i = 1}^{J} v_i = \sum_{i = J+1}^{2J} v_i$; of these there are as many vertices in $\hat S_n^1$ among the first $J$ as there are in the second; and there is at least one vertex on each side of the equality that does not belong to $\hat S_n^1$. If the Voronoi polyedra of the $F_n^X(v_i)$'s contain a point $y$ in common, then letting $c$ be the distance between $y$ and each of the $F_n^X(v_i)$'s, elementary algebra shows that $2F_n^X(v_i)\cdot y = \Vert F_n^X(v_i) \Vert + \Vert y \Vert - c^2$ for each $i$. Since $F^X_n$ is affine, it follows that $\sum_{i = 1}^J{\Vert F^X_n(v_i) \Vert}^2  = \sum_{i = J+1}^{2J}{\Vert F^X_n(v_i) \Vert}^2$, an equality that holds only for a nongeneric choice of $X$. Therefore,  the Voronoi polyhedra of the $F_n^X(v_i)$'s do not intersect. 

Now suppose that all the $v_i$'s belong to $H \cap C$. Then the nearest-point projection of the $v_i$'s to $\S_n \times  \Theta_n^1$ span a face. The Voronoi polyhedra of the projections do not have a point in common, as we saw in the previous paragraph. Therefore, if the $\e_{s_n, \theta_n}$'s are small, the Voronoi polyhedra of the $v_i$'s do not intersect either.

Finally, there remains to consider the case where one of them, which we can assume to be $v_0$, belongs to $H$.  There must be at least two $v_i$'s in $\S_n \times \Theta_n$ as the hyperplane $H$ does not contain any vertex of $\S_n \times \Theta_n$. In particular, there are at most $k \le d$ vertices in the collection of $v_i$'s that belong to $H$. Therefore,  we can perturb the $\e_{s_n, \theta_n}$ corresponding to $v_0$ but not the others, to ensure that the Voronoi polyhedra do not intersect. Thus we have verified our claim that for generic $X$ and $H$, the vertices in $\bar X \cup \bar Y$ allows to construct a Delaunay triangulation of $C(X)$ satisfying $(P)$.

Take now a generic set $X$ with the property that the norm of each $x(\hat s_n^2)$ is strictly greater than one for $\hat s_n^2 \in \hat S_n^2$.  Since $F^X_n$ is an affine homeomorphism,  ${\Vert x - y \Vert}_\infty \le M  {\Vert F^X_n(x) - F^X_n(y) \Vert}$ for some $M > 0$ and for all $x, y \in \S_n \times \Theta_n$.  Let $\d>0$ be smaller than $M\zeta$. Using the construction of a subdivision described above, we now have a triangulation  $\T_\d$ of $C(X)$ where each point in  $\int_C (B(X))$ belongs to a simplex with diameter less than $\d/M$, giving us properties (1) and (2) of Subsection \ref{lastembedding}.  As for property (3),  if $\rho_n$ is the convex function associated to the Delaunay triangulation $\T_\d$, the composition $\rho_n \circ F^X_n$ is convex and linear precisely on each cell of the triangulation of $\S_n \times \Theta_n$ induced by the inverse mapping $(F^X_n)^{-1}: C(X) \to \S_n \times \Theta_n$. Our convex function takes value one on $\S_n \times \{\, \theta_n^0 \, \}$ and is strictly above one elsewhere. Subtracting now $1$ from $\rho_n \circ F^X_n$ we have a convex function satisfying property (3).

\section{A Lemma for Section 5}

This Appendix states and proves a key lemma that was invoked in Section 5.  The lemma draws on the following concept of equivalence between normal-form games, which  is generated by the addition and deletion of duplicate strategies.  Say that two normal-form games are \textit{equivalent} if they have the same reduced normal form.  Given a game $G$ with strategy space $\S$, if an equivalent game $\bar G$ with strategy space $\bar \S$ is obtained by adding duplicate strategies, then there is natural (affine) map from $\bar \S$ to $\S$ that sends each profile $\bar \s$ in $\bar G$ to an equivalent strategy $\s$ in $G$; in this case we say that $\bar \s$ \textit{projects to} $\s$.  

Let $C_1, \ldots, C_k$ be the components of Nash equilibria of a finite game $G$.  For each $i$, let $c_i$ be the index of $C_i$. Choose $\e > 0$ such that  the closed $\e$-neighborhoods $U_i$ of the $C_i$'s  are pairwise disjoint.  (All the norms in this Appendix will be $\ell_{\infty}$-norms.) The following lemma is the main result of this Appendix.

\begin{lemma}\label{lemma1}
For each  $\d>0$,  there exist  a game $\bar G$ obtained from $G$ by adding duplicate strategies and a $\d$-perturbation $\bar G^\d$ of $\bar G$ such that: 
\begin{enumerate} 
\item Every equilibrium of $\bar G^\d$ is isolated and projects to a profile in $U_i$ for some $i$;
\item For each $i = 1,...,k$, there are exactly $|c_i|$ equilibria of $\bar G^{\d}$ projecting to $U_i$: if $c_i >0$, they all have index $+1$; if $c_i$ is negative, they all have index $-1$.
\end{enumerate}
\end{lemma}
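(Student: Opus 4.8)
The plan is to pass to the fixed-point picture, resolve each component $C_i$ into $|c_i|$ nondegenerate zeros of the displacement by the standard cancellation argument, and then re-interpret the resulting map as the fixed-point map of a genuine payoff perturbation of a game obtained from $G$ by adding \emph{only} duplicate strategies.

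Step one is purely topological. Fix one of the fixed-point maps $f$ for $G$ from Subsection~\ref{indexofequilibria}, with displacement $d\equiv\mathrm{id}-f$, so that the zeros of $d$ are the Nash equilibria of $G$ and the local index of $d$ at $C_i$ equals $c_i$ (and $\sum_i c_i=+1$). The target is a map $d^\d$ that coincides with $d$ outside $\bigcup_i\int_{\S}(U_i)$ and whose zero set in each $U_i$ consists of exactly $|c_i|$ points, all nondegenerate of index $\mathrm{sgn}(c_i)$ (none when $c_i=0$). To build $d^\d$ one first perturbs $d$, with support inside the $U_i$'s, so as to make $0$ a regular value---finitely many nondegenerate zeros in each $U_i$ with signed count $c_i$---and then cancels pairs: as long as some $U_i$ contains zeros of both signs, pick an embedded arc in $\int_{\S}(U_i)$ joining a $+$ and a $-$ zero and missing the others, thicken it to a small ball $N$, and invoke the Hopf Extension Theorem (Corollary 8.1.18 of \cite{S1966}), exactly as in Subsection~\ref{killing}, to replace $d$ on $N$ by a zero-free map with unchanged boundary values; this is legitimate because $d|_{\partial N}$ has degree $(+1)+(-1)=0$ into $A-0$. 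Iterating leaves $|c_i|$ zeros of sign $\mathrm{sgn}(c_i)$ in each $U_i$. A component may meet $\partial\S$, so that $d$ has boundary zeros; this technical point is handled, as in \cite{GW2005}, by running the same construction on the graph $\E$, which the Kohlberg--Mertens structure theorem identifies with the boundaryless Euclidean space $\G$ via $proj\circ h^{-1}$, where the analogue of $d$ has no boundary zeros.

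Step two is the realization of $d^\d$ by a game, and here the restriction to duplicate strategies is essential. Adapting the construction of Govindan and Wilson \cite{GW2005}, one adds, for each player $n$, enough duplicates of each pure strategy to obtain an equivalent game $\bar G$ whose space of payoff perturbations is large enough, with strategy space $\bar\S$ and the natural projection $\bar\S\to\S$. A payoff perturbation of $\bar G$ induces a perturbation of the $\bar G$-fixed-point map, and the crux is that, localized near the lift of $\bigcup_i C_i$, the induced map from payoff perturbations to perturbations of the displacement is onto a neighbourhood of the origin---so one may hit a lift of $d^\d-d$; the extra degrees of freedom come precisely from the fact that in a perturbation $\bar G^\d$, but not in $\bar G$ itself, the payoffs may distinguish duplicated strategies, and the surjectivity is obtained from a dimension count plus an inverse-function argument as in \cite{GW2005}. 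Taking the perturbation of norm at most $\d$ produces $\bar G^\d$; its equilibria are the zeros of the perturbed displacement, which by construction are the $|c_i|$ prescribed nondegenerate points near each lift of $U_i$ and nothing else, each of index $\mathrm{sgn}(c_i)$ by the index-degree identification. As indices are invariant under duplication (\cite{GW2005}, Theorem~5), their projections to $\S$ lie in the $U_i$'s and satisfy (1) and (2).

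The main obstacle is this realization step: one must simultaneously confine the perturbation to the $U_i$'s, so that no spurious equilibria are created elsewhere, and express it as a true payoff perturbation of a game that is only a duplicate-strategy enlargement of $G$, not the more permissive strategy-addition allowed in the proof of Theorem~\ref{thm main}. This is exactly where the room created by the duplicates is indispensable and where the bookkeeping of \cite{GW2005} must be adapted; the differential-topology step, by contrast, is routine given the Hopf Extension Theorem already deployed in Section~3.
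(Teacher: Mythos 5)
Your high-level plan---resolve each component into $|c_i|$ nondegenerate fixed points by Hopf-extension cancellation, then realize the modified map as the fixed-point map of a bonus-perturbed game over a duplicate-enlarged strategy space---does match the paper's strategy, and your Step 1 is essentially correct (the paper performs the same cancellation in its Claim B.2, working with a continuous selection from $\mathrm{BR}^G$ rather than the Nash map, but the topological content is the same).

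The gap is in Step 2, and it is not a bookkeeping gap: the mechanism you invoke is different from what the paper does and is not obviously sound. You assert that, after adding enough duplicates, ``the induced map from payoff perturbations to perturbations of the displacement is onto a neighbourhood of the origin,'' and you propose to hit a lift of $d^\d-d$ by a dimension count plus an inverse-function argument. Neither \cite{GW2005} nor the paper's proof establishes or uses such a surjectivity. The reason is structural: a bonus perturbation of a game is multilinear in the strategy profile and, by definition, player $n$'s bonus cannot depend on $n$'s own mixed strategy; the resulting perturbation of the displacement is therefore confined to a rather special subspace, and no amount of mere duplication changes this. The paper does not try to hit an arbitrary target displacement perturbation at all. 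Instead it replaces the target map by a \emph{multisimplicial approximation} (hence piecewise multilinear), enlarges player $n$'s strategy polytope to $\S_n\times\S_{n+1}$ so that the bonus for $n$ may be keyed to player $(n-1)$'s copy of $n$'s own coordinates, constructs two explicit bonus terms $g^0$ and $g^1$ and their own multisimplicial approximations, and then adds convex piecewise-linear penalties (Delaunay-style functions as in Lemma 2.3.15 of \cite{LRS2010}) to force best replies onto the cells of a subdivision. Each of these ingredients is there precisely because the naive ``payoff perturbations span all displacement perturbations'' statement fails. So while you have correctly located the crux, the inverse-function route you sketch would not close it; you would need to replace Step 2 with the multisimplicial-approximation construction (Steps 2 and 3 of the paper's Appendix B) or an argument of comparable explicitness.
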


The proof of the lemma calls upon the concept of multisimplicial complexes and multisimplicial approximations, introduced in \cite{GW2005}, which we review briefly now.  We assume that the reader is familiar with simplicial complexes, which are used as building blocks for multisimplicial complexes. A \textit{multisimplex} is a set of the form $K_1 \times ... \times K_m$, where for each $i$, $K_i$ is a simplex (in some Euclidean space). A \textit{multisimplicial complex} $\K$ is a product $\K_1 \times ... \times \K_m$, where for each $i$, $\K_i$ is a simplicial complex. The \textit{vertex set} $V$ of a multisimplicial complex $\K$ is the set of all $(v_1,...,v_m)$ for which $v_i$ is a vertex of $\K_i$. For a simplicial complex $\L$, denote by $|\L|$ the space of the simplicial complex. The space of the multisimplicial complex $\K$ is then the product space $\prod_{i}|\K_i|$ and is denoted $|\K|$. For each vertex $v$ of $\K$, the star of $v$, denoted $\text{St}(v)$, is the set of all $\s \in |\K|$ such that for each $i, \s_i \in \text{St}(v_i)$. 
A subdivision of a multisimplicial complex $\K$ is a multisimplicial complex $\K^* = \prod_{i} \K^*_i$, where for each $i$, $\K^*_i$ is a subdivision of $\K_i$. 

\begin{definition}\label{multisimplicial}
Let $\K$ be a multisimplicial complex and let $\L$ be a simplicial complex. A map $f: |\K| \to |\L|$ is called \textit{multisimplicial} if for each multisimplex $K$ of $\K$ there exists a simplex $L \in \L$ such that 
\begin{enumerate}
\item $f$ maps each vertex of $K$ to a vertex of $L$;
\item $f$ is multilinear on $|K|$; i.e., for each $\s \in K$, $f(\s) = \sum_{v \in V}f(v) \cdot \prod_{i}\s(v_i)$. 
\end{enumerate} 
\end{definition}

\begin{remark} We call the restriction of a multisimplicial map to its vertex set as a {\it vertex map}. From the definition above, one sees that a multisimplicial map is completely determined by its vertex map. 
\end{remark}

A map  from a multisimplicial complex to another is called multisimplicial if each coordinate map is multisimplicial in the sense of Definition \ref{multisimplicial}.

\begin{definition}Given a multisimplicial complex $\T$ with $|\T| = \S$, a multisimplex $K$ of $\T$ is called \textit{maximal} if the dimension of $K$ equals the dimension of $\S$. Let $\T^*$ be a multisimplicial subdivision of the multisimplicial complex $\T$. Let $h: |\T^*| \to |\T|$ be a multisimplicial function where $|\T^*| = |\T| = \S$. A multisimplex $K^*$ of $\T^*$ is called \textit{fixed} (by $h$) if the lowest dimensional multisimplex $D$ of $\T$ that contains $h(K^*)$ also contains $K^*$.  \end{definition}

\begin{remark}
 If a multisimplex $K^*$ of the map $h$ in the above definition contains a fixed point in its interior, then necessarily it is fixed by $h$, which is the reason for the terminology.  Observe, however, that the converse is not necessarily true.	
\end{remark}

\begin{definition}\label{multi approx}
Let $\K$ be a multisimplicial complex and $\L$ a simplicial complex. Let $g: |\K| \to |\L|$ be a continuous map. A multisimplicial map $f: |\K| \to |\L|$ is a \textit{multisimplicial approximation} of $g$ if for each $\s \in |\K|$, $f(\s)$ is in the unique simplex of $\L$ that contains $g(\s)$ in its interior. 
\end{definition}

The proof of the next claim can be found in the Appendix B of Govindan and Wilson \cite{GW2005}. 

\begin{claim}\label{govwilsonclaim}
Suppose that $g: |\K| \to |\L|$ is a continuous map. There exists $\eta >0$ such that for each subdivision $\K^*$ of $\K$ with the property that the diameter of each multisimplex is at most $\eta$, there exists a multisimplicial approximation $f: |\K^*| \to |\L|$ of $g$.
\end{claim}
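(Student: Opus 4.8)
The plan is to mimic the classical proof of the simplicial approximation theorem, the one new ingredient being that a multisimplicial star factors as a product of simplicial stars. First I would set up the cover: since $|\K|$ is compact (the complex being finite) and $\{\,\text{St}(w)\mid w\text{ a vertex of }\L\,\}$ is an open cover of $|\L|$, the sets $g^{-1}(\text{St}(w))$ form an open cover of $|\K|$ with some Lebesgue number $\lambda>0$. I would then take $\eta\in(0,\lambda/2)$. The point is that if $\K^*=\K_1^*\times\cdots\times\K_m^*$ is any subdivision in which every multisimplex has diameter at most $\eta$, then for each vertex $v=(v_1,\dots,v_m)$ of $\K^*$ the closed star $\overline{\text{St}}(v)$ in $|\K^*|$ — which equals the union of the closed multisimplices of $\K^*$ containing $v$ — has diameter at most $2\eta<\lambda$ (any of its points lies in a closed multisimplex of diameter $\le\eta$ that also contains $v$, hence is within $\eta$ of $v$), so it lies in some $g^{-1}(\text{St}(w))$; I would fix one such $w$ and call it $w_v$.

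Next I would show that the vertex assignment $v\mapsto w_v$ extends to a multisimplicial map $f:|\K^*|\to|\L|$. By Definition \ref{multisimplicial} this amounts to checking that, for every multisimplex $K^*=K_1^*\times\cdots\times K_m^*$ of $\K^*$, the images $\{\,w_v\mid v\text{ a vertex of }K^*\,\}$ are the vertices of a single simplex of $\L$; once that is known, the multilinear formula in Definition \ref{multisimplicial}(2) defines $f$ on $|K^*|$, and on a shared face only the barycentric coordinates of that face's vertices survive, so the local formulas agree and glue to a global $f$. To get the spanning property I would use that the vertices of $K^*$ are exactly the tuples $(v_1,\dots,v_m)$ with $v_i$ a vertex of $K_i^*$, whence
\[
\bigcap_{v\text{ vertex of }K^*}\text{St}(v)=\prod_{i=1}^{m}\Bigl(\ \bigcap_{u\text{ vertex of }K_i^*}\text{St}(u)\ \Bigr),
\]
a product of sets each containing the relative interior of $K_i^*$, hence nonempty. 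Applying $g$ gives $\emptyset\neq g\bigl(\bigcap_v\text{St}(v)\bigr)\subseteq\bigcap_v g(\text{St}(v))\subseteq\bigcap_v\text{St}(w_v)$, and a nonempty intersection of open stars of vertices in the simplicial complex $\L$ forces those vertices to span a simplex of $\L$.

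Finally I would verify that $f$ is a multisimplicial approximation of $g$ in the sense of Definition \ref{multi approx}. Given $\s\in|\K^*|$, let $K^*$ be the multisimplex whose relative interior contains $\s$; then $\s\in\text{St}(v)$ exactly for the vertices $v$ of $K^*$, so $g(\s)\in\bigcap_{v\text{ vertex of }K^*}\text{St}(w_v)$, which says that every such $w_v$ is a vertex of the unique simplex $L(\s)$ of $\L$ containing $g(\s)$ in its interior. Since $f(\s)=\sum_{v\text{ vertex of }K^*}w_v\cdot\prod_i\s(v_i)$ is a convex combination of these $w_v$ (the coefficients being nonnegative and summing to $1$), it lies in $L(\s)$, which is precisely what the definition requires.

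The step I expect to be the real obstacle — everything else being routine point-set and simplicial bookkeeping — is the spanning property in the middle paragraph: that the images of the vertices of a multisimplex under the vertex map span a simplex of $\L$. The way around it is exactly the product decomposition of $\text{St}(v)$ into the $\text{St}(v_i)$, which turns $\bigcap_v\text{St}(v)$ into a product of nonempty sets and reduces the question to the classical fact that vertices whose open stars have a common point span a simplex. The remaining details (compactness for the Lebesgue number lemma, the diameter estimate for closed stars, independence of the choice of $w_v$, and consistency of the multilinear extension across overlapping multisimplices) I would treat as standard.
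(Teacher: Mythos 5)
The paper itself does not prove this claim but defers it to Appendix B of Govindan and Wilson (2005), so there is no in-paper argument to compare against; your proof is correct and is the expected adaptation of the classical simplicial approximation theorem. The one genuinely new ingredient relative to the classical case---that the open star of a multisimplicial vertex $v=(v_1,\dots,v_m)$ factors as $\prod_i \text{St}(v_i)$, which reduces the spanning property for $\{w_v\}$ to the one-factor case via a nonempty product of open-star intersections---is exactly what you isolate and handle, so this is essentially the same route as the cited source.
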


\begin{remark}
Let $\T^*$ be a multisimplicial subdivision of a multisimplicial complex $\T$ with $|\T| = \S$. Let $g: |\T^*| \to |\T|$ be a continuous map. We call a multisimplicial map $f: |\T^*| \to |\T|$ a multisimplicial approximation of $g$ if for each $n \in \N$, $f_n: |\T^*| \to |\T_n|$ is a multisimplicial approximation of $g_n$ in the sense of Definition \ref{multi approx}.
\end{remark}

\subsection*{Proof of Lemma \ref{lemma1}} The proof of Lemma \ref{lemma1} is inspired by the idea of the proof of Theorem 1 of Govindan and Wilson \cite{GW2005}, which shows that a component of equilibria that is uniformily hyperstable is essential. In order to facilitate comparison when we refer to the proof of that theorem, we  have also subdivided our proof into three steps, each one corresponding to one of the steps in that proof. 

For Step 1, we have to extend the notion of equivalence between games from normal to strategic form. We say that a strategic-form game $\bar G$  with strategy space $P = \prod_n P_n$ is equivalent to the normal-form game $G$ if there exists, for each $n \in \N$, an affine surjective map $g_n: P_n \to \S_n$ such that letting $g \equiv \times_{n } g_n, \forall p \in P, \bar G_n(p) = G_n(g(p))$.

\subsection*{Step 1}  Let $\BR^G$ be the best-reply correspondence of $G$ and let $W$ be a neighborhood of the graph of $\BR^{G}$. If $\bar G$ is equivalent to $G$, then there is a corresponding neighborhood $\bar W$ of the best-reply correspondence $\BR^{\bar G}$ of $\bar G$.  Fix $\eta > 0$.  In this step, we construct a strategic-form game $\bar G$ that is equivalent to $G$ with the following properties. 

\begin{enumerate} 
	
	\item[(0)] The strategy set $\bar P_n$ of each $n$ is $\S_{n} \times \S_{n+1}$.
	\item[(1)] For each $n$, there is a simplicial complex $\mathcal{I}_n$, with $|\I_n| = \S_n$,  and a simplicial subdivision $\I_n^*$ of $\I_n$ such that:
	\begin{enumerate}
		\item the diameter of each simplex of $\I_n$ is less than $\eta$; 
		\item for each $i = 1, \ldots, k$ and $j = 1, \ldots, |c_i|$ there is a distinguished full-dimensional multisimplex $\bar K^{ij}$ of the complex $|\mathcal I^*| \times |\mathcal I^*|$ that is contained in $U_i \times U_i$.  
	\end{enumerate}
	\item[(2)] For each $n$, there exists a multisimplicial map $\bar f_n: \prod_{m \neq n} (|\I_{m}^*| \times |\I_{m+1}^*|)  \to |\I_n| \times |\I_{n+1}^*|$ such that letting $\bar f = \times_n \bar f_n$: 
	\begin{enumerate}
		\item the graph of $\bar f$ is contained in $\bar W$;
		\item the only multisimplices left fixed by $\bar f$ are the $\bar K^{ij}$'s;
		\item the restriction of $\bar f$ to each $\bar K^{ij}$ is a homeomorphism onto its image  $\bar L^{ij}$ and has a unique fixed point $\bar \s^{ij}$, whose index is the sign of $c_i$.
	\end{enumerate} 
\end{enumerate}

This step follows once we obtain an appropriate multisimplicial function $f$ from $\S$ to itself with properties that track those listed above.  The following claim constructs such a multisimplicial function.

\begin{claim}\label{claim1} There exist a multisimplicial complex $\T$, with $|\T| = \S$, and a multisimplicial subdivision $\T^*$ of $\T$  with the following properties.
\begin{enumerate}
		\item[(1a)] The diameter of each multisimplex of $\T$ is less than $\eta$; 
		\item[(1b)] For each $i = 1, \ldots, k$ and $j = 1, \ldots, |c_i|$ there is a a distinguished full-dimensional multisimplex $K^{ij}$ of $\T^*$ that is contained in $U_i$.  
	\item[(2)] There exists a multisimplicial map $f: |\T^*|  \to |\T|$ such that: 
	\begin{enumerate}
		\item the graph of $f$ is contained in $W$;
		\item the only multisimplices left fixed by $f$ are the $K^{ij}$'s;
		\item the restriction of $f$ to each $K^{ij}$ is a homeomorphism onto its image $L^{ij}$ and has a unique fixed point $\s^{ij}$, whose index is the sign of $c_i$.
	\end{enumerate} 
\end{enumerate}	
\end{claim}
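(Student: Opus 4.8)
The plan is to reduce Claim \ref{claim1} to a purely continuous, ``generic'' construction and then to \emph{multisimplicialize} it; since the claim is stated for a given neighbourhood $W$ of the graph of $\BR^G$ and becomes only easier for larger $W$, I may assume $W$ is as small as convenient. First I would produce a continuous map $\tilde g:\S\to\S$ with these properties: (i) the graph of $\tilde g$ is contained in $W$; (ii) the fixed-point set of $\tilde g$ is a finite set $\{\,\s^{ij}:1\le i\le k,\ 1\le j\le|c_i|\,\}$ with $\s^{ij}\in\int_\S(U_i)$; (iii) the displacement $\tilde d\equiv\mathrm{id}-\tilde g$ is affine with invertible linear part on a full-dimensional neighbourhood of each $\s^{ij}$, the sign of whose determinant equals the sign of $c_i$ (so the local fixed-point index of $\tilde g$ at $\s^{ij}$ is the sign of $c_i$). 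Granting such a $\tilde g$, I would choose $\T=\prod_n\T_n$ with $|\T|=\S$, each $\T_n$ a simplicial complex, so fine that every multisimplex has diameter less than $\eta$; then pick a subdivision $\T^*$, refined near each $\s^{ij}$, so that there is a full-dimensional multisimplex $K^{ij}$ of $\T^*$ with $\s^{ij}$ interior to it, $K^{ij}\subseteq U_i$, and $\tilde g(K^{ij})$ inside a single multisimplex $L^{ij}$ of $\T$; by Claim \ref{govwilsonclaim}, for $\T^*$ of sufficiently small mesh there is a multisimplicial approximation $f:|\T^*|\to|\T|$ of $\tilde g$. Since $\tilde g$ is an affine homeomorphism near $\s^{ij}$ and $K^{ij}$ is tiny, $f|_{K^{ij}}$ is a homeomorphism onto its image $L^{ij}$ with the unique fixed point $\s^{ij}$ of index the sign of $c_i$ (a degree argument on $\partial K^{ij}$), giving (1b) and (2c); the diameter bound on $\T$ is (1a); and (2a) holds because $f$ approximates $\tilde g$, whose graph lies in $W$.

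To build $\tilde g$ I would begin from a continuous map $g:\S\to\S$ with graph in $W$---such $g$ exist because $\BR^G$ is upper hemicontinuous with convex compact values and hence admits continuous approximations whose graphs lie in any prescribed neighbourhood of its own graph---noting that, for $W$ small, every fixed point of $g$ lies in $\bigcup_i\int_\S(U_i)$ and the fixed-point index of $g$ on $U_i$ equals $c_i$ (stability of the fixed-point index under graph-close perturbations). Let $d\equiv\mathrm{id}-g:\S\to A$ with $A=\prod_nA_n$ as in Subsection \ref{killing}; then $d$ is zero-free on $\S\setminus\bigcup_i\int_\S(U_i)$ and $d|_{\partial_\S U_i}$ has degree $c_i$ as a map into $A\setminus 0$. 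On each $U_i$ I would replace $d$ by a model displacement having exactly $|c_i|$ transverse zeros $\s^{ij}$, each of local index the sign of $c_i$, and zero-free elsewhere on $U_i$; its boundary restriction has degree $|c_i|\cdot\operatorname{sign}(c_i)=c_i$, hence is homotopic in $A\setminus 0$ to $d|_{\partial_\S U_i}$, and a collar interpolation---the Hopf Extension Theorem (Corollary 8.1.18 of \cite{S1966}), exactly as in Subsection \ref{killing}---glues the model to $d$ to give $\tilde d$, equal to $d$ off $\bigcup_iU_i$ with zero-set $\{\s^{ij}\}$. Rescaling as in Subsection \ref{killing}, set $\tilde g(\s)\equiv\s-\l(\s)\tilde d(\s)\in\S$. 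Keeping the graph of $\tilde g$ inside $W$ is the delicate point here: it forces the $U_i$ to be small relative to $W$ and uses the connectedness of each $C_i$, so that the surgery---including any cancellation of superfluous opposite-index zeros of $g$---can be carried out along thin tubes inside $U_i$ on which the alteration stays within $W$.

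The heart of the argument, and the main obstacle, is the interface between this topological surgery and the combinatorial output: Hopf extension has no analogue for multisimplicial maps, which is exactly why the construction must be staged as ``build $\tilde g$ continuously, then approximate'', and one must then verify that the multisimplicial approximation $f$ neither creates spurious \emph{fixed multisimplices} nor disturbs the index at the $K^{ij}$. For the former I would argue as in Appendix B of Govindan--Wilson \cite{GW2005}: since $\|\mathrm{id}-\tilde g\|$ is bounded below on $\S\setminus\bigcup_{i,j}\int_\S(K^{ij})$, for a sufficiently fine subdivision $f$ has no fixed point there, and the same quantitative estimate forbids any multisimplex disjoint from the $K^{ij}$ from being left fixed (being fixed would force $\tilde g$ to return it almost into its own carrier, contradicting the lower bound). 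For the latter, a generic placement of $\T$ relative to the finitely many $\s^{ij}$, together with affineness of $\tilde g$ near them, makes $f|_{K^{ij}}$ the intended homeomorphism of the correct orientation. Carrying out these two verifications with the explicit constants of Claim \ref{govwilsonclaim} and of \cite{GW2005} is where the real work lies; granting the claim, Step 1 of the proof of Lemma \ref{lemma1} then packages $f$ into the strategic-form game $\bar G$ through its vertex map, and the remaining steps convert this into the $\d$-perturbation $\bar G^\d$.
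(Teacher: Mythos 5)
Your overall architecture is the same as the paper's: approximate $\BR^G$ by a continuous self-map with graph in $W$, perform surgery via the Hopf Extension Theorem to collapse the fixed-point set inside each $U_i$ to $|c_i|$ isolated points of local index $\operatorname{sign}(c_i)$ with an affine model near each, and then produce $f$ as a multisimplicial approximation of the resulting $\tilde g$ (your $\tilde g$ is the paper's $\hat f$, with the minor repackaging that you do the surgery directly in $U_i$ rather than first collapsing each $U_i$ to a single fixed point and then splitting it). This part is in order.

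The gap is in the passage from $\tilde g$ to $f$ near the $\s^{ij}$, where you write that \emph{for $\T^*$ of sufficiently small mesh there is a multisimplicial approximation $f$}, and that by affineness and the smallness of $K^{ij}$, $f|_{K^{ij}}$ is a homeomorphism onto $L^{ij}$ of the right degree. That does not follow: a generic multisimplicial approximation produced by Claim~\ref{govwilsonclaim} need not be a homeomorphism on $K^{ij}$, and a sufficiently fine mesh works \emph{against} you. If $\T^*$ is much finer than $\T$ near $\s^{ij}$, the multisimplex $K^{ij}$ becomes tiny relative to $L^{ij}$, and a vertex map compatible with carriers can (and generically will) send several vertices of $K^{ij}$ to the same vertex of $L^{ij}$, collapsing $f|_{K^{ij}}$ onto a lower-dimensional face. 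The paper avoids this precisely by \emph{not} refining arbitrarily near $\s^{ij}$: instead it builds $\T_n$ with $Y^{ij}_n$ as a subcomplex, uses the simplicial homeomorphism $\hat f_n^{ij}\colon X_n^{ij}\to Y_n^{ij}$ to pull $\T_n$ back to a complex $\mathcal S_n^{ij}$ on $X_n^{ij}$, and takes $\T^*_n$ to subdivide $\mathcal S_n^{ij}$ (with $K^{(1)ij}_n$ a maximal simplex of $\T^*_n$, $L^{(0)ij}_n$ a maximal simplex of $\T_n$), then \emph{hand-picks} the vertex map so that the vertices of $K^{(1)ij}_n$ map bijectively onto those of $L^{(0)ij}_n$ with the intended orientation. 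This compatibility between $\T^*$ and $\T$ near $\s^{ij}$ via $\tilde g$ is a genuine design constraint, not a genericity matter, and it is not supplied by ``mesh $\to 0$''.

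Relatedly, your argument that no spurious multisimplex is fixed ($\|\mathrm{id}-\tilde g\|$ bounded below off the $K^{ij}$) is insufficient near the $\s^{ij}$, where the displacement is arbitrarily small. The paper closes this with two further structural requirements on $\T$ (its properties (iii)--(iv) in the construction of $\T_n$): each $\s^{ij}_n$ is interior to a maximal simplex $L^{(0)ij}_n$, and every multisimplex $L$ of $Y^{ij}$ other than $L^{(0)ij}$ satisfies $L\cap\hat f(L)=\varnothing$; combined with the ray-projection vertex map inside $K^{(0)ij}_n$, this rules out fixed multisimplices other than $K^{(1)ij}$ inside a neighbourhood of $\s^{ij}$. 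You would need an analogue of this, and it is exactly the piece that cannot be left as ``generic placement of $\T$''. Once you impose the pullback structure and the explicit vertex map, the remainder of your outline (Hopf surgery, quantitative exclusion of fixed multisimplices far from the $\s^{ij}$, graph-in-$W$ by fineness of $\T$) matches the paper and goes through.
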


\begin{proof}[Proof of Claim \ref{claim1}]
	
Approximate the correspondence $BR^{G}$ by a continuous map $h: \S \to \S$ such that: (1) $h(\S) \subset \S \backslash \partial \S$ and the graph of $h$ is contained in $W$; (2) all the fixed points of $h$ are in $\cup_i U_i$; (3) in each $U_i$,  $h$ has a unique fixed point $\s^i(h)$ with index $c_i$. For $\z >0$, define $B_{\z}(\s)$ to be the ball arround $\s$ with radius $\z$. Let $\z>0$ be such that for each $i$ and each $\s$ in $V_i \equiv B_\z(\s^i(h))$ the following hold: (A1) $\{\, \s \, \} \times B_\z(\s) \subset W$;  (A2) $B_\z(\s) \subset \S \backslash \partial \S$; (A3) the displacement of $\s$ under $h$ has norm less than $\z$.

For each $i$, pick points $\s^{ij}$, $j = 1, \ldots, |c_i|$ in $V_i$.  For each $i,j$, and $n$,  let $X_n^{ij} \subseteq Y_n^{ij}$ be full-dimensional simplices with $\s^{ij}_n$ as their barycenter and such that: for each $n$, the $Y_n^{ij}$'s are pairwise disjoint; and letting $Y^{ij} \equiv \prod_{n} Y_n^{ij}$, we have that $Y^{ij}$ is contained in $V_i$ with diameter of $Y^{ij}_n$ less than $\z$.  For each $i,j$, there exists a simplicial homeomorphism $\hat f_n^{ij}: X_n^{ij} \to Y_n^{ij}$ such that $\s^{ij}_n$ is the unique fixed point of $\hat f_n^{ij}$ and the index of $\s^{ij}$ relative to $\hat f^{ij} \equiv \times_n \hat f^{ij}_n: X^{ij} \to Y^{ij}$ is the sign of $c_i$, where  $X^{ij} = \prod_n X_n^{ij}$. The maps $\hat f^{ij}$ define a map $\hat f: \cup X^{ij} \to \cup Y^{ij}$. Let $\hat d$ be the displacement of $\hat f$ and let $d$ be the displacement of $h$.  By the Hopf Extension Theorem, the displacement $\hat d$ of $\hat f$ can be extended to the whole of $\S$ in such a way that: (B1) $\hat d(\s) = d(\s)$, if $\s \notin \cup_i V_i$; (B2) $\Vert \hat d(\s) \Vert < \z$, if $\s \in V_i$; (B3) the only zeros are the $\s^{ij}$'s. The displacement $\hat d$ now defines a map $\hat f$ from $\S$ to itself, thanks to (B1), (A2), and (A3). From (B1), (B2), and (A1), it follows that Graph$(\hat f) \subset W$.

From (B3), the map $\hat f$ has no fixed points in $\S \setminus \cup \text{int} (X^{ij})$. Let $\a>0$ be such that $||\hat f(\s) -\s|| > \a$ for $\s \notin \cup_{ij} X^{ij}$. Consider now a simplicial complex $\T_n$ whose underlying space is $\S_n$ such that: (i) $\T_n$ contains $Y^{ij}_n$ as a subcomplex for each $i,j$; (ii) $\T_n$ has diameter less than $\frac{\a}{2}$ and $\eta$; (iii) for each $i,j$ there exists a maximal simplex $L^{(0)ij}_n \in \T_n$ that contains $\s^{ij}_n$ in its interior; (iv) for each $i,j$ and any multisimplex $L$ of $Y^{ij}$ that is not an $L^{(0)ij}$, $L \cap \hat f(L) = \emptyset$.  As $\hat f_n|_{X^{ij}_n}: X^{ij}_n \to Y^{ij}_n$ is an affine simplicial homeomorphism it follows that $\mathcal{S}^{ij}_n = \{ \hat f^{-1}_n(L_n) \cap X^{ij}_n \}_{L_n \in \T_n}$ is a simplicial complex with $X^{ij}_n$ as its underlying space. Denote by $K^{(0)ij}_n \in \mathcal{S}^{ij}_n$ the simplex containing $\s^{ij}_n$ in its interior. Take now a simplicial complex $\T^*_n$ that is a subdivision of $\T_n$ and such that: $\T^*_n \cap X^{ij}_n$ is a simplicial subdivision of $\mathcal{S}^{ij}_n$;  $\s_n^{ij}$ belongs to the interior of a maximal simplex $K^{(1)ij}_n$ of $\T^*_n$. The multicomplexes $\T^*$ and $\T$ satisfy properties 1(a) and 1(b).  There remains to construct a multisimplicial approximation $f$ to $\hat f$ satisying the properties (2)(a)-(c).

Taking the diameter of $\T_n$ sufficiently small, above, guarantees that the graph of every multisimplicial approximation of $\hat f$ is contained in $W$. Fix such a diameter for $\T_n$; taking now the diameter of $\T^*_n$ sufficiently small guarantees a multisimplicial simplicial approximation of $\hat f$ indeed exists. Moreover, by properties (ii) and (iv) above, for each such approximation, a multisimplex that is not contained in $K^{(0)ij}$ for some $i,j$ will not be held fixed.  By a careful choice of the vertex map on the multisimplices of $K^{(0)ij}$ we will satisfy the remaining properties required in (2).

For each $i,j,n$, map the vertices of $K_n^{(1)ij}$ 1-1 and onto the vertices of  $L_n^{(0)ij}$ such that the resulting multisimplicial map from $K^{(1)ij}$ to $L^{(0)ij}$ has $\s_n^{ij}$ as the unique fixed point and its index is the sign of $c_i$.  For a vertex $v^*_n$ of $\T^*_n$ in $K^{(0)ij}_n$ that is not a vertex of $K^{(1)ij}_n$, define the following vertex map: consider the ray from $\s^{ij}_n$ through the vertex $v^*_n$ and let $r(v^*_n)$ be the point of intersection of the ray with the boundary of $K^{(0)ij}_n$. Let $v_n$ be closest vertex of $L^{(0)ij}_n$ to $\hat f_n(r(v^*_n))$ in the carrier of $\hat f_n(r(v^*_n))$ with respect to $\T_n$. The resulting assignment of vertices defines a simplicial approximation $f_n$ of $\hat f_n$ from $K^{(0)ij}_n$ to $L^{(0)ij}_n$ with $K^{(1)ij}_n$ as the unique fixed simplex. Extend the vertex map for $f$ to the other vertices of $\T$ to construct a multisimplicial map $f$.  The map $f$ has all the claimed properties.
\end{proof}

Let $f$ be the map obtained from Claim \ref{claim1}. We now define the following strategic-form game $\bar G$, which is equivalent to $G$.  For each $n$, the strategy set is $\bar P_n \equiv \S_n \times \S_{n+1 (\text{mod N})}$, where the second factor is payoff irrelevant. Let $\rho_{n,n}$ and $\rho_{n,n+1}$ be the projections to the first and second factor, respectively.  Define now $\bar f_n: \bar P_{-n} \to \bar P_n$ by letting $\bar f_n(p_{-n}) = f_n(\rho_{-n}(p_{-n}), \rho_{n-1,n}(p_{n-1})) \times \rho_{n+1,n+1}(p_{n+1})$.  It is clear that $\bar P$ and $\bar f$ have the properties we set out to establish. 

\subsection*{Step 2} Let now $I_n$ and $I^*_n$ denote the set of vertices of $\I_n$ and $\I^*_n$ obtained in Step 1 above. For each $n$, let $\tilde P_n = \D(I_n) \times \D(I^*_{n+1})$.  Since the vertices of $\tilde P_n$ can be viewed as points in $\bar P_n$, there is a natural map from $\tilde P_n$ to $\bar P_n$ that sends $\tilde p_n$ to the corresponding convex combination of vertices of $\I_n \times \I^*_{n+1}$.  Let $\tilde G$ be the strategic-form game where the strategy polytope of player $n$ is $\tilde P_n$ and the payoff of each player $n \in \N$ from  $\tilde s = ((i_n, i_{n+1}^*))_{n \in \N} \in \prod_{n \in \N}(I_n \times I^*_{n+1})$ is defined as $\tilde G_n(\tilde s) = \bar G_n(\tilde s)$. Perturb now $\tilde G$'s payoffs such that if $\bar p$ is a vertex of a simplex $\bar K^{ij}$ (cf. Step 1, (2.c)), all vertices of $\bar L_n^{ij}$ are equally good replies against it.  If $\eta$ is small, this perturbation of $\tilde G$ will indeed be small. We replace $\tilde G$ with this perturbed game but, for notational convenience, call it still $\tilde G$.  We show that when $W$ is a small neighborhood of $\BR^G$ and $\eta$ is also small, there exists a map $g: \tilde P \to [0, \frac{\d}{2}]^{R}$, $R \equiv \sum_{n \in \N}|I_n \times I^*_{n+1}|$, such that:

\begin{enumerate}

\item for each $n \in \N$, $g_n = g^0_n + g^1_n$, where $g^0_n$ and $g^1_n$ are continuous functions that are independent of $\tilde p_n$;

\item if $\tilde p$ projects to $\bar K^{ij}$ for some $i,j$, then for each player $n$, and every pair of  vertices $(i_n, i^*_{n+1}), (\tilde i_n, \tilde i^*_{n+1})$ of $\bar L_n^{ij}$, we have that $g_{n,(i_n,i^*_{n+1})}^0(\tilde p) = g_{n,(\tilde i_n, \tilde i^*_{n+1})}^0(\tilde p)$;

\item if $\tilde K$ is a face of $\tilde P$ whose vertices span a multisimplex of $\I \times \I^*$, then for each $(i_n, i_{n+1}^*) \in I_n \times I^*_{n+1}$, the map $g^1_{n, i_n, i_{n+1}^*}$ is multilinear in $\tilde K$;

\item $\tilde p$ is an equilibrium of the finite game $\tilde G \bigoplus g(\tilde p)$ iff it projects to $\bar \s^{ij}$ for some $i,j$ and the support of $\tilde p$ spans a multisimplex of $\I \times \I^*$; moreover if $\tilde p$ is an equilibrium of $\tilde G \bigoplus g(\tilde p)$, then it is an isolated equilibrium in this game and has the same index as its projection $\bar \s^{ij}$.

\end{enumerate}

We start by defining the map $g^0$.  The $(i_n, i_{n+1}^*)$ coordinate of $g_n^0$ is independent of $i_{n+1}^*$.  First we define the map in $\bar P$ and then extend it to $\tilde P$ by the equivalence relation between strategies in $\bar G$ and $\tilde G$.  We shall write $\bar p_n = (\bar p_{n,n}, \bar p_{n,n+1}) \in \S_n \times \S_{n+1} = \bar P_n$. Fix $n \in \N$, $\bar p_{-n} \in \bar P_{-n}$ and let $\s_{-n}$ be the strategy profile in $G$ that is equivalent to $\bar p_{-n}$. Let $g^0_{n, i_n}(\bar p_{-n}) = \pi_{i_n}(\bar p_{-n})[r_{n}(\bar p_{-n}) - G_n(i_n, \s_{-n})]$, where $r_n(\bar p_{-n}) \equiv \max_{s_n \in S_n}G_n(s_n, \s_{-n})$ and $\pi_{i_n}$ is a Urysohn function defined on $\bar P_{-n}$ that is $1$ on the inverse image under $\bar f_{n,n}$ of the closed star of $i_n$ and strictly less than 1 elsewhere.  Following the same reasoning as in Step 2 of GW, for sufficiently small $\eta>0$ and an appropriate choice of the neighborhood $W$ from which to obtain the map $\bar f$ in Step 1, one guarantees that $||g^0||_{\infty}<\frac{\d}{4}$.

We will now define $g^1$.  For each $n$, $\bar p_{-n} \in \bar P_{-n}$, and a vertex $(i_n, i_{n+1}^*)$ of $I_n \times I_{n+1}^*$, let $\bar f_{n,n}(\bar p_{-n})(i_n)$ and  $\bar f_{n,n+1}(\bar p_{-n})(i^*_{n+1})$ be, respectively,  the $i_n$-th and $i_{n+1}^*$-th barycentric coordinate of $\bar f_{n,n}(\bar p_{-n})$ and $\bar f_{n,n+1}(\bar p_{-n})$. Also, let $\s^{ij}_{i_n}$ be the $i_n$-barycentric coordinate of $\s^{ij}_n$ in $\I_n$ and $\s^{ij}_{i^*_{n+1}}$ be the $i^*_{n+1}$-barycentric coordinate of $\s^{ij}_{n+1}$ in $\I^*_{n+1}$.  For each $(i_n, i^*_{n+1}) \in I_n \times I^*_{n+1}$ that is not a vertex of $\bar K_n^{ij}$ (cf. Step 1, (2.c)), 

$$g^1_{n, i_{n}}(\bar p_{-n}) = \frac{\d}{8}(\bar f_{n,n}(\bar p_{-n})(i_n)),$$ $$g^1_{n, i^*_{n+1}}(\bar p_{-n}) = \frac{\d}{8}(\bar f_{n,n+1}(\bar p_{-n})(i^*_{n+1}));$$ 

in case $(i_n, i^*_{n+1})$ is a vertex of $\bar K_n^{ij}$, let $$g^1_{n, i_n}(\bar p_{-n}) = \frac{\l}{|S_n|\s^{ij}_{i_n}}(\bar f_{n,n}(\bar p_{-n})(i_n)),$$ $$g^1_{n, i^*_{n+1}}(\bar p_{-n}) = \frac{\b}{|S_{n+1}|\s^{ij}_{i^*_{n+1}}}(\bar f_{n,n+1}(\bar p_{-n})(i^*_{n+1})),$$ 

where $0<\l < \frac{(\d |S_n| \s^{ij_i}_{i_n})}{8}$, $0< \b < \frac{(\d |S_{n+1}| \s^{ij_i}_{i^*_{n+1}})}{8}$. Let $g^1_{i_n, i_{n+1}} = g^1_{n, i_n} + g^1_{n, i^*_{n+1}}$. It follows that $||g^1||_{\infty} \le \frac{\d}{4}$.

Points  (1) and (3) of this step are now obvious.  Point (2) holds from the definition of $g^0$ along with the fact that we have already perturbed the payoffs in $\tilde G$ accordingly. We now prove (4). 

Suppose that $\tilde p \in \tilde P$ is an equilibrium of $\tilde G \bigoplus g(\tilde p)$ and let $\bar p$ be its projection to $\bar P$.  Consider  first the case that $\bar p$ is contained in the interior of a multisimplex $\bar K$ that is different from $\bar K^{ij}$, for all $i, j$. From Step 1, it follows that $\bar K$ is not a fixed multisimplex of $\bar f$. Letting $\bar L$ be the lowest dimensional multisimplex containing $\bar f(\bar K)$, there exists a player $n$ and vertex $(i_n, i^*_{n+1})$, such that: either $\bar p_{n,n}$ has a positive $i_n$-th barycentric coordinate whereas $f_{n,n}(\bar p_{-n})$ has zero as $i_n$-th barycentric coordinate; or $\bar p_{n,n+1}$ has  a positive $i^*_{n+1}$-th barycentric coordinate whereas $f_{n,n+1}(\bar p_{-n})$ has zero as $i^*_{n+1}$-th barycentric coordinate. By the construction of the bonus vectors $g^0$ and $g^1$, this implies that either the $i_n$-th coordinate or the $i_{n+1}^*$-th coordinate is not a best reply to $\tilde p$ and thus $\tilde p$ cannot be an equilibrium in $\tilde G \bigoplus g(\tilde p)$.  

Suppose now that $\bar p$ belongs to $\bar K^{ij}$ for some $i,j$. By the definition of the bonus vector $g$, it is clear that $\bar p = \bar \s^{ij}$ and that $\tilde p$ is in fact an isolated equilibrium of the game $\tilde G \bigoplus g(\tilde p)$. To complete the proof of this step, we have to show its index is the sign of $c_i$.  By the construction of the bonus vector, the strategies that are not vertices of $\bar K_n^{ij}$ are not best replies to $\tilde p$.  Hence the index can be computed by the restriction of the best-reply correspondence of $\tilde G \bigoplus g(\tilde p)$ to $\bar K^{ij}$.  Moreover, it can be viewed as a correspondence from $\bar K^{ij}$ to $\bar L^{ij}$.  The best-reply correspondence is, then, (linearly) homotopic to $\bar f$ and has just $\bar p$ as its only fixed point along the homotopy and therefore its index is the sign of $c_i$.

\subsection*{Step 3} We now complete the proof of Lemma \ref{lemma1} by taking $\tilde G$ in Step 2 and constructing a normal-form game equivalent to $G$ with the required properties. The proof of this step is divided in two parts. First we construct an equivalent strategic-form game $\hat G$ and prove that there exists a $\d$-perturbation $\hat G^{\d}$ of $\hat G$ such that each equilibrium of $\hat G^{\d}$ is: (3.i) isolated; (3.ii) projects to $\bar \s^{ij}$, for some $i,j$; (3.iii) has the same index as its projection $\bar \s^{ij}$. The second part of the proof constructs an equivalent normal-form game $G^*$ from $\hat G$ and shows that the desired properties (1) and (2) in the statement of Lemma \ref{lemma1} are satisfied for the game $G^*$.

Consider a multisimplicial approximation $g^{0,*}: |\hat \F \times \hat \F| \to [0, \frac{\d}{4}]^{R}$ of $g^0$ (cf. Step 2, (1)), such that the following properties are satisfied: \begin{enumerate}

\item[(3.1)] for each player $n$, $\hat \F_n$ is a simplicial subdivision of $\D(I^*_n)$; 
\item[(3.2)] for each $n \in \N$, $v_n, v'_n$ vertices of $\bar L^{ij}_n$, $\tilde p \in \tilde P$ projecting to $\bar p \in \bar K^{ij}$, $g^{0,*}_{n,v_n}(\tilde p_{-n}) = g^{0,*}_{n,v'_n}(\tilde p_{-n})$; 
\item[(3.3)] Let $g^* \equiv g^{0,*} + g^1$. The profile $\tilde p$ is an equilibrium of  $\tilde G \bigoplus g^*(\tilde p)$ iff it projects to $\bar \s^{ij}$ for some $i,j$ and the support of $\tilde p$ spans a multisimplex of $\I \times \I^*$. Moreover, the index of an equilibrium $\tilde p$ computed from the best-reply $\BR^{g^*}$ of $\tilde G \bigoplus g^*(\tilde p)$ is equal to the index of $\bar \s^{ij}$.
\end{enumerate}

To show the existence of such a multisimplicial approximation, argue as follows.  First, property (3.1) is immediate from the definition of a multisimplicial approximation. Second, since $g^0$ satisfies property (2) of Step 2, we can take the multisimplicial approximation to satisfy (3.2). Third,  property (3.3) is now a consequence of property $(3.2)$ if the approximation is sufficiently fine.  Indeed, thanks to Property (4) in Step 2, taking the approximation $g^{0,*}$ to be sufficiently fine, we have that an equilibrium $\tilde p$ of $\tilde G \bigoplus g^*(\tilde p)$ must project to $\bar K^{ij}$ for some $i,j$. Because of (3.2) and the definition of $g^*$, it follows that the best-replies of players in $\tilde G \bigoplus g^*(\tilde p)$ and $\tilde G \bigoplus g(\tilde p)$ are identical when $\tilde p$ projects to $\bar K^{ij}$, for some $i,j$. This implies, by (4) in Step 2, that $\tilde p$ is an equilibrium of $\tilde G \bigoplus g^*(\tilde p)$ if and only if $\tilde p$ projects to $\bar \s^{ij}$ and the support of $\tilde p$ spans a multisimplex of $\I \times \I^*$.  And as there, it is easy to see that the index has the requisite sign. 

By Lemma 2.3.15 in \cite{LRS2010}, there exists a subdivision $\hat \F_n^*$ of  $\hat \F_n$ and a piecewise-linear and convex function $\g_n: |\hat \F_n| \to \Re_{+}$  that is linear precisely on the simplices of $\hat \F_n^*$ (cf.  also GW, Appendix B). Given the degrees of freedom in the choice of $\g_n$, we will also assume for each $i,j$, the following property holds for the simplex $\hat K_n^{ij} \in \hat \F^*_n$ that contains in its interior the unique point $\tilde p_{n,n}$ that projects to $\s_n^{ij}$:

\begin{enumerate}
\item [(3.4)] for  any pair of vertices $v^{ij}_n, v^{'ij}_n$ of $\hat K^{ij}_n$, $\g_n(\hat v^{ij}_{n}) = \g_n(\hat v'^{ij}_{n})$.
\end{enumerate}

We now define the strategic-form game $\hat G$. Let $\hat F_n^*$ be the set of vertices of $\hat \F_n^*$. The strategy set of player $n$ is $\hat P_n = \D(\hat F_n^*) \times \D(\hat F_{n+1}^*)$ and the payoff of a vertex $\hat v$ of $\hat P$ to player $n$ is $\hat G_n(\hat v) = \tilde G_n(\hat v)$. The game $\hat G$ is clearly equivalent to $\tilde G$. We now define the $\d$-perturbation of this game satisfying ($3$.i-iii).

Let $A_{n,n}$ be the $(|I_{n}| \times |\hat F_n^*|$)-matrix such that the $\hat v_n$-column $i_{n}$-th row is defined by the $i_n$-barycentric coordinate of the vertex $\hat v_{n}$ in the simplicial complex $\I _n$. Let $A_{n,n+1}$ be the $(|I^*_{n+1}| \times |\hat F_{n+1}^*|)$-matrix defined similarly.  Let $A'_{n,s}$ be the transpose of $A_{n,s}$, where $s=n, n+1$. Let $g^1_{n,n} \equiv \times_{i_n}g^1_{n, i_n}$ and $g^1_{n,n+1} \equiv \times_{i^*_{n+1}}g^1_{n, i^*_{n+1}}$ Define then $\hat g^{1,*}_{n,n} = A'_{n,n} g^1_{n,n}: \hat P_{-n} \to [0, \frac{\d}{4}]^{|\hat F_{n}^*|}$ and $\hat g^{1,*}_{n,n+1} = A'_{n,n+1} g^1_{n,n+1}: \hat P_{-n} \to [0, \frac{\d}{4}]^{|\hat F_{n+1}^*|}$ (where $g^{1,*}_{n,s}$ is defined first over $\tilde P$ and then, from the equivalence relation between strategies in $\tilde P$ and $\hat P$, $g^{1,*}_{n,s}$ is defined on $\hat P$).  Similarly,  let $ \hat g^{0,*}_n$ denote $A'_{n,n} g^{0,*}_n$.  For each $\a > 0$, define now the  game $\hat G^{\a}$ of $\hat G$ as follows: for each $n \in \N$, the payoff to player $n$ from a vertex $\hat v \in \hat P$ is: $\hat G_n(\hat v) + \hat g^{0,*}_{\hat v_{n,n}}(\hat v_{-n}) + \hat g^{1,*}_{n,\hat v_{n,n}}(\hat v_{-n}) - \alpha \g_{n}(\hat v_{n,n}) + \hat g^{1,*}_{n,\hat v_{n,n+1}}(\hat v_{-n}) - \alpha \g_{n+1}(\hat v_{n,n+1})$, where $\hat v_{n,n}$ is a vertex of player $n$ in $\hat F_n^*$ and $\hat v_{n,n+1}$ is a vertex of player $n$ in $\hat F_{n+1}^*$; $\hat v_{n}$ denotes the pair $(\hat v_{n,n}, \hat v_{n,n+1})$ and $\hat v_{-n} = (\hat v_{j})_{j \neq n}$.

We will now prove that $\hat G^\a$ satisfies properties (3.i) to (3.iii), if $\a$ is sufficiently small.  Take an equilibrium $\hat p$ of $\hat G^{\a}$. For $\hat p_n \in \hat P_n$, denote by $\hat p_{n,n}$ (resp. $\hat p_{n,n+1}$) the coordinates of $\hat p_n$ in $\D(\hat F_n^*)$ (resp. $\D(\hat F_{n+1}^*)$).  It follows from the construction of $\g_n$ (resp. $\g_{n+1}$) that the support of $\hat p_{n,n}$ (resp. $\hat p_{n,n+1}$) must span a simplex in $\hat \F_n^*$ (resp. $\hat \F_{n+1}^*$). Note now that each coordinate map of $g^{0,*}$ is multilinear in each face of $\hat P$ that projects to a multisimplex of $\hat \F^*$; $g^{1,*}$ is also multilinear in each such face of $\hat P$, since each of its coordinate maps is multilinear in a face of $\tilde P$ (cf. (3) of Step 2) that projects to a multisimplex of $\I^* \times \I^*$ (which $\hat \F^* \times \hat \F^*$ subdivides) . Hence the payoff to player $n$ of a vertex $\hat v_n \in \hat P_n$ against $\hat p_{-n}$ in the game $\hat G^{\a}$ is: $\hat G_n(\hat v_n, \hat p_{-n}) + g^{0,*}_{\hat v_{n,n}}(\hat p_{-n}) + g^{1,*}_{n,\hat v_{n,n}}(\hat p_{-n}) - \alpha \g_{n}(\hat p_{n,n}) + g^{1,*}_{n,\hat v_{n,n+1}}(\hat p_{-n}) - \alpha \g_{n+1}(\hat p_{n,n+1})$. We claim that for $\a>0$ sufficiently small, $\hat p$ must project to $\bar K^{ij}$. Aiming at a contradiction, let $\alpha_k \to 0$ and $(\hat p^k)_{k \in \mathbb{N}}$ a sequence of equilibria of $\hat G^{\a_k}$ such that $\hat p^k$ projects to the complement of  $\cup \, \text{int}(\bar K^{ij})$. Assume without loss of generality that the sequence converges to some $\hat p$, by passing to a convergent subsequence if necessary. Then $\hat p$ is an equilibrium of $\hat G^{0}$. Let $\tilde p \in \tilde P$ be such that $\hat p$ projects to $\tilde p$. Then $\tilde p$ projects to the complement of $\cup \, \text{int}(\bar K^{ij})$ and is an equilibrium of $\tilde G \bigoplus g^*(\tilde p)$, which contradicts  (3.3).  Thus for all small $\a$, there is no equilibrium of $\hat G^\a$ that projects to a point outside $\bar K^{ij}$ for each $i,j$.

Let now $\hat K^{ij}$ be the multsimplex of $\hat \F^* \times \hat \F^*$ whose projection to $\bar P$ is in $\bar K^{ij}$ and contains $\bar \s^{ij}$ in its interior. Taking $\a >0$ sufficiently small, as we saw above, no equilibrium of $\hat G^{\a}$ projects to the complement of $\cup \hat K^{ij}$. Now, (3.4) and (3.3) imply that the unique equilibrium $\hat p$ in $\hat K^{ij}$ must project to $\bar \s^{ij}$. This proves that $\hat G^{\a}$ satisfies (3.i) and (3.ii).

We now prove (3.iii). Fix $i, j$ and let $\hat p^{ij} \in \hat P$ project to $\bar \s^{ij}$.  There is a component $\hat C^{ij}$ of equilibria of the game $\hat G \bigoplus \hat g(\hat p^{ij})$ that includes $\hat p^{ij}$.  As the index is invariant to the addition/deletion of equivalent strategies, the index of $\hat C^{ij}$ is the sign of the equilibrium $\tilde p^{ij}$ in the game $\tilde G \bigoplus g^*(\tilde p^{ij})$, where $\tilde p^{ij}$ is the point in $\tilde P$ projecting to $\bar \s^{ij}$, and thus the index of $\hat C^{ij}$ is the sign of $c_i$.  For small $\a$, the unique equilibrium close to this component $\hat C^{ij}$ is in fact $\hat p^{ij}$. Thereore $\hat p^{ij}$ retains the index of $\hat C^{ij}$, giving us property (3.iii). 

We now proceed to the proof of the second part, claimed at the begining of Step 3, namely:  there exists an equivalent normal-form game $G^*$ to $\hat G$ (obtained by adding duplicates to $\hat G$) and a $\d$-perturbation of this game such that properties (1) and (2) in the statement of Lemma \ref{lemma1} are satisfied.

Consider a simplicial complex $\H^*_n$ whose space is  $\hat P_n$ for which there exists  a convex piecewise linear function $q_n: |\H^*_n| \to \Re_{+}$ such that: $q_n$ is linear precisely on the simplices of $\H^*_n$; and if $K^*_n$ is the simplex of $\H^*_n$ containing $\hat p^{ij}$ in its interior, then $q_n$ is constant in each vertex of this simplex. Let $H^*_n$ be the set of vertices of $\H^*_n$. Define now a normal-form game $G^*$, for which the set of pure strategies of each player $n$ is $H^*_n$. Define the payoffs for each $n \in \N, h^* \in H^* \equiv \prod_{n \in \N}H^*_n$ by $G^*_n(h^*) \equiv \hat G_n(h^*)$. The game $G^*$ is obviously equivalent to $G$, since $\hat G$ is.  Define now the following perturbation $G^{*,\xi}$ of $G^*$: for each $n \in \N, h^* \in H^*$, let $G^{*,\xi}_n(h^*) = \hat G^{\a}_n(h^*) - \xi q_n(h^*_n)$. 

Let $\xi>0$. Let the mixed strategy set of player $n$ in the game $G^*$ be denoted by $\S^*_n$. Given $\s^* \in \S^*$, the pure best-replies of player $n$ in $G^{*,\xi}$ to $\s^*_{-n}$ are the vertices of a simplex of $\H^*_n$, due to the fact that $q_n$ is convex and piecewise linear. Therefore, an equilibrium $\s^*$ of $G^{*,\xi}$ must be such that for each $n \in \N$, $\s^*_n$ has support in the vertices of a simplex of $\H^*_n$. Taking $\xi>0$ sufficiently small, it then follows that $\s^*$ is an equilibrium of $G^{*,\xi}$ if and only if $\s^*$ is isolated and projects to an equilibrium of $\hat G^{\a}$. Therefore, we have that (1) of Lemma \ref{lemma1} is satisfied. The proof of (2) follows the same logic we employed to get the index of the equilibria of $\hat G^\a$.

\end{document}